\newcommand{\ve}[1]{{\bf #1}}
\newcommand{\mode}{\mbox{mod }}
\newcommand{\vol}{\mbox{Vol }}
\newcommand{\norm}[1]{\left|\left|#1\right|\right|}
\newcommand{\prob}[1]{\text{Pr}\left(#1\right)}
\newtheorem{lemma}{Lemma}
\newtheorem{theorem}{Theorem}
\newtheorem{proposition}{Proposition}
\newtheorem{definition}{Definition}
\newtheorem{remark}{Remark}
\newtheorem{conjec}{Conjecture}
\ifCLASSOPTIONcompsoc \usepackage[caption=false,font=normalsize,labelfon t=sf,textfont=sf]{subfig} \else \usepackage[caption=false,font=footnotesize]{subfi
g} \fi
\begin{document}

\sloppy
\IEEEoverridecommandlockouts
\title{Gaussian Multiple Access via Compute-and-Forward}

\author{Jingge Zhu and Michael Gastpar\IEEEmembership{, Member, IEEE}
\thanks{This work was supported in part by the European ERC Starting Grant 259530-ComCom. This paper was presented in part in Zurich International Seminar on Communications, Feb. 2014 and in part in IEEE International Symposium on Information Theory, Jun. 2014.}
\thanks{J. Zhu and M. Gastpar are with the School of Computer and Communication Sciences, Ecole Polytechnique F{\'e}d{\'e}rale de Lausanne (EPFL), Lausanne,
Switzerland (e-mail: jingge.zhu@epfl.ch, michael.gastpar@epfl.ch).}

\thanks{Copyright (c) 2014 IEEE. Personal use of this material is permitted.  However, permission to use this material for any other purposes must be obtained from the IEEE by sending a request to pubs-permissions@ieee.org.}
}

\maketitle

\begin{abstract}
Lattice codes used under the Compute-and-Forward paradigm suggest an alternative strategy
for the standard Gaussian multiple-access channel (MAC): The receiver successively decodes integer
linear combinations of the messages until it can invert and recover all messages. In this paper,
a multiple-access technique called CFMA  (Compute-Forward Multiple Access) is proposed and analyzed. For the two-user MAC, it is shown that without time-sharing, the entire capacity
region can be attained using CFMA with a single-user decoder  as soon as the signal-to-noise ratios are above $1+\sqrt{2}$.   A partial analysis is given for more than two users. Lastly the strategy is
extended to the so-called dirty MAC where two interfering signals are known non-causally to the two transmitters in a distributed fashion. Our scheme extends the previously known results and gives new achievable rate regions. 
\end{abstract}

\section{Introduction}
Recent results on lattice codes applied to additive Gaussian networks show remarkable advantages of their linear structure. In particular the compute-and-forward scheme \cite{NazerGastpar_2011} demonstrates the idea that sometimes it is better to first  decode sums of several codewords than the codewords individually. Similar ideas have also been exploited in several communication networks and are shown to be beneficial in various perspectives, see for example  \cite{wilson_joint_2010} \cite{Nam_etal_2011} \cite{Nam_etal_2010} \cite{Zhan_etal_2010}.

The Gaussian multiple access channel is a well-understood communication system. To achieve its entire capacity region, the receiver can either use joint decoding (a multi-user decoder), or a single-user decoder combined with successive cancellation decoding and time-sharing  \cite[Ch. 15]{cover_elements_2006}.   An extension of the successive cancellation decoding called Rate-Splitting Multiple Access is developed in \cite{rimoldi_rate-splitting_1996} where only single-user decoders are used to achieve the whole capacity region without  time-sharing, but at the price that messages have to be split to create more virtual users.

In this paper we provide and analyze a novel strategy for the  Gaussian MAC using lattice codes. The proposed multiple-access scheme is called Compute-Forward Multiple Access (CFMA) as it is based on a modified compute-and-forward technique. For the $2$-user Gaussian MAC, the receiver first decodes the sum of the two transmitted codewords, and then decodes either one of the codewords, using the sum as side information.   As an example, Figure \ref{fig:illustration} gives an illustration of an achievable rate region for a \textit{symmetric} $2$-user Gaussian MAC with our proposed scheme.  When the \textit{signal-to-noise ratio} (SNR) of both users is below $1.5$, the proposed scheme cannot attain rate pairs on the dominant face of the capacity region. If the SNR exceeds $1.5$, a line segment on the capacity boundary is achievable. As SNR increases, the end points of the line segment approach the corner points, and the whole capacity region is achievable as soon as the SNR of both users is larger than $1+\sqrt{2}$. We point out that the decoder used in our scheme is a single-user decoder since it mainly performs lattice quantizations on the received signal, in contrast to joint decoding where the decoder  needs the complete information of the codebooks of the two users.  Hence this novel approach allows us to achieve rate pairs in the capacity region using only  single-user decoders (with interference cancellation), while time-sharing or rate splitting are not needed. This feature of the proposed coding scheme could be of interest for practical considerations.

\begin{figure}[htb!]
\centering
\includegraphics[width=2.5in]{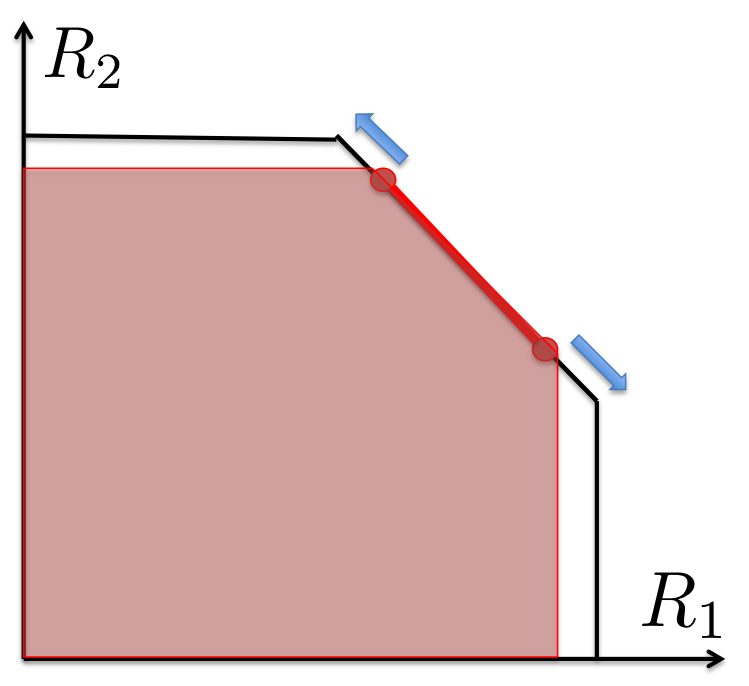}
\caption{An illustration of an achievable rate region for a symmetric $2$-user Gaussian MAC with the proposed CFMA scheme. The rate pairs in the shaded region can be achieved using a single-user decoder without time-sharing. As SNR increases, the end points of the line segment approach the corner points and the whole capacity region becomes achievable. A sufficient condition for achieving the whole capacity region is that  the SNR of both users are above $1+\sqrt{2}$ (in the symmetric case).}
\label{fig:illustration}
\end{figure}

We should point out that a related result in \cite{ordentlich_approximate_2014} shows that using a similar idea of decoding multiple integer sums, the sum capacity of the Gaussian MAC can be achieved within a constant gap. Furthermore, it is also shown in \cite{ordentlich_successive_2013}  that under certain conditions, some isolated (non-corner) points of the capacity region can be attained. To prove these results, the authors use fixed lattices which are independent of channel gains.  In this work, we close these gaps by showing that if the lattices are properly scaled in accordance with the channel gains, the full capacity region can be attained. Moreover, this paper considers exclusively the Gaussian MAC. Related results for the general discrete-memoryless MAC are given in \cite{Nazer_compute_discrete_2014} \cite{Lim_etal_2016} using a joint typicality approach.

The proposed coding scheme is also extended to the general $K$-user Gaussian MAC and achievable rate regions are derived. While a complete characterization of the achievable region is difficult to give for the general case, we raise a conjecture that the symmetric capacity\footnote{For a symmetric $K$-user Gaussian MAC where the SNR of all users equals $P$, we say that the symmetric capacity is achievable if each user has a rate $\frac{1}{2K}\log(1+KP)$.} is always achievable for the symmetric Gaussian MAC, provided the SNR exceeds a certain threshold.  While this conjecture is  established for the $2$-user case where the SNR threshold is $1.5$,  some numerical evidence is  given to support this conjecture for larger $K$. For example the numerical results show that the SNR threshold is less than $2.24$  for the $3$-user symmetric MAC and less than $3.75$ for the $4$-user symmetric MAC.

We then study the so-called ``dirty'' Gaussian MAC with two additive interference signals which are non-causally known to two encoders in a distributed manner. It was shown in \cite{philosof_lattice_2011} that lattice codes are well-suited for this problem. We devise a coding scheme within our framework to this system which extends the previous results and gives a new achievable rate region, which could be considerably larger for general interference strength.  

Lastly we should point out that although this paper only considers the  multiple access channel, the proposed coding scheme is more general and can be applied to many other Gaussian network problems.

The paper is organized as follows. Section \ref{sec:system} gives the problem statement and introduces the nested lattice codes used in our coding scheme. The important notion of computation rate tuple is also introduced. Section \ref{sec:MAC_2usr} gives a complete analysis of our coding scheme in the $2$-user Gaussian MAC. In Section \ref{sec:K_user} we extend the coding scheme to the $K$-user Gaussian MAC. A similar strategy is then applied to the Gaussian dirty MAC in Section \ref{sec:dirty} for the two user case.

Throughout this paper, vectors and matrices are denoted by  lowercase and uppercase bold letters, such as $\ve a$ and $\ve A$, respectively. The $(i,j)$-entry of a matrix $\ve A$ is denoted by $A_{ij}$. The notation $\text{diag}(x_1,\ldots,x_K)$ denotes a diagonal matrix whose diagonal entries are $x_1,\ldots,x_K$.  The determinant of a matrix $\ve A$ is denoted by $|\ve A|$. The probability of a given event $E$ is denoted by $\mathbb P(E)$.

\section{Nested lattice codes and computation rate tuples}\label{sec:system}
We  consider a $K$-user Gaussian multiple access channel. The discrete-time real Gaussian MAC has the following vector representation
\begin{IEEEeqnarray}{rCl}
\ve y=\sum_{k=1}^K h_k\ve x_k+\ve z
\label{eq:sys_K_MAC}
\end{IEEEeqnarray}
with $\ve y, \ve x_k\in\mathbb R^n$ denoting the channel output at the receiver and channel input of transmitter $k$. The white Gaussian  noise with unit variance per entry is denoted by $\ve z\in\mathbb R^n$. A fixed real number $h_k$ denotes the channel coefficient from user $k$ to the receiver and is known to transmitter $k$.  We can assume without loss of generality that every user has the same power constraints on the channel input as $\mathbb E\{\norm{\ve x_k}^2\}\leq nP$. 

In our coding scheme,  we map messages $W_k$ of user $k$ bijectively to  points in $\mathbb R^n$ denoted by $\ve t_k$, which are elements of the codebook $\mathcal C_k$ to be defined later. The \textit{rate} of the codebook $\mathcal C_k$ is defined to be
\begin{align}
r_k:=\frac{1}{n}\log|\mathcal C_k|\quad\text{ for }k=1, 2
 \label{eq:message_rate}
\end{align}
Each transmitter is equipped with an encoder $\mathcal E_k$ which maps its message (or the corresponding codeword) to the channel input as $\ve x_k=\mathcal E_k(\ve t_k)$.  At the receiver, a decoder  wishes to estimate all the messages using the channel output $\ve y$. The decoded codewords are denoted by $\hat{\ve t}_1,\ldots,\hat{\ve t}_K$ and they are mapped back to messages.  Hence we can define the message error probability as
\begin{align}
P_{e,msg}^{(n)}:= \mathbb P\left(\bigcup_{k=1}^K\{\hat{\ve t}_k\neq \ve t_k\}\right)
\label{eq:error_message}
\end{align}
where $n$ is the length of  codewords. We require the receiver  to decode all messages from $\ve y$ with an arbitrarily small error. Formally we have the following definition.
\begin{definition}[Message rate tuple]
Consider a $K$-user Gaussian MAC in (\ref{eq:sys_K_MAC}). We say \textit{a message rate tuple $(R_1,\ldots, R_K)$}  is achievable if it holds that
for any $\epsilon>0$, there exists a number $n_0$ such that for all $n\geq n_0$,   the message error probability  in (\ref{eq:error_message}) satisfies $P_{e,msg}^{(n)}<\epsilon$ whenever the rate  $r_k$  defined  in (\ref{eq:message_rate}) of user $k$ satisfies $r_k< R_k$ for $k=1,\ldots, K$.
\label{def:message_rate}
\end{definition}

The capacity region, equivalently all possible achievable message rate tuples, of a $K$-user Gaussian MAC is known, see for example \cite{ahlswede_multi-way_1973}, \cite{liao_multiple_1972}, \cite[Ch. 15]{cover_elements_2006}. In this paper we devise a novel approach to achieve the capacity region of the Gaussian MAC.

\subsection{Nested lattice codes}
In this section we describe the encoding procedure of our scheme based on nested lattice codes. We state the main facts about nested lattice codes here, and more details can be found in \cite{ErezZamir_2004} \cite{Erez_etal_2005}.

A lattice $\Lambda$ is a discrete subgroup of $\mathbb R^n$ with the property that if $\ve t_1, \ve t_2\in \Lambda$, then $\ve t_1+\ve t_2\in \Lambda$. Define the lattice quantizer $Q_{\Lambda}: \mathbb R^n\rightarrow\Lambda$ as
\begin{IEEEeqnarray*}{rCl}
Q_{\Lambda}(\ve x)=\mbox{argmin}_{\ve t\in\Lambda}\norm{\ve t-\ve x}
\end{IEEEeqnarray*}
and define the fundamental Voronoi region of the lattice to be
\begin{IEEEeqnarray*}{rCl}
\mathcal V:=\{\ve x\in\mathbb R^n:Q_{\Lambda}(\ve x)=\ve 0\}
\end{IEEEeqnarray*}
The modulo operation gives the quantization error:
\begin{IEEEeqnarray*}{rCl}
[\ve x]\mode\Lambda=\ve x-Q_{\Lambda}(\ve x)
\label{eq:mode}
\end{IEEEeqnarray*}
Two lattices $\Lambda$ and $\Lambda'$ are said to be nested if $\Lambda'\subseteq\Lambda$. 

Let $\beta_k, k=1,\ldots, K$ be $K$ nonzero real numbers and we collect them into one vector $\underline{\beta}:=(\beta_1,\ldots,\beta_K)$. In a general $K$-user Gaussian MAC, for each user we choose a lattice $\Lambda_k$ which is \textit{good for AWGN channel coding} in the sense of \cite{Erez_etal_2005}. These $K$ lattices $\Lambda_k, k=1,\ldots, K$ can be chosen to form a nested lattice chain \cite{Nam_etal_2011} according to certain order to be determined later. We let $\Lambda_c$ denote the coarsest lattice among them, i.e., $\Lambda_c\subseteq\Lambda_k$ for all $k=1,\ldots,K$. We can also construct $K$ lattices $\Lambda_k^{s}\subseteq\Lambda_c$ for all $k$ where all lattices  are \textit{simultaneously good}  in the sense of \cite{Erez_etal_2005},  and with second moment 
\begin{align*}
\frac{1}{n\vol(\mathcal V_k^s)}\int_{\mathcal V_k^{s}}\norm{\ve x}^2d\ve x=\beta_k^2 P
\end{align*}
where $\mathcal V_k^{s}$ denotes the Voronoi region of the lattice $\Lambda_k^{s}$.   The lattice $\Lambda_k^s$ is used as the \textit{shaping region} for the codebook of user $k$.

For each transmitter $k$, we construct the codebook as
\begin{IEEEeqnarray}{rCl}
\mathcal C_k=\Lambda_k\cap\mathcal V_k^s
\label{eq:codebook}
\end{IEEEeqnarray}
With this codebook the message rate of user $k$ is
\begin{IEEEeqnarray}{rCl}
r_k=\frac{1}{n}\log|\mathcal C_k|=\frac{1}{n}\log\frac{\vol(\mathcal V_k^s)}{\vol(\mathcal V_k)}
\label{eq:message_rate_lattice}
\end{IEEEeqnarray}
where $\mathcal V_k$ is the Voronoi region of the fine lattice $\Lambda_k$.

The parameters $\underline{\beta}$  are used to control the individual rates of different users. We will see later that the proper choice of these parameters depend on the channel coefficients. We also note that a similar idea appears in \cite{ntranos_asymmetric_2013} \cite{ordentlich_precoded_2013} whereas the authors do not make connections between these parameters  and  channel coefficients. 

\subsection{The computation rate tuple}

Throughout this work, we will be interested in decoding functions of codewords. One important example is the sum of the lattice codewords of the form
\begin{align}
\ve u:=\left[\sum_{k=1}^Ka_k\ve t_k\right]\mod \Lambda_f^s
\label{eq:modulo_sum}
\end{align}
where $\Lambda_f^s$ denotes the finest lattice among $\Lambda_k^s$ and $a_k$ is an integer, $k=1,\ldots, K$.  Let $\hat{\ve u}$ denote the decoded integer sum at the receiver and define the error probability of decoding a sum as
\begin{align}
P_{e,sum}^{(n)}:=\mathbb P(\hat{\ve  u}\neq \ve u)
\label{eq:error_sum}
\end{align}
where $n$ is the length of codewords. This idea is, in the first place, different from the usual decoding procedure where individual messages are decoded. To articulate the point, we give a definition of the computation rate tuple in the context of the $K$-user Gaussian MAC.  
\begin{definition}[Computation rate tuple]
Consider a $K$-user Gaussian MAC in (\ref{eq:sys_K_MAC}). We say \textit{a computation rate tuple $(R_1^{\ve a},\ldots, R_K^{\ve a})$ with respect to the sum (\ref{eq:modulo_sum})}  is achievable if it holds for any $\epsilon>0$, there exists a number $n_0$  such that for all $n\geq n_0$, the sum decoding error probability in (\ref{eq:error_sum}) satisfies $P_{e,sum}^{(n)}<\epsilon$ whenever the rate  $r_k$  defined  in (\ref{eq:message_rate}) of user $k$ satisfies $r_k< R_k^{\ve a}$ for $k=1,\ldots, K$.
\label{def:computation_rate}
\end{definition}

An achievable computation rate tuple in the Gaussian MAC is given in the following theorem, as a generalization of the result of \cite{NazerGastpar_2011}. 
\begin{theorem}[A general compute-and-forward formula]
Consider a $K$-user Gaussian MAC with channel coefficients $\ve h=(h_1,\ldots, h_K)$ and equal power constraint $P$.  Let  $\beta_1,\ldots,\beta_K$ be $K$ nonzero real numbers. The computation rate tuple $(R_1^{\ve a},\ldots, R_K^{\ve a})$ with respect to the sum (\ref{eq:modulo_sum})  is achievable with
\begin{IEEEeqnarray}{rCl}
R_k^{\ve a}= \left[\frac{1}{2}\log \left(\norm{\ve{\tilde a}}^2-\frac{P(\ve h^T\ve{\tilde a})^2}{1+P\norm{\ve h}^2}\right)^{-1}+\frac{1}{2}\log \beta_k^2\right]^+
\label{eq:compute_rate}
\end{IEEEeqnarray}
where $\ve{\tilde a}:=[\beta_1a_{1},...,\beta_Ka_{K}]$ and $a_{k}\in\mathbb Z$  for all $k\in[1:K]$.
\label{thm:computation_rate}
\end{theorem}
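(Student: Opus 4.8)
The plan is to adapt the nested-lattice proof of the compute-and-forward theorem of \cite{NazerGastpar_2011}, modifying it to carry the per-user scaling parameters $\beta_k$. First I would fix the encoding: user $k$ draws a dither $\ve d_k$ uniform over the shaping region $\mathcal V_k^s$ and transmits $\ve x_k=\frac{1}{\beta_k}\big([\ve t_k+\ve d_k]\mode\Lambda_k^s\big)$. By the Crypto Lemma the bracketed signal is uniform over $\mathcal V_k^s$, independent of $\ve t_k$, with per-dimension second moment $\beta_k^2P$; the factor $1/\beta_k$ then makes the transmit power exactly $P$, so the power constraint holds for \emph{every} choice of $\underline\beta$. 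Writing $\tilde{\ve x}_k:=[\ve t_k+\ve d_k]\mode\Lambda_k^s=\ve t_k+\ve d_k-\ve q_k$ with $\ve q_k:=Q_{\Lambda_k^s}(\ve t_k+\ve d_k)\in\Lambda_k^s$, the channel output becomes $\ve y=\sum_k\frac{h_k}{\beta_k}\tilde{\ve x}_k+\ve z$.

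Next I would specify the receiver. It scales by a single coefficient $\alpha$, subtracts the integer-weighted dither combination $\sum_k a_k\ve d_k$, and quantizes onto the fine lattice $\Lambda$. Grouping terms gives
\begin{IEEEeqnarray*}{rCl}
\alpha\ve y-\sum_k a_k\ve d_k &=& \Big(\ve u-\sum_k a_k\ve q_k\Big)+\underbrace{\sum_k\Big(\tfrac{\alpha h_k}{\beta_k}-a_k\Big)\tilde{\ve x}_k+\alpha\ve z}_{=:\,\ve n_{\mathrm{eff}}}
\end{IEEEeqnarray*}
where $\ve u=\sum_k a_k\ve t_k$. Crucially, the self-noise is expressed through the \emph{uniform} $\tilde{\ve x}_k$, not the codewords $\ve t_k$. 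Because the nesting $\Lambda_k^s\subseteq\Lambda$ is assumed, each $\ve q_k\in\Lambda$, so the first bracket lies in $\Lambda$ and differs from $\ve u$ only by the point $\sum_k a_k\ve q_k\in\sum_k\Lambda_k^s$. Hence quantizing to $\Lambda$ returns $\ve u$ up to a point of the coarse lattice $\sum_k\Lambda_k^s$, i.e.\ it recovers the sum modulo the coarse lattice, which is what Definition \ref{def:computation_rate} asks for.

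The core computation is the effective-noise variance. Since the $\tilde{\ve x}_k$ are mutually independent, zero-mean, with per-dimension second moment $\beta_k^2P$, and independent of $\ve z$,
\begin{IEEEeqnarray*}{rCl}
\tfrac{1}{n}\mathbb E\norm{\ve n_{\mathrm{eff}}}^2 &=& P\sum_k\Big(\tfrac{\alpha h_k}{\beta_k}-a_k\Big)^2\beta_k^2+\alpha^2 \;=\; P\norm{\alpha\ve h-\ve{\tilde a}}^2+\alpha^2
\end{IEEEeqnarray*}
with $\ve{\tilde a}=(\beta_1a_1,\ldots,\beta_Ka_K)$. Minimizing this quadratic over $\alpha$ (the MMSE choice $\alpha^\star=\frac{P\,\ve h^T\ve{\tilde a}}{1+P\norm{\ve h}^2}$) yields $\sigma_{\mathrm{eff}}^2=P\big(\norm{\ve{\tilde a}}^2-\frac{P(\ve h^T\ve{\tilde a})^2}{1+P\norm{\ve h}^2}\big)$. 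I would then invoke the existence of simultaneously good nested lattices \cite{ErezZamir_2004}\cite{Erez_etal_2005}: for a Poltyrev-good $\Lambda$ the probability that $\ve n_{\mathrm{eff}}$ escapes $\mathcal V$ vanishes as $n\to\infty$ once $\sigma_{\mathrm{eff}}^2<\vol(\mathcal V)^{2/n}/(2\pi e)$, while each Rogers-good $\Lambda_k^s$ obeys $\vol(\mathcal V_k^s)^{2/n}\to 2\pi e\,\beta_k^2P$. Substituting into $r_k=\frac{1}{n}\log\big(\vol(\mathcal V_k^s)/\vol(\mathcal V)\big)$ and driving the volume of $\Lambda$ to the decoding threshold gives $r_k\to\frac12\log\frac{\beta_k^2P}{\sigma_{\mathrm{eff}}^2}$, which is exactly (\ref{eq:compute_rate}) after the $[\cdot]^+$ truncation.

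The step I expect to be most delicate is twofold. First, the effective noise is genuinely non-Gaussian (the $\tilde{\ve x}_k$ are uniform over $\mathcal V_k^s$), so applying the Poltyrev-type bound requires the standard argument that, for an AWGN-good $\Lambda$, only the second moment $\sigma_{\mathrm{eff}}^2$ matters for reliable decoding, exactly as in \cite{NazerGastpar_2011}. Second, and more specific to this generalization, one must exhibit a \emph{single} fine lattice $\Lambda$ together with $K$ \emph{distinct} coarse lattices $\Lambda_k^s$, all good in the required senses, correctly nested, and with the prescribed second moments $\beta_k^2P$ — and then track these $K$ different shaping lattices cleanly through the dither removal and the modulo reduction. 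The bookkeeping of the $\ve q_k$ terms living in $\sum_k\Lambda_k^s\subseteq\Lambda$ is the one genuinely new ingredient relative to the single-coarse-lattice setting of \cite{NazerGastpar_2011}.
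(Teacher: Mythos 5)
Your encoding, receiver processing, and rate computation coincide with the paper's proof in Appendix \ref{sec:appen_proof_computation}: your dither uniform on $\mathcal V_k^s$ is exactly the paper's dither scaled by $\beta_k$, so the two transmitters are identical; the receiver uses the same single MMSE scaling $\alpha$ with dither removal; the effective-noise second moment $P\norm{\alpha\ve h-\ve{\tilde a}}^2+\alpha^2$ and its minimizer $\alpha^\star=P\,\ve h^T\ve{\tilde a}/(1+P\norm{\ve h}^2)$ match; and the appeal to simultaneously good nested lattices to convert the volume-to-noise condition into (\ref{eq:compute_rate}) is the paper's argument verbatim. Up to the last step, this is essentially the paper's proof and is correct.

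The one genuine discrepancy is your closing claim that the decoder ``recovers the sum modulo the coarse lattice, which is what Definition \ref{def:computation_rate} asks for.'' That is not what the definition asks. Definition \ref{def:computation_rate} requires $\mathbb P(\hat{\ve u}\neq\ve u)\to 0$ for the integer sum $\ve u=\sum_k a_k\ve t_k$ itself, and the remark following Theorem \ref{thm:computation_rate} explicitly stresses that, unlike standard compute-and-forward, no modulo-reduced sum is involved in the decoding goal. Note also that ``the coarse lattice'' is not a single object here: the $K$ shaping lattices $\Lambda_k^s$ are distinct (they have different second moments $\beta_k^2P$), so your residual term $\sum_k a_k\ve q_k$ lives in the subgroup $\Lambda_1^s+\cdots+\Lambda_K^s\subseteq\Lambda$, which is none of the individual shaping lattices. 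What your decoder outputs is the lattice point $\sum_k a_k\tilde{\ve t}_k$ with $\tilde{\ve t}_k:=\ve t_k-\ve q_k$ --- precisely the point the paper's decoder outputs --- but the paper then closes differently: it argues that $\sum_k a_k\ve t_k$ can be recovered because the map $\ve t_k\mapsto\tilde{\ve t}_k$ is one-to-one once the dithers are known, and, where this really matters (Theorem \ref{thm:MAC_general} together with Lemma \ref{lemma:comp_message}), it solves for the individual $\tilde{\ve t}_k$ from $K$ linearly independent decoded sums and then inverts user by user; the footnote in the proof of Theorem \ref{thm:MAC_general} addresses exactly this technicality. So you should delete the assertion that modulo recovery satisfies the definition, state that the decoded point is $\sum_k a_k\tilde{\ve t}_k$, and then argue, as the paper does, how $\sum_k a_k\ve t_k$ (or ultimately the messages) is obtained from such decoded sums; the rate expression itself is unaffected.
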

\begin{proof}
A  proof is given in Appendix \ref{sec:appen_proof_computation}.
\end{proof}

\begin{remark}
\
\begin{itemize}
\item By setting $\beta_k=1$ for all $k$ we recover the original compute-and-forward formula given in \cite{NazerGastpar_2011} Theorem 4.
\item The usefulness of the parameters $\beta_1,\ldots,\beta_K$ lies in the fact that they can be chosen according to the channel coefficients $h_k$ and power $P$. This is crucial to our coding scheme for a Gaussian MAC.
\item This formula also illustrates why it is without loss of generality to assume that all powers are equal. In the case that  each transmitter has power $P_k$,  just replace $h_k$ by $h_k':=\sqrt{P_k/P}h_k$ for all $k$ in (\ref{eq:compute_rate}).
\item It is straightforward to extend the result when there are multiple receivers, see \cite{zhu_thesis} \cite{ZhuGastpar_2014}.
\end{itemize}
\end{remark}

Before moving on, it is instructive to inspect  formula (\ref{eq:compute_rate}) in some detail. This will give some insights on why this modified scheme will be helpful for a multiple-access channel. To do this, we can rewrite (\ref{eq:compute_rate}) in the following expression
\begin{align*}
R_k^{\ve a}= &\frac{1}{2}\log\left(\beta_i(1+P\|\mathbf h\mathbf\|^2)\right) \\
&-\frac{1}{2}\log\left(\|\mathbf{\tilde a}\|^2+ P(\|\mathbf h\|^2\|\mathbf{\tilde a}\|^2-(\mathbf h^T \mathbf{\tilde a})^2)\right).
\end{align*} 
As already pointed out in \cite{NazerGastpar_2011}, the term $\|\mathbf h\|^2\|\mathbf{\tilde a}\|^2-(\mathbf h^T \mathbf{\tilde a})^2$ in the second log  has a natural interpretation -- it measures how the  coefficients $\tilde{\ve a}$ differs from the channel $\ve h$, in other words the rate loss occurred because of the mismatch between the chosen sum coefficients and  channel gains. Cauchy-Schwartz inequality implies that this term is always nonnegative and is zero if and only if $\tilde{\ve a}$ is colinear with the channel coefficients $\ve h$.  Notice that in the original compute-and-forward scheme, where $\tilde{\ve a}=\ve a$ by setting all $\beta_k$ to be $1$, this term is not necessarily zero because $\ve a$ is an integer vector while $\ve h$ can take all possible values in $\mathbb R^K$. However in this generalized scheme we are given the freedom to tune parameters $\beta_k\in\mathbb R^K$, and the rate loss due to the mismatch can be completely eliminated by choosing $\beta_k$ to align $\tilde {\ve a}$ with $\ve h$. In general, the lattice scaling coefficients $\beta_k$ allow us to adjust the codebook rate freely and is essential to our coding scheme for the Gaussian MAC discussed in the sequel.

\subsection{Message rate tuple vs. computation rate tuple}
It is important to distinguish the achievable \textit{message rate tuple}  in Definition \ref{def:message_rate}, where individual messages should be decoded, and the achievable  \textit{computation rate tuple} in Definition \ref{def:computation_rate}, where 
 only one function of messages is decoded. The superscript $\ve a$ in the notation $R_k^{\ve a}$ is used to emphasize the different decoding goals. We give an example of computation rate pairs for a $2$-user Gaussian MAC in Figure \ref{fig:CompRateRegion}. It is worth noting that the achievable computation rate region can be strictly larger than the achievable message rate region.
 
\begin{figure}[h!]
\centering
\includegraphics[scale=0.45]{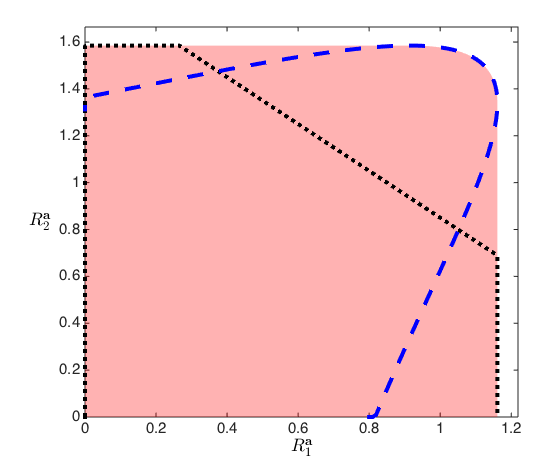}
\caption{In this figure we show an achievable computation rate region for computing the sum $[\ve t_1+\ve t_2]\mod\Lambda_f^s$ over a $2$-user Gaussian MAC where $h_1=1, h_2=\sqrt{2}$ and $P=4$. The dotted black line shows the capacity region of this MAC. The dashed blue line depicts the evaluated computation rate pairs given in (\ref{eq:compute_rate}) in Theorem \ref{thm:computation_rate} where points along this curve are obtained by choosing different $\beta_1, \beta_2$.  The shaded region shows the whole computation rate region, in which all the computation rate pairs are achievable. Notice that in this case the computation rate region contains the whole capacity region and is strictly larger than the latter. We also point out that if a computation rate pair $(R_1^{\ve a},R_2^{\ve a})$ is achievable, then any computation rate pair $(R_1', R_2')$ with $R_1'\leq R_1^{(\ve a)}, R_2'\leq R_2^{(\ve a)}$ is also achievable. This observation is used to compute the shaded region in the plot based on the curve of computation rate pairs.}
\label{fig:CompRateRegion}
\end{figure}

\section{The 2-user Gaussian MAC}\label{sec:MAC_2usr}

In this section we study the 2-user Gaussian MAC
\begin{IEEEeqnarray}{rCl}
\ve y=h_1\ve x_1+h_2\ve x_2+\ve z
\label{eq:system_clean_MAC}
\end{IEEEeqnarray}
with other specifications  given in (\ref{eq:sys_K_MAC}). We will give a complete characterization of the achievable rate region under our coding scheme.

\begin{itemize}
\item Encoding: For user $k$, given the message and the unique corresponding codeword $\ve t_k$, the channel input is generated as
\begin{align}
\ve x_k=[\ve t_k/\beta_k+\ve d_k]\mode\Lambda_k^s/\beta_k, k=1,2.
\end{align}
where $\ve d_k$ is called a \textit{dither} which is a random vector uniformly distributed in the scaled Voronoi region $\mathcal V_k^s/\beta_k$.
\item Decoding: To decode the first sum with coefficients $(a_1, a_2)$, let $\Lambda_f$ denote the finer lattice between $\Lambda_1, \Lambda_2$ if $a_1,a_2\neq 0$. Otherwise set $\Lambda_f=\Lambda_1$ if $a_2=0$, or $\Lambda_f=\Lambda_2$ if $a_1=0$. Let $\alpha_1$ be a real number to be determined later and form $\tilde {\ve y}_1:=\alpha_1\ve y-\sum_ka_{k}\beta_k\ve d_k$, the first sum with coefficient $\ve a$ is decoded by performing the lattice quantization
\begin{align*}
Q_{\Lambda_f}(\tilde {\ve y}_1).
\end{align*}
Define $\Lambda_f'$ in the similarly way for the second sum with coefficients $(b_1,b_2)$, the second sum is obtained by performing the lattice quantization
\begin{align*}
Q_{\Lambda_f'}(\tilde{\ve y}_2)
\end{align*} 
where the construction of $\tilde{\ve y}_2$ is given the proof of the following theorem. 
\end{itemize}

\begin{theorem}[Achievable message rate pairs]
Consider the $2$-user multiple access channel in (\ref{eq:system_clean_MAC}). The following message rate pair is achievable
\begin{align*}
R_k= 
\begin{cases}
r_k(\ve a, \underline{\beta}) &\quad\mbox{if }b_k=0\\
r_k(\ve b|\ve a, \underline{\beta}) &\quad\mbox{if }a_k=0\\
\min\{r_k(\ve a, \underline{\beta}), r_k(\ve b|\ve a, \underline{\beta})\}&\quad\mbox{otherwise}
\end{cases}
\end{align*}
for any linearly independent $\ve a,\ve b\in\mathbb Z^2$ and $\underline{\beta}\in\mathbb R^2$ if it holds $r_k(\ve a, \underline{\beta})\geq 0$ and $r_k(\ve b|\ve a, \underline{\beta})\geq 0$ for $k=1,2$, where we define
\begin{IEEEeqnarray}{rCl}
r_k(\ve a,\underline{\beta})&:=&\frac{1}{2}\log\frac{\beta_k^2(1+h_1^2P+h_2^2P)}{K(\ve a,\underline{\beta})}\label{eq:rate_a_clean}\\
r_k(\ve b|\ve a,\underline{\beta})&:=&\frac{1}{2}\log\frac{\beta_k^2 K(\ve a,\underline{\beta})}{\beta_1^2\beta_2^2(a_2b_1-a_1b_2)^2}\label{eq:rate_b_clean}
\end{IEEEeqnarray}
with
\begin{align}
K(\ve a,\underline{\beta}):=\sum_{k=1}^2a_k^2\beta_k^2+P(a_1\beta_1h_2-a_2\beta_2h_1)^2
\label{eq:contant_K}
\end{align}
\label{thm:MAC_general}
\end{theorem}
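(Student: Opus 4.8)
The plan is to decode the two integer sums $\mathbf{u}_a = a_1\mathbf{t}_1 + a_2\mathbf{t}_2$ and $\mathbf{u}_b = b_1\mathbf{t}_1 + b_2\mathbf{t}_2$ successively, and then to invert the $2\times 2$ integer matrix with rows $\mathbf a, \mathbf b$ (nonsingular because $\mathbf a,\mathbf b$ are linearly independent) to recover $\mathbf t_1,\mathbf t_2$, exactly as in Lemma \ref{lemma:comp_message}. First I would decode $\mathbf u_a$ directly from $\mathbf y$; then, treating the correctly decoded $\mathbf u_a$ as \emph{noiseless} side information, I would decode $\mathbf u_b$ from $\mathbf y$ and $\mathbf u_a$ jointly. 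A union bound over the two stages keeps the overall message error probability small, so it suffices to exhibit the two rate expressions and identify which users each one constrains.

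For the first stage I would invoke Theorem \ref{thm:computation_rate} with coefficient vector $\mathbf a$; the only work is to check that the general formula (\ref{eq:compute_rate}) collapses to (\ref{eq:rate_a_clean}) when $K=2$. Writing $\tilde{\mathbf a} = (\beta_1 a_1, \beta_2 a_2)$ and clearing the denominator $1+P\|\mathbf h\|^2$, a short expansion gives the identity
\[
\|\tilde{\mathbf a}\|^2 - \frac{P(\mathbf h^T\tilde{\mathbf a})^2}{1+P\|\mathbf h\|^2} = \frac{K(\mathbf a,\underline\beta)}{1+P\|\mathbf h\|^2},
\]
with $K(\mathbf a,\underline\beta)$ as in (\ref{eq:contant_K}); substituting this into (\ref{eq:compute_rate}) yields $r_k(\mathbf a,\underline\beta)$. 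Equivalently, $r_k(\mathbf a,\underline\beta) = \tfrac{1}{2}\log\frac{\beta_k^2 P}{N_a}$, where $N_a = P\,K(\mathbf a,\underline\beta)/(1+P\|\mathbf h\|^2)$ is the minimum over the scaling $\alpha$ of the effective-noise variance $P\|\alpha\mathbf h - \tilde{\mathbf a}\|^2 + \alpha^2$ of the standard combiner $\alpha\mathbf y$. I would keep this MMSE viewpoint because it is what generalizes to the second stage.

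For the second stage the extra ingredient is the known sum $\mathbf u_a$. Since user $k$ transmits a scaled codeword with power $P$ (its shaping lattice $\Lambda_k^s$ has second moment $\beta_k^2 P$), its contribution to $\mathbf y$ is $(h_k/\beta_k)\mathbf t_k$. The receiver forms $\alpha\mathbf y + \gamma\mathbf u_a$ and quantizes it toward $\mathbf u_b$; because $\mathbf u_a$ is known exactly it may be added with an \emph{arbitrary real} coefficient $\gamma$. The effective noise relative to $\mathbf u_b = \sum_k b_k\mathbf t_k$ is $\sum_k(\alpha h_k/\beta_k + \gamma a_k - b_k)\mathbf t_k + \alpha\mathbf z$, with per-dimension variance
\[
N_b(\alpha,\gamma) = P\,\|\alpha\mathbf h + \gamma\tilde{\mathbf a} - \tilde{\mathbf b}\|^2 + \alpha^2 , \qquad \tilde{\mathbf b} = (\beta_1 b_1,\beta_2 b_2).
\]
I would minimize over both parameters: for fixed $\alpha$ the optimal $\gamma$ makes the residual $\alpha\mathbf h + \gamma\tilde{\mathbf a} - \tilde{\mathbf b}$ orthogonal to $\tilde{\mathbf a}$, reducing $N_b$ to a scalar quadratic in $\alpha$ whose minimization leaves
\[
\min_{\alpha,\gamma} N_b(\alpha,\gamma) = \frac{P\,\beta_1^2\beta_2^2\,(a_2 b_1 - a_1 b_2)^2}{K(\mathbf a,\underline\beta)} .
\]
Rerunning the lattice argument behind Theorem \ref{thm:computation_rate} with the two-parameter combiner $\alpha\mathbf y + \gamma\mathbf u_a$ in place of $\alpha\mathbf y$ then shows $\mathbf u_b$ is decodable whenever $r_k < \tfrac{1}{2}\log\frac{\beta_k^2 P}{\min_{\alpha,\gamma} N_b}$ for every $k$ with $b_k\neq 0$, which is precisely $r_k(\mathbf b|\mathbf a,\underline\beta)$ of (\ref{eq:rate_b_clean}).

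Finally I would assemble the message rates, with the key point that user $k$ is constrained only by the sums in which its codeword actually appears. Decoding $\mathbf u_a$ imposes $r_k < r_k(\mathbf a,\underline\beta)$ only when $a_k\neq 0$: if $a_k=0$, then $\mathbf t_k$ enters $\mathbf u_a$-decoding merely as interference of fixed power $\beta_k^2 P$ (decoupled from $r_k$, since power is set by $\Lambda_k^s$ and rate by the fine lattice $\Lambda$), placing no constraint on $r_k$. Likewise decoding $\mathbf u_b$ imposes $r_k < r_k(\mathbf b|\mathbf a,\underline\beta)$ only when $b_k\neq 0$. Intersecting the active constraints for each user gives the three cases of the statement, the $\min$ appearing exactly when both $a_k$ and $b_k$ are nonzero. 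The main obstacle I anticipate is the second stage: one must verify that folding the exactly-known $\mathbf u_a$ into the combiner with a free real coefficient $\gamma$ leaves the dithering and lattice-goodness arguments of Theorem \ref{thm:computation_rate} intact, and that the two-parameter minimization really collapses to the compact expression above. The accompanying bookkeeping of which constraints are active is the subtle point where a literal application of Lemma \ref{lemma:comp_message} (taking the $\min$ unconditionally) would wrongly shrink the region.
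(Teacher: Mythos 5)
Your proposal is correct and follows essentially the same route as the paper's proof: decode the first sum via the compute-and-forward formula (your algebraic collapse of (\ref{eq:compute_rate}) to (\ref{eq:rate_a_clean}) is exactly right), then decode the second sum with the known first sum folded into the combiner with a free real coefficient (your $\gamma$ is the paper's $-\lambda$, and your $N_b$ and its minimum $P\beta_1^2\beta_2^2(a_2b_1-a_1b_2)^2/K(\ve a,\underline{\beta})$ coincide with the paper's $N_2$ and (\ref{eq:rate_b_clean})), with the same per-user bookkeeping of active constraints when $a_k=0$ or $b_k=0$. The one obstacle you flag—keeping the dithering argument intact—is resolved in the paper exactly as you anticipate: since the dithers are known, the receiver reconstructs $\sum_k a_k\beta_k\ve x_k=\sum_k a_k\tilde{\ve t}_k+\sum_k a_k\beta_k\ve d_k$ from the decoded (dithered) sum and adds that, so the residual noise remains a linear combination of the $\ve x_k$ (independent of the codeword sums) plus Gaussian noise.
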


\begin{proof}
Recall that the transmitted signal for user $k$ is given by
\begin{IEEEeqnarray}{rCl}
\ve x_k=[\ve t_k/\beta_k+\ve d_k]\mode\Lambda_k^s/\beta_k
\end{IEEEeqnarray}
 As pointed out in \cite{ErezZamir_2004}, $\ve x_k$ is independent of $\ve t_k$ and  uniformly distributed in $\Lambda_k^s/\beta_k$ hence  has average power $P_k$ for $k=1,2$.

Given two integers $a_1, a_2$ and some real number $\alpha_1$,  we can form
\begin{IEEEeqnarray*}{rCl}
\tilde {\ve y}_1&:=&\alpha_1\ve y-\sum_ka_{k}\beta_k\ve d_k\\
&=&\underbrace{\sum_k(\alpha_1 h_{k}-a_{k}\beta_k)\ve x_k+\alpha_1\ve z_1}_{\tilde{\ve z}_1}+\sum_k a_{k}\beta_k\ve x_k-\sum_ka_{k}\beta_k\ve d_k\\
&\stackrel{(a)}{=}&\tilde{\ve  z}_1+\sum_k a_{k}\left(\beta_k (\ve t_k/\beta_k+\ve d_k)-\beta_k Q_{\Lambda_k^s/\beta_k}(\ve t_k/\beta_k+\ve d_k)\right)\\
&&-\sum_ka_{k}\beta_k\ve d_k\\
&\stackrel{(b)}{=}&\tilde {\ve z}_1+\sum_k a_{k} (\ve t_k-Q_{\Lambda_k^s}(\ve t_k+\beta_k\ve d_k))\\
&=&\tilde {\ve z}_1+\sum_k a_{k} \tilde{\ve t}_k \label{eq:y(1)}
\end{IEEEeqnarray*}
with the notation
\begin{IEEEeqnarray}{rCl}
\tilde{\ve z}_1&:=&\sum_k(\alpha_1h_k-\beta_ka_k)\ve x_k+\alpha_1\ve z\\
\tilde{\ve t}_k&:=&\ve t_k-Q_{\Lambda_k^s}(\ve t_k+\beta_k\ve d_k)
\label{eq:t_tilde}
\end{IEEEeqnarray}
Step (a) follows from the definition of $\ve x_k$ and step (b) uses the identity $Q_{\Lambda}(\beta\ve x)=\beta Q_{\Lambda/\beta}(\ve x)$ for any real number $\beta\neq 0$ and any lattice $\Lambda$. Note that $\tilde{\ve t}_k$ lies in $\Lambda_f$ due to the nested construction $\Lambda_k^s\subseteq\Lambda_f$. The term $\tilde{\ve z}_1$ acts as an equivalent noise independent of $\sum_k a_k\tilde{\ve t}_k$ (thanks to the dithers) and has an average variance per dimension
\begin{IEEEeqnarray}{rCl}
N_1(\alpha_1)=\sum_k(\alpha_1h_1-\beta_ka_k)^2P+\alpha_1^2.
\end{IEEEeqnarray}
The decoder obtains the sum $\sum_ka_k\tilde{\ve t}_k$ from $\tilde{\ve y}_1$ using \textit{lattice decoding}: it quantizes $\tilde{\ve y}_1$ to its closest lattice point in $\Lambda_f$. Using the same  argument in the proof of Theorem \ref{thm:computation_rate} we can show this decoding process is successful if the rate of the transmitter $k$ satisfies
\begin{IEEEeqnarray}{rCl}
r_k&< &r_k(\ve a, \underline{\beta})=\max_{\alpha_1}\frac{1}{2}\log^+\frac{\beta_k^2P}{N_1(\alpha_1)}
\label{eq:rate_a_clean_proof}
\end{IEEEeqnarray}
Optimizing over $\alpha_1$ we obtain the claimed expression in (\ref{eq:rate_a_clean}). In other words we have the computation rate pair $(R_1^{\ve a}:=r_1(\ve a,\underline{\beta}), R_2^{\ve a}:=r_2(\ve a,\underline{\beta}))$. \footnote{Strictly speaking, the computation rate pair is defined under the condition that the sum $[\sum_ka_k\ve t_k]\mod\Lambda_f^s$ can be decoded in Definition \ref{def:computation_rate}.  Here we actually decode the sum  $\sum_ka_k\tilde{\ve t}_k$.  However this will not affect the achievable message rate pair, because we can also  recover the two codewords $\ve t_1$ and $\ve t_2$ using the two sums   $\sum_ka_k\tilde{\ve t}_k$ and  $\sum_kb_k\tilde{\ve t}_k$, as shown in the proof. } We remark that the expression  (\ref{eq:rate_a_clean}) is exactly the general compute-and-forward formula given in Theorem \ref{thm:computation_rate} for $K=2$.

To decode a second integer sum with coefficients $\ve b$ we use the idea of successive cancellation \cite{NazerGastpar_2011}\cite{Nazer_2012}. If $r_k(\ve a,\underline{\beta})>0$ for $k=1,2$, i.e., the sum $\sum_ka_k\tilde{\ve t}_k$ can be decoded, we can reconstruct the term $\sum_ka_k\beta_k\ve x_k$ as $\sum_ka_k\beta_k\ve x_k=\sum_ka_k\tilde{\ve t}_k+\sum_k a_k\beta_k\ve d_k$. Similar to the derivation of (\ref{eq:y(1)}), we can use $\sum_ka_k\beta_k\ve x_k$  to form
\begin{IEEEeqnarray}{rCl}
\tilde{\ve y}_2&:=&\alpha_2\ve y+\lambda(\sum_ka_k\beta_k\ve x_k)-\sum_kb_k\beta_k\ve d_k\\
&=&\sum_k(\alpha_2h_k-(b_k+\lambda a_k)\beta_k)\ve x_k+\alpha_2\ve z+\sum_kb_k\tilde{\ve t}_k\\
&=&\tilde{\ve z}_2+\sum_kb_k\tilde{\ve t}_k
\end{IEEEeqnarray}
where the equivalent noise 
\begin{align}
\tilde{\ve z}_2:=\sum_k(\alpha_2h_k-(b_k+\lambda a_k)\beta_k)\ve x_k+\alpha_2\ve z
\end{align}
has average power per dimension
\begin{align}
N_2(\alpha_2,\lambda)=\sum_k(\alpha_2h_k-(b_k+\lambda a_k)\beta_k)^2P+\alpha_2^2.
\end{align}
Under lattice decoding with respect to $\Lambda_f'$, the term $\sum_k b_k\tilde{\ve t}_k$ can be decoded if for $k=1, 2$ we have
\begin{IEEEeqnarray}{rCl}
r_k&< &r_k(\ve b|\ve a,\underline{\beta})=\max_{\alpha_2,\lambda}\frac{1}{2}\log^+\frac{\beta_k^2P}{N_2(\alpha_2,\lambda)}
\label{eq:rate_b_clean_proof}
\end{IEEEeqnarray}
Optimizing over $\alpha_2$ and $\lambda$ gives the claimed expression in (\ref{eq:rate_b_clean}).  In other words we have the computation rate pair $(R_1^{\ve b}:=r_1(\ve b|\ve a,\underline{\beta}), R_2^{\ve b}:=r_2(\ve b|\ve a,\underline{\beta}))$.

A simple yet important observation is that if $\ve a, \ve b$ are two linearly independent vectors, then $\tilde{\ve t}_1$ and $\tilde{\ve t}_2$ can be solved using the two decoded sums, and consequently two codewords $\ve t_1,\ve t_2$ are found as $\ve t_k=[\tilde{\ve t}_k]\mod \Lambda_k^s$. This means that if two vectors $\ve a $ and $\ve b$ are linearly independent, the message rate pair $(R_1,R_2)$ is achievable with
\begin{align}
R_k=\min\{r_k(\ve a, \underline{\beta}), r_k(\ve b|\ve a, \underline{\beta})\}
\label{eq:message_rate_proof}
\end{align}
Another important observation is that when we decode a sum $\sum_ka_k\tilde{\ve t}_k$ with the coefficient $a_i=0$,  the lattice point $\tilde{\ve t}_i$ does not participate in the sum $\sum_ka_k\tilde{\ve t}_k$ hence the rate $R_i$ will not be constrained by this decoding procedure as in (\ref{eq:rate_a_clean_proof}). For example if we decode $a_1\tilde{\ve t}_1+a_2\tilde{\ve t}_2$ with $a_1=0$, the computation rate pair is actually $(R_1^{\ve a}=\infty, R_2^{\ve a}=r_1(\ve a,\underline{\beta}))$, since the rate of user $1$ in this case can be arbitrarily large. The same argument holds for the case  $b_k=0$. Combining (\ref{eq:message_rate_proof}) and the special cases when $a_k$ or $b_k$ equals zero, we have the claimed result.
\end{proof}

Now we  state the main theorem in this section showing it is possible to use the above scheme to achieve  non-trivial rate pairs satisfying $R_1+R_2=C_{sum}:=\frac{1}{2}\log(1+h_1^2P+h_2^2P)$.  Furthermore, we show that the whole capacity region is achievable under certain conditions on $h_1, h_2$ and $P$.

\begin{theorem}[Capacity achieving with CFMA]
We consider the two-user Gaussian MAC in (\ref{eq:system_clean_MAC}) where two  sums with coefficients $\ve a$ and $\ve b$ are decoded. We assume that $a_k\neq 0$ for $k=1,2$ and define 
\begin{align}
A:=\frac{h_1h_2P}{\sqrt{1+h_1^2P+h_2^2P}}.
\end{align}

\textbf{Case I):} If it holds that 
\begin{align}
A< 3/4,
\label{eq:AchieCap_no}
\end{align}
the sum capacity cannot be achieved by the proposed coding scheme.

\textbf{Case II):} If it holds that
\begin{align}
A\geq  3/4,
\label{eq:AchieCap_part}
\end{align}
the sum rate capacity can be achieved by decoding two integer sums using $\ve a=(1,1), \ve b=(0,1)$ with message  rate pairs 
\begin{align}
&R_1=r_1(\ve a, \beta_2), R_2=r_2(\ve b|\ve a,\beta_2) \label{eq:rate_b01}\\
&\mbox{ with some } \beta_2\in[\beta_2',\beta_2''] \nonumber 
\end{align}
or using $\ve a=(1,1), \ve b=(1,0)$ with  message rate pairs
\begin{align}
&R_1=r_1(\ve b|\ve a, \beta_2), R_2=r_2(\ve a,\beta_2) \label{eq:rate_b10}\\
 &\mbox{ with some }  \beta_2\in[\beta_2',\beta_2'']\nonumber
\end{align}
where $\beta_2',\beta_2''$ are two real roots of the quadratic equation 
\begin{align}
f(\beta_2):=K(\ve a,\beta_2)-\beta_2\sqrt{1+h_1^2P+h_1^2P}=0.
\label{eq:mac_achive_inequa}
\end{align}
The expressions $r_k(\ve a, \beta_2)$, $r_k(\ve b|\ve a, \beta_2)$and $K(\ve a,\beta_2)$ are given in (\ref{eq:rate_a_clean}),  (\ref{eq:rate_b_clean}) and (\ref{eq:contant_K}) by setting $\beta_1=1$, respectively.

\textbf{Case III:} If it holds that
\begin{align}
A\geq  1,
\label{eq:AchieCap_whole}
\end{align}
by choosing $\ve a=(1,1)$ and $\ve b=(0,1)$ or $\ve b=(1,0)$, the  achievable rate pairs in (\ref{eq:rate_b01}) and (\ref{eq:rate_b10}) cover the whole dominant face of the capacity region.

\label{thm:clean_MAC}
\end{theorem}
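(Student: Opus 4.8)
The plan is to collapse all three cases to the behaviour of a single quadratic in $\beta_2$. Throughout write $S:=1+h_1^2P+h_2^2P$, so that $C_{sum}=\frac12\log S$ and $A=h_1h_2P/\sqrt S$. The first step is a telescoping identity obtained by multiplying the arguments of the logarithms in (\ref{eq:rate_a_clean}) and (\ref{eq:rate_b_clean}): the common factor $K(\ve a,\underline\beta)$ and the $\beta_k^2$ cancel, leaving
\begin{align*}
r_1(\ve a,\underline\beta)+r_2(\ve b|\ve a,\underline\beta)=\frac12\log\frac{S}{(a_1b_2-a_2b_1)^2},
\end{align*}
with the same value for $r_2(\ve a,\underline\beta)+r_1(\ve b|\ve a,\underline\beta)$. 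Since Theorem \ref{thm:MAC_general} gives $R_1\le r_1(\ve a,\underline\beta)$ and $R_2\le r_2(\ve b|\ve a,\underline\beta)$, reaching $R_1+R_2=C_{sum}$ forces $|a_1b_2-a_2b_1|=1$ and requires the minima in Theorem \ref{thm:MAC_general} to select exactly these cross terms. Thus the whole problem reduces to deciding when that selection is feasible.

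For Case II I would specialize to $\ve a=(1,1)$, $\ve b=(0,1)$, $\beta_1=1$. Because $b_1=0$, user~$1$ is unconstrained by the second decoding, so $R_1=r_1(\ve a,\beta_2)$ and $R_2=\min\{r_2(\ve a,\beta_2),r_2(\ve b|\ve a,\beta_2)\}$; the sum equals $C_{sum}$ exactly when $r_2(\ve b|\ve a,\beta_2)\le r_2(\ve a,\beta_2)$. Substituting (\ref{eq:rate_a_clean})--(\ref{eq:rate_b_clean}), this rearranges to $K(\ve a,\beta_2)\le\beta_2\sqrt S$, i.e.\ $f(\beta_2)\le0$ for the $f$ in (\ref{eq:mac_achive_inequa}). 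Expanding (\ref{eq:contant_K}) shows $f$ is the upward parabola
\begin{align*}
f(\beta_2)=(1+Ph_1^2)\beta_2^2-(2Ph_1h_2+\sqrt S)\beta_2+(1+Ph_2^2),
\end{align*}
whose discriminant simplifies to $S(4A-3)$. Hence $f$ has two positive real roots $\beta_2'\le\beta_2''$ iff $A\ge3/4$, and for every $\beta_2\in[\beta_2',\beta_2'']$ the pair (\ref{eq:rate_b01}) lies on the dominant face. I would then verify the hypotheses of Theorem \ref{thm:MAC_general} on this interval: $R_2=\tfrac12\log K(\ve a,\beta_2)\ge0$ since $K\ge\min_{\beta_2}K=S/(1+Ph_1^2)>1$, and $R_1\ge0$ since $f(\sqrt S)\ge0$ forces $\beta_2''\le\sqrt S$, whence $K\le\beta_2\sqrt S\le S$. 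The mirror choice $\ve b=(1,0)$ yields (\ref{eq:rate_b10}) by exchanging the two users.

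Case I is the necessity statement, and here I would argue over all admissible choices with $a_1,a_2\neq0$. Non-unimodular $\ve b$ is excluded at once: then $(a_1b_2-a_2b_1)^2\ge4$, and the identity above caps $R_1+R_2$ at $\tfrac12\log(S/4)<C_{sum}$. For unimodular $\ve a,\ve b$ the substitution $\gamma_k:=a_k\beta_k$ turns $K$ into $\gamma_1^2+\gamma_2^2+P(\gamma_1h_2-\gamma_2h_1)^2$, absorbing the integer magnitudes, and the ordering condition that lets the minima pick the cross terms becomes the requirement that $\rho:=\gamma_2/\gamma_1$ solve $(1+Ph_1^2)\rho^2-(2Ph_1h_2+\sqrt S)\rho+(1+Ph_2^2)=0$ --- the very quadratic $f$, with discriminant $S(4A-3)$. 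A real root exists iff $A\ge3/4$, so when $A<3/4$ no choice attains the sum capacity.

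Finally, for Case III I would show that each sub-segment reaches one corner of the dominant face exactly when $A\ge1$. Minimizing $K(\ve a,\beta_2)$ gives $\min_{\beta_2}K=S/(1+Ph_1^2)$ at $\beta_2^\star=Ph_1h_2/(1+Ph_1^2)$, where $R_1=r_1(\ve a,\beta_2^\star)=\tfrac12\log(1+h_1^2P)$, the $R_1$-coordinate of one corner; this minimizer lies in $[\beta_2',\beta_2'']$ iff $f(\beta_2^\star)=S(1-A)/(1+Ph_1^2)\le0$, i.e.\ iff $A\ge1$. The symmetric computation for $\ve b=(1,0)$ reaches the opposite corner $R_2=\tfrac12\log(1+h_2^2P)$ under the same threshold. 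To conclude full coverage I would project both sub-segments onto the $R_1$-axis: evaluating the rates at the endpoints (where $K=\beta_2\sqrt S$) shows the $(0,1)$ segment covers $R_1\in[\tfrac12\log(\sqrt S/\beta_2''),\,\tfrac12\log(1+h_1^2P)]$ and the $(1,0)$ segment covers $R_1\in[C_{sum}-\tfrac12\log(1+h_2^2P),\,\tfrac12\log(\sqrt S/\beta_2')]$, so that $\beta_2'<\beta_2''$ makes the two intervals overlap and their union is the entire face. The step I expect to be the main obstacle is exactly this gap-free tiling: reaching the two corners is not enough, and one must track the endpoint pairs at $\beta_2'$ and $\beta_2''$ and check that their $R_1$-projections interlock rather than leaving a hole in the middle; the secondary delicate point is the universal quantifier in Case I, namely reducing every admissible integer pair $(\ve a,\ve b)$ to the same scalar quadratic through $\gamma_k=a_k\beta_k$ so that one discriminant decides all choices simultaneously.
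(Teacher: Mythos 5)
Your proposal is correct, and its skeleton matches the paper's: both reduce everything to the sign of the quadratic $f(\beta_2)=K(\ve a,\beta_2)-\beta_2\sqrt{1+h_1^2P+h_2^2P}$, use the cross-sum identity $r_1(\ve a)+r_2(\ve b|\ve a)=r_2(\ve a)+r_1(\ve b|\ve a)=C_{sum}-\log|a_2b_1-a_1b_2|$, get Case II from the discriminant, and get Case III from the two corner-point choices of $\beta_2$. Three execution differences are worth recording. First, for Case I the paper derives, for each integer pair $(a_1,a_2)$, the discriminant condition $A<\frac{4a_1^2a_2^2-1}{4a_1a_2}$ and then minimizes the right-hand side over integers; your substitution $\gamma_k=a_k\beta_k$ plus homogeneity collapses all coefficient choices into the single quadratic at once, which is cleaner and silently handles sign issues, though you should state the ordering condition as $f(\rho)\le 0$ having a solution (equivalent, for an upward parabola, to existence of a real root, which is what you invoke). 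Second, the paper proves $\beta_2^{(1)},\beta_2^{(2)}\in[\beta_2',\beta_2'']\iff A\ge 1$ in its Appendix B by manipulating the explicit root formulas; your one-line evaluation $f(\beta_2^{(1)})=\bar S(1-A)/(1+Ph_1^2)$ (and its mirror at $\beta_2^{(2)}$) is a tidier route to the same threshold. Third, and most substantively, your endpoint-interlocking argument for Case III is not only correct but repairs a slip in the paper's own prose: the paper pairs $\ve b=(0,1)$ with the sub-interval $[\beta_2',\beta_2^{(1)}]$ and $\ve b=(1,0)$ with $[\beta_2^{(2)},\beta_2'']$ and asserts $r_1(\ve a,\beta_2')\le r_1(\ve b|\ve a,\beta_2'')$, but at the roots one has $K=\beta_2\sqrt{\bar S}$, so $r_1(\ve a,\beta_2')=\frac12\log(\sqrt{\bar S}/\beta_2')\ge\frac12\log(\sqrt{\bar S}/\beta_2'')=r_1(\ve b|\ve a,\beta_2'')$, the reverse inequality; with the paper's sub-intervals the two projected segments would leave a hole. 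Your projections (the $(0,1)$ family reaching down to $\frac12\log(\sqrt{\bar S}/\beta_2'')$ via $\beta_2''$, the $(1,0)$ family reaching up to $\frac12\log(\sqrt{\bar S}/\beta_2')$ via $\beta_2'$) give the correct gap-free tiling, consistent with the theorem statement's use of the full interval $[\beta_2',\beta_2'']$ for both coefficient choices.
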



\begin{remark}
Figure \ref{fig:partition} shows the achievability of our scheme for different values of received signal-to-noise ratio $h_k^2P$. In Region III (a sufficient condition is  $h_k^2P\geq 1+\sqrt{2}$ for $k=1, 2$), we can achieve any rate pair in the capacity region. In Region I and II the proposed scheme is not able to achieve the entire region. However, we should point out that if we choose the coefficients to be  $\ve a=(1,0), \ve b=(0,1)$ or $\ve a=(0,1), \ve b=(1,0)$, the CFMA scheme reduces to the conventional successive cancellation decoding, and is \textit{always} able to achieve the corner point of the capacity region, \textit{irrespective} of the values of $h_1, h_2$ and $P$.
\end{remark}

\begin{proof}
It is easy to see from the rate expressions (\ref{eq:rate_a_clean}) and (\ref{eq:rate_b_clean}) that we can without loss of generality assume $\beta_1=1$ in the following derivations.  We do not consider the case when $a_k=0$ for $k=1$ or $k=2$, which is just the classical successive cancellation decoding.  Also notice that it holds:
\begin{align}
r_1(\ve a,\beta_2)+r_2(\ve b|\ve a,\beta_2)&=r_2(\ve a,\beta_2)+r_1(\ve b|\ve a,\beta_2)\nonumber\\
&=\frac{1}{2}\log\frac{1+(h_1^2+h_2^2)P}{(a_2b_1-a_1b_2)^2} \nonumber \\
&=C_{sum}-\log|a_2b_1-a_1b_2|
\label{eq:sum_of_rates}
\end{align}
We start with \textbf{Case I)} when the sum capacity cannot be achieved. This happens when 
\begin{align*}
r_k(\ve a,\beta_2)<r_k(\ve b|\ve a,\beta_2), k=1,2
\end{align*}
for  \textit{any} choice of $\beta_2$, which is equivalent to
\begin{align*}
f(\beta_2)>0
\end{align*}
where $f(\beta_2)$ is given in (\ref{eq:mac_achive_inequa}). To see this, notice that Theorem \ref{thm:MAC_general} implies that in this case the sum message rate is
\begin{align*}
R_1+R_2=r_1(\ve a,\beta_2)+r_2(\ve a,\beta_2)
\end{align*}
for $a_k\neq 0$. Due to Eqn. (\ref{eq:sum_of_rates}) we can upper bound the sum message rate by
\begin{align*}
R_1+R_2&<r_1(\ve a)+r_2(\ve b|\ve a,\beta_2)\leq C_{sum}\\
R_1+R_2&<r_2(\ve a)+r_1(\ve b|\ve a,\beta_2)\leq C_{sum},
\end{align*}
meaning the sum capacity is not achievable. It remains to  characterize the condition under which the inequality $f(\beta_2)>0$ holds. It is easy to see the expression $f(\beta_2)$ is a quadratic function of $\beta_2$ with the leading coefficient $a_2^2(1+h_1^2P)$. Hence $f(\beta_2)>0$ always holds if the equation $f(\beta_2)=0$ does not have any real root. The solutions of $f(\beta_2)=0$ are given by
\begin{subequations}
\begin{align}
\beta_2':=\frac{2a_1a_2h_1h_2P+S-\sqrt{SD}}{2(a_2^2+a_2^2h_1^2P)}\\
\beta_2'':=\frac{2a_1a_2h_1h_2P+S+\sqrt{SD}}{2(a_2^2+a_2^2h_1^2P)}
\end{align}
\label{eq:roots}
\end{subequations}
with 
\begin{align*}
S&:=\sqrt{1+(h_1^2+h_2^2)P}\\
D&:=S(1-4a_1^2a_2^2)+4Pa_1a_2h_1h_2
\end{align*}
Inequality $f(\beta_2)>0$ holds for all real $\beta_2$ if $D<0$ or equivalently
\begin{align}
\frac{h_1h_2P}{\sqrt{1+(h_1^2+h_2^2)P}}<\frac{4a_1^2a_2^2-1}{4a_1a_2}
\label{eq:case1_general}
\end{align}
The R.H.S. of the above inequality is minimized by choosing $a_1=a_2=1$ which yields the condition (\ref{eq:AchieCap_no}). This is shown in Figure \ref{fig:clean_caseI}: in this case the computation rate pair of the first sum $\tilde{\ve t}_1+\tilde{\ve t}_2$ is too small and it cannot reach the sum capacity.

In \textbf{Case II)} we require $r_k(\ve a,\beta_2)\geq r_k(\ve b|\ve a,\beta_2)$ or equivalently $f(\beta_2)\leq 0$ for some $\beta_2$.  By the derivation above, this is possible if $D\geq 0$ or equivalently
\begin{align}
\frac{h_1h_2P}{\sqrt{1+(h_1^2+h_2^2)P}}\geq \frac{4a_1^2a_2^2-1}{4a_1a_2}
\label{eq:cap_achiev_general_a}
\end{align}
If we choose the coefficients to be $\ve a=(a_1,a_2)$ and $\ve b=(0, b_2)$ for some nonzero integers $a_1,a_2,b_2$, Theorem \ref{thm:MAC_general} implies the sum rate is
\begin{align*}
R_1+R_2&=r_1(\ve a,\beta_2)+r_2(\ve b|\ve a,\beta_2)\\
&=C_{sum}-\log|a_2b_1-a_1b_2|
\end{align*}
If the coefficients satisfy $|a_2b_1-a_1b_2|=1$, the sum capacity is achievable by choosing $\beta_2\in[\beta_2',\beta_2'']$, with which the inequality (\ref{eq:cap_achiev_general_a}) holds. Notice that if we choose $\beta_2\notin [\beta_2',\beta_2'']$, then $r_k(\ve a,\beta_2)< r_k(\ve b|\ve a,\beta_2)$ and we are back to Case I). The condition $|a_2b_1-a_1b_2|=1$ is satisfied if the coefficients are chosen to be $\ve a=(1,1), \ve b=(0,1)$. For simplicity we collect these two vectors and denote them as $\ve A_1:=(\ve a^T,  \ve b^T)^T$.

The same result holds if the coefficients are of the form $\ve a=(a_1,a_2), \ve b=(b_1, 0)$ and in particular $\ve a=(1,1), \ve b=(1,0)$. Similarly we denote these two vectors using $\ve A_2:=(\ve a^T,  \ve b^T)^T$. We will let the coefficients be $\ve A_1$ or $\ve A_2$ for now and comment on other choices of coefficients later. With this choice of $\ve a$ the inequality (\ref{eq:cap_achiev_general_a}) is just the condition (\ref{eq:AchieCap_part}).

In general, not the whole dominant face of the capacity region can be achieved by varying $\beta_2\in[\beta_2',\beta_2'']$.  One important choice of $\beta_2$ is $\beta_2^{(1)}:=\frac{h_1h_2P}{1+h_1^2P}$. With this choice of $\beta_2$ and coefficients $\ve A_1$ we have
\begin{align}
R_1&=r_1(\ve a, \beta_2^{(1)})=\frac{1}{2}\log(1+h_1^2P)\\
R_2&=r_2(\ve b|\ve a,\beta_2^{(1)})=\frac{1}{2}\log(1+\frac{h_2^2P}{1+h_1^2P})
\end{align}
which is one corner point of the capacity region. Similarly with $\beta_2^{(2)}:=\frac{1+h_2^2P}{h_1h_2P}$ and coefficients $\ve A_2$ we have
\begin{align}
R_2&=r_2(\ve a, \beta_2^{(2)})=\frac{1}{2}\log(1+h_2^2P)\\
R_1&=r_1(\ve b|\ve a,\beta_2^{(2)})=\frac{1}{2}\log(1+\frac{h_1^2P}{1+h_2^2P})
\end{align}
which is another corner point of the capacity region. If the condition $\beta_2^{(1)},\beta_2^{(2)}\notin[\beta_2',\beta_2'']$ is not fulfilled, we  cannot choose $\beta_2$ to be $\beta_2^{(1)}$ or $\beta_2^{(2)}$ hence  cannot achieve the corner points of the capacity region. In Figure \ref{fig:clean_caseII} we give an example in this case where only part of rate pairs on the dominant face can be achieved.

In \textbf{Case III)} we require $\beta_2^{(1)},\beta_2^{(2)}\in[\beta_2',\beta_2'']$.  In Appendix \ref{sec:appen_proof_clean} we show that $\beta_2^{(1)},\beta_2^{(2)}\in[\beta_2',\beta_2'']$ if and only if the condition (\ref{eq:AchieCap_whole}) is satisfied.  With the coefficients $\ve A_1$, the achievable rate pairs $(r_1(\ve a,\beta_2),r_2(\ve b|\ve a,\beta_2))$ lies on the dominant face by varying $\beta_2$ in the interval $[\beta_2^{(1)},\beta_2'']$ and in this case we do not need to choose $\beta_2$ in the interval $[\beta_2',\beta_2^{(1)})$, see Figure \ref{fig:clean_caseIII01} for an example. Similarly with  coefficients $\ve A_2$, the achievable rate pairs $(r_1 (\ve b|\ve a,\beta_2),r_2(\ve a,\beta_2))$ lie on the dominant face by varying $\beta_2$ in the interval $[\beta_2',\beta_2^{(2)}]$ and we do not need to let $\beta_2$ take values in the interval $(\beta_2^{(2)},\beta_2'']$, see Figure \ref{fig:clean_caseIII10} for an example. Since we always have $r_1(\ve a,\beta_2')\geq r_1(\ve b|\ve a,\beta_2'')$ and $r_2(\ve b|\ve a,\beta_2')\geq r_2(\ve a,\beta_2'')$, the achievable rate pairs with coefficients $\ve A_1$ and $\ve A_2$ cover the whole dominant face of the capacity region.

\begin{figure}[!h]
\centering
\includegraphics[scale=0.5]{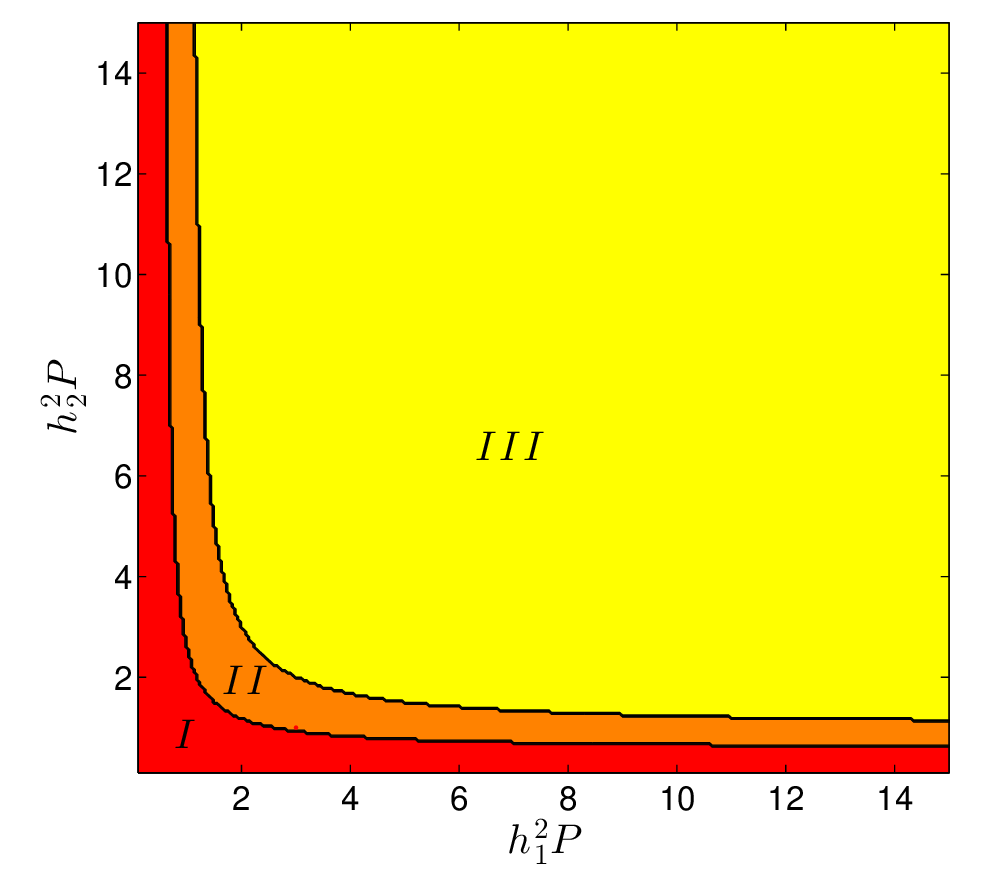}
\caption{The plane of the received SNR $h_1^2P, h_2^2P$ is divided into three regions. Region I corresponds to Case I when the condition (\ref{eq:AchieCap_no}) holds and the scheme cannot achieve points on the boundary of the capacity region. In Region II the condition  (\ref{eq:AchieCap_part})  is  met but the condition (\ref{eq:AchieCap_whole}) is not,  hence only  part of the points on the capacity boundary can be achieved.  Region III corresponds to Case III where (\ref{eq:AchieCap_whole}) are satisfied and the proposed scheme can achieve any point in the capacity region.}
\label{fig:partition}
\end{figure}

\begin{figure*}[!h] \centerline{\subfloat[Case I with $h_1=1, h_2=\sqrt{2}, P=1$]{\includegraphics[width
=2.8in]{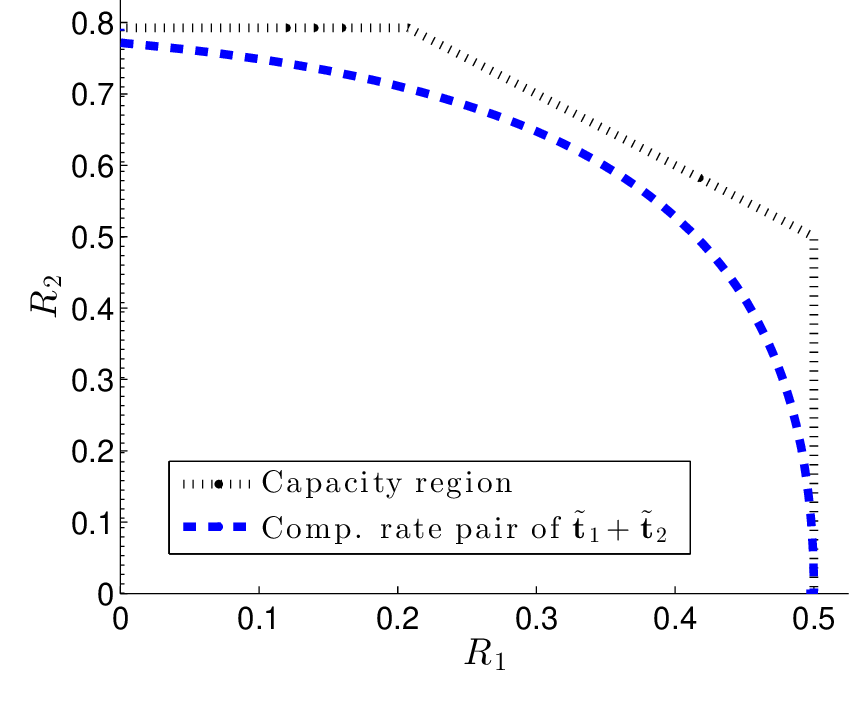} \label{fig:clean_caseI}} \hfil \subfloat[Case II with $h_1=1, h_2=\sqrt{2}, P=1.2$] {\includegraphics[width=2.8in]{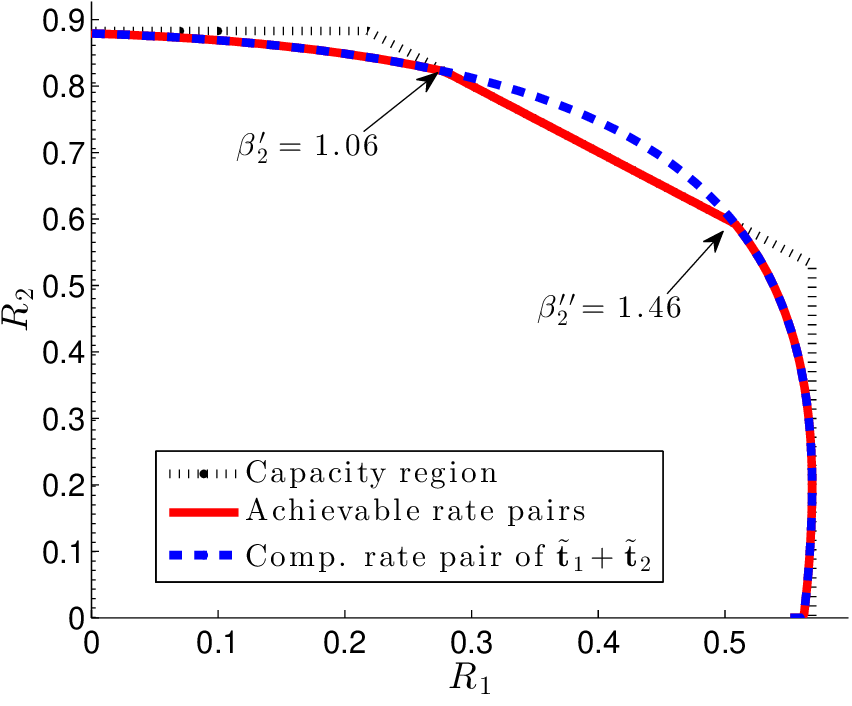}
\label{fig:clean_caseII}}} \caption{Plot (a) shows the achievable rate pairs in Case I. In this case the condition (\ref{eq:AchieCap_no}) is satisfied and the computation rate pair of the first sum is too small. It has no intersection with the dominant face hence cannot achieve sum rate capacity. Notice that the (message) rate pairs contained in the computation rate region are achievable. Plot (b) shows the situation in Case II. In this case the condition (\ref{eq:AchieCap_part}) is fulfilled and the computation rate pair of the first sum is larger. It intersects with the dominant face hence the sum capacity is achievable. In this example the condition (\ref{eq:AchieCap_whole}) is not satisfied hence only part of the dominant face can be achieved, as depicted in the plot. The rate pair segement on the dominant face can be achieved by choosing $\ve a=(1,1)$, $\ve b=(1,0)$ or $\ve b=(0,1)$ and varying $\beta_2\in[\beta_2',\beta_2'']$. Choosing $\beta_2$ to be $\beta_2',\beta_2''$ gives the end points of the segement. We emphesize that if we choose $\ve a=(1,0), \ve b=(0,1)$ or $\ve a=(0,1), \ve b=(1,0)$, i.e., the conventional successive cancellation decoding, we can always achieve the whole capacity region, irrespective of the condition (\ref{eq:AchieCap_no}) or (\ref{eq:AchieCap_part}). } \label{fig:clear_caseI_II} 
\end{figure*}

\begin{figure*}[!h] \centerline
{\subfloat[Case III with $h_1=1, h_2=\sqrt{2},P=4$]{\includegraphics[width=2.8in]{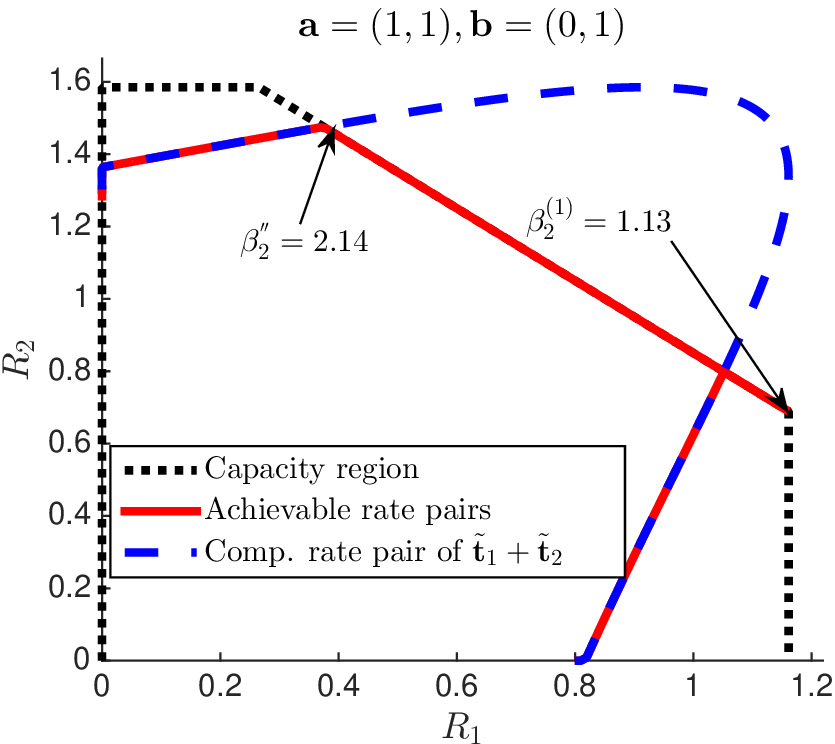} \label{fig:clean_caseIII01}} \hfil 
\subfloat[Case III with $h_1=1, h_2=\sqrt{2},P=4$]{\includegraphics[width=2.5in]{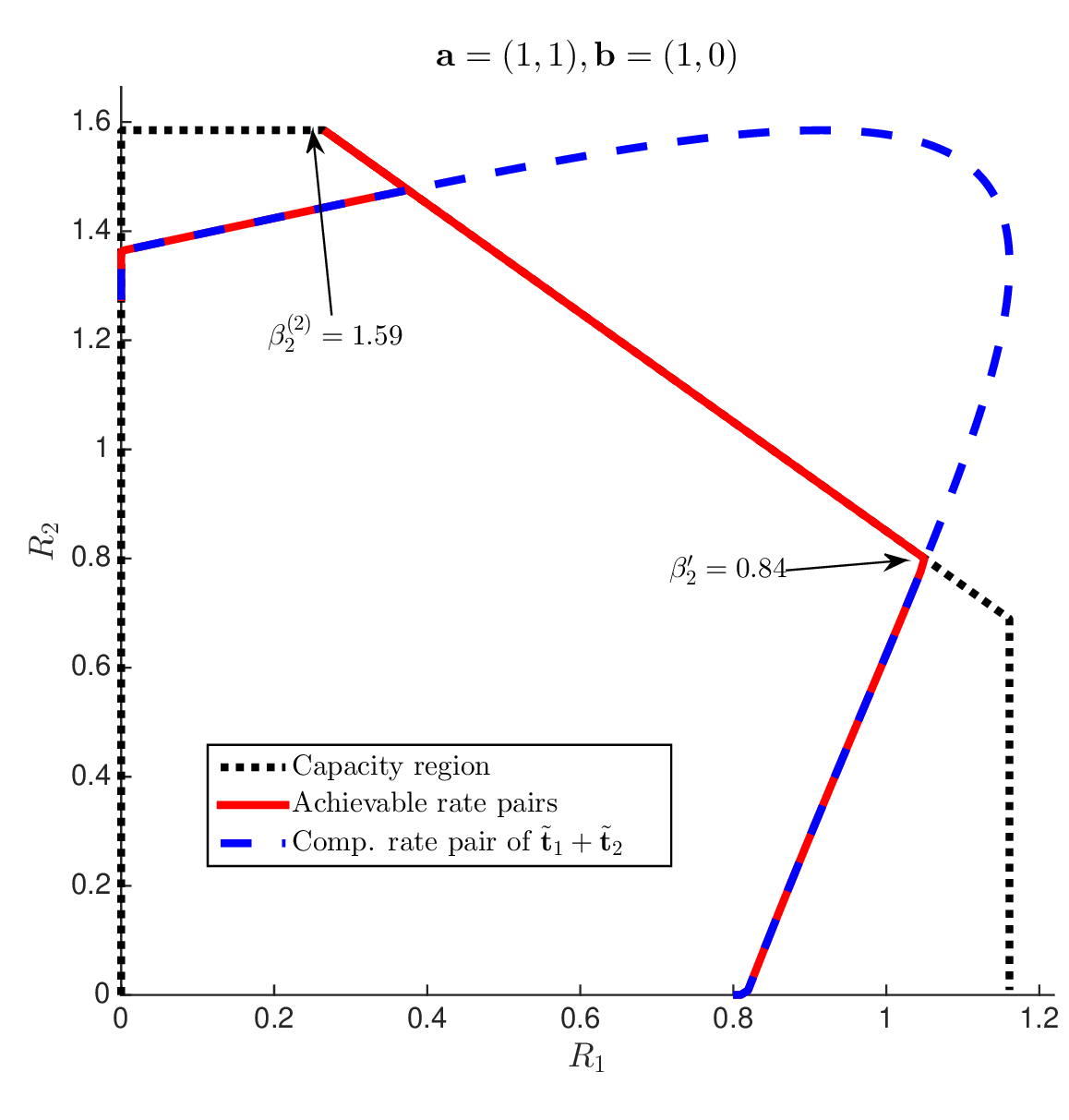}
\label{fig:clean_caseIII10}}} 
\caption{Achievable rate pairs in Case III. (The capacity region and the computation rate pairs in the two plots are the same.) In this case the condition (\ref{eq:AchieCap_whole}) is satisfied hence the computation rate pair of the first sum  is large enough to  achieve the whole capacity region  by decoding two nontrivial integer sums.  Plot (a) shows the achievable rate pairs by choosing $\ve a=(1,1), \ve b=(0,1)$ and varying $\beta_2\in[\beta_2^{(1)},\beta_2'']$.  Plot (b) shows the achievable rate pairs by choosing $\ve a=(1,1), \ve b=(1,0)$ and varying $\beta_2\in[\beta_2',\beta_2^{(2)}]$. The union of the achievable rate pairs with coefficients   cover the whole dominant face of the capacity region.} 
\label{fig:caseIII} \end{figure*}

As mentioned previously, a similar idea is developed in \cite{ordentlich_successive_2013} showing that certain isolated points on the capacity boundary are achievable under certain condition. Before ending the proof, we comment on two main points in the  proposed CFMA scheme, which enable us to improve upon the previous result. The first point is the introduction of the scaling parameters $\beta_k$ which allow us to adjust the rates of the  two users. More precisely, equations (\ref{eq:rate_a_clean_proof}) and (\ref{eq:rate_b_clean_proof}) show that the scaling parameters not only affect the equivalent noise $N_1(\alpha_1)$ and $N_2(\alpha_2,\lambda)$, but also balance the rates of different users (as they also appear in the numerators). We need to adjust the rates of the two users carefully through these parameters to make sure that the rate pairs lie on the capacity boundary.  The second point is that in order to achieve the whole capacity boundary, it is very important to choose the right coefficients of the sum. In particular for the two-user Gaussian MAC, the coefficients for the second sum should be $(1,0)$ or $(0,1)$. More discussions on the choice of coefficients are given in the next section.
\end{proof}

\subsection{On the choice of coefficients}
In Theorem \ref{thm:clean_MAC} we only considered the coefficients $\ve a=(1,1)$, $\ve b=(1,0)$ or $\ve b=(0,1)$. It is natural to ask whether choosing other coefficients could be advantageous.   We first consider the case when the coefficients $\ve a$ of the first sum is chosen differently.

\begin{lemma}[Achieving capacity with  a different $\ve a$]
Consider a $2$-user Gaussian MAC where the receiver decodes two integer sums of the codewords with coefficients $\ve a=(a_1, a_2)$ and $\ve b=(0,1)$ or $\ve b=(1,0)$. Certain rate pairs on the dominant face are achievable if it holds that
\begin{align}
\frac{h_1h_2P}{\sqrt{1+(h_1^2+h_2^2)P}}\geq \frac{4a_1^2a_2^2-1}{4a_1a_2}.
\label{eq:no_capacity}
\end{align}
Furthermore the corner points of the capacity region are achievable if it holds that
\begin{align}
\frac{h_1h_2P}{\sqrt{1+(h_1^2+h_2^2)P}}\geq a_1a_2.
\end{align}
\label{lemma:coef}
\end{lemma}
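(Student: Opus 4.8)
The plan is to recycle the quadratic analysis already carried out in the proof of Theorem~\ref{thm:clean_MAC}, now keeping $a_1,a_2$ general instead of specializing to $(1,1)$. Throughout I set $\beta_1=1$ and, without loss of generality, take $a_1,a_2>0$; I write $S:=\sqrt{1+(h_1^2+h_2^2)P}$, so that $C_{sum}=\log S$ and the quantity $A$ of Theorem~\ref{thm:clean_MAC} is $A=h_1h_2P/S$.

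For the first claim I would reuse the equivalence established in that proof, namely $f(\beta_2)\le0 \Leftrightarrow r_k(\ve a,\beta_2)\ge r_k(\ve b|\ve a,\beta_2)$ for $k=1,2$. This equivalence rests on the factor $(a_2b_1-a_1b_2)^2$ in (\ref{eq:rate_b_clean}) being equal to $1$, which for $\ve b=(0,1)$ forces $|a_1|=1$ and for $\ve b=(1,0)$ forces $|a_2|=1$; under either choice (\ref{eq:sum_of_rates}) gives $r_1(\ve a,\beta_2)+r_2(\ve b|\ve a,\beta_2)=C_{sum}$. I would then observe that (\ref{eq:no_capacity}) is exactly the condition that the discriminant quantity $D$ appearing in the roots (\ref{eq:roots}) is nonnegative, so that $f$ has real roots $\beta_2'\le\beta_2''$; since $f$ is an upward parabola (leading coefficient $a_2^2(1+h_1^2P)>0$), every $\beta_2\in[\beta_2',\beta_2'']$ satisfies $f(\beta_2)\le0$, whence the minimum in Theorem~\ref{thm:MAC_general} selects the $r(\ve b|\ve a)$ term and the pair lands on the dominant face. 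This is Case~II of Theorem~\ref{thm:clean_MAC} read with a general $\ve a$.

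For the corners I would take $\ve a=(1,a_2)$, $\ve b=(0,1)$ and locate the value $\beta_2^{(1)}$ achieving $R_1=\frac{1}{2}\log(1+h_1^2P)$. Since $R_1=r_1(\ve a,\beta_2)$ is, by (\ref{eq:rate_a_clean}), largest exactly when $K(\ve a,\beta_2)$ in (\ref{eq:contant_K}) is smallest, the corner is attained at the minimizer of $K(\ve a,\cdot)$, which is $\beta_2^{(1)}=\frac{h_1h_2P}{a_2(1+h_1^2P)}$; using the factorization $(1+h_1^2P)(1+h_2^2P)-h_1^2h_2^2P^2=S^2$ one checks $\min_{\beta_2}K(\ve a,\beta_2)=S^2/(1+h_1^2P)$, so this $\beta_2^{(1)}$ indeed hits the corner value. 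Because the corner lies on the dominant face, (\ref{eq:sum_of_rates}) forces $R_2=C_{sum}-R_1=r_2(\ve b|\ve a,\beta_2^{(1)})$, so the corner is achievable precisely when the minimum in Theorem~\ref{thm:MAC_general} picks $r_2(\ve b|\ve a)$, i.e. when $f(\beta_2^{(1)})\le0$ (equivalently $\beta_2^{(1)}\in[\beta_2',\beta_2'']$). Substituting $K(\ve a,\beta_2^{(1)})=S^2/(1+h_1^2P)$ into $f(\beta_2^{(1)})=K(\ve a,\beta_2^{(1)})-\beta_2^{(1)}S$ collapses $f(\beta_2^{(1)})\le0$ to $\beta_2^{(1)}\ge S/(1+h_1^2P)$, which after inserting $\beta_2^{(1)}$ becomes $h_1h_2P/S\ge a_2$, i.e. $A\ge a_1a_2$. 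The symmetric choice $\ve a=(a_1,1)$, $\ve b=(1,0)$ gives the other corner with the same threshold, and the case $a_2=1$ recovers the $A\ge1$ statement of Appendix~\ref{sec:appen_proof_clean}.

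The computations are short rather than hard, so the main obstacle is the bookkeeping in the corner analysis: recognizing that $\beta_2^{(1)}$ coincides with the minimizer of $K(\ve a,\cdot)$, verifying the factorization that identifies $\min_{\beta_2}K(\ve a,\beta_2)$ with $S^2/(1+h_1^2P)$, and confirming that this single critical value is the one producing the corner. Once these are in place, the reduction of the containment $\beta_2^{(1)}\in[\beta_2',\beta_2'']$ to the clean inequality $A\ge a_1a_2$ is immediate.
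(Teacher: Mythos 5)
Your proposal is correct and takes essentially the same route as the paper: the paper's own proof of this lemma simply points back to the quadratic analysis in Theorem \ref{thm:clean_MAC} (the discriminant condition $D\geq 0$ in Eqn.~(\ref{eq:case1_general}) for the first claim, and the Case III verification that the corner-producing $\beta_2^{(1)}$ lies in $[\beta_2',\beta_2'']$ for the second), which is exactly what you carry out with general $a_1,a_2$. Your two deviations are cosmetic and in fact improvements in rigor: you make explicit the determinant-one requirement ($|a_1|=1$ for $\ve b=(0,1)$, $|a_2|=1$ for $\ve b=(1,0)$) that the lemma statement leaves implicit, and you check the corner condition by evaluating $f$ at the minimizer of $K(\ve a,\cdot)$ using $\min_{\beta_2}K=S^2/(1+h_1^2P)$, rather than by comparing $\beta_2^{(1)}$ against the closed-form roots as in Appendix \ref{sec:appen_proof_clean}.
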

\begin{proof}
The proof of the first statement is given in the proof of Theorem \ref{thm:clean_MAC}, see Eqn. (\ref{eq:case1_general}). The proof of the second statement is omitted as it is the same as the proof of Case III in Theorem \ref{thm:clean_MAC} with a general $\ve a$. 
\end{proof}

This result suggests that although it is always possible to achieve the sum capacity with any $\ve a$, provided that the SNR of users are large enough, the choice $\ve a=(1,1)$ is the best, in the sense that it requires the lowest SNR threshold, above which the sum capacity or the whole capacity region  is achievable.  

To illustrate this, let us reconsider the setting of Fig. \ref{fig:caseIII}, but  select  coefficients $\ve a$ different from $(1,1)$. As can be seen in Figure \ref{fig:coef1221_noCapacity}, it is not possible to achieve the sum capacity with $\ve a=(1,2)$ or $\ve a=(2,1)$. If we increase the power from $P=4$ to $P=10$, part of the capacity boundary is achieved, as shown in Figure \ref{fig:coef1221}.  However in this case we cannot achieve the whole capacity region. The reason lies in the fact that the computation rate pairs are different for $\ve a=(1,2)$ and $\ve a=(2,1)$.

\begin{figure*}[!h] \centerline
{\subfloat[Achievable (computation) rate pairs with $h_1=1, h_2=\sqrt{2},P=4$ and $\ve a=(1,2)$ or $\ve a=(2,1)$]{\includegraphics[width=2.8in]{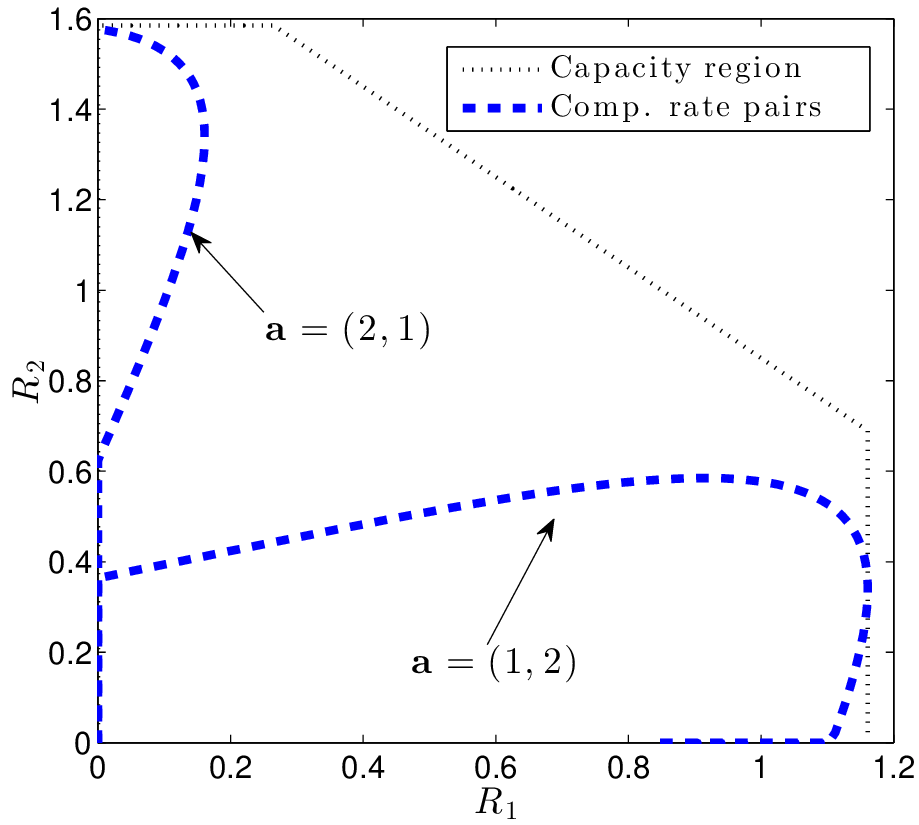} \label{fig:coef1221_noCapacity}} \hfil 
\subfloat[Achievable rate pairs with $h_1=1, h_2=\sqrt{2},P=10$ and $\ve a=(1,2)$ or $\ve a=(2,1)$. ]{\includegraphics[width=2.8in]{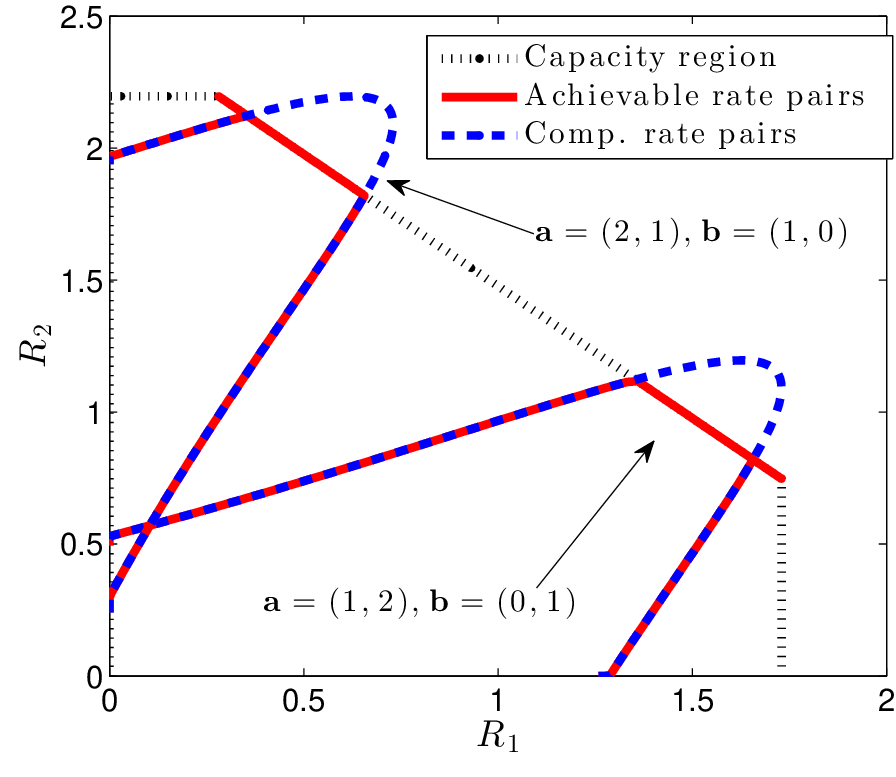}
\label{fig:coef1221}}} 
\caption{In the left plot we show the computation rate pairs with parameters $h_1=1, h_2=\sqrt{2}, P=4$ where the coefficients of the first sum are chosen to be $\ve a=(1,2)$ or $\ve a=(2,1)$. In this case the condition (\ref{eq:no_capacity}) is not satisfied hence no point on the dominant face can be achieved for the first sum. Compare it to the example in Figure \ref{fig:clean_caseIII01} or \ref{fig:clean_caseIII10} where $\ve a=(1,1)$ and the whole capacity region is achievable. We also note that the achievable computation rate pairs depicted in the Figure are also achievable message rate pairs, which can be shown using Theorem \ref{thm:MAC_general}.  In the right plot we show the achievable rate pairs with parameters $h_1=1, h_2=\sqrt{2}, P=10$ where the coefficient of the first sum is chosen to be $\ve a=(1,2)$ or $\ve a=(2,1)$. In this case we can achieve the sum capacity but cannot obtain the whole dominant face. In contrast, choosing $\ve a=(1,1)$ achieves the whole dominant face.} 
 \end{figure*}

Now we consider a different choice on the coefficients $\ve b$ of the second sum. Although from the perspective of solving equations,  having two sums with coefficients $\ve a=(1,1), \ve b=(1,0)$ or $\ve a=(1,1), \ve b=(1,2)$ is equivalent, here it is very important to choose $\ve b$ such that it has one zero entry. Recall the result in Theorem \ref{thm:MAC_general} that if $b_k\neq 0$ for $k=1,2$, then both message rates $R_1, R_2$ will have two constraints from the two sums we decode. This extra constraint will diminish the achievable rate region, and in particular it only achieves some isolated points on the dominant face. This is illustrated by the example in Figure \ref{fig:b21}.

As a rule of thumb, the receiver should always decode the sums whose coefficients are as small as possible in CFMA.

\begin{figure}[!h]
\centering
\includegraphics[width=3.0in]{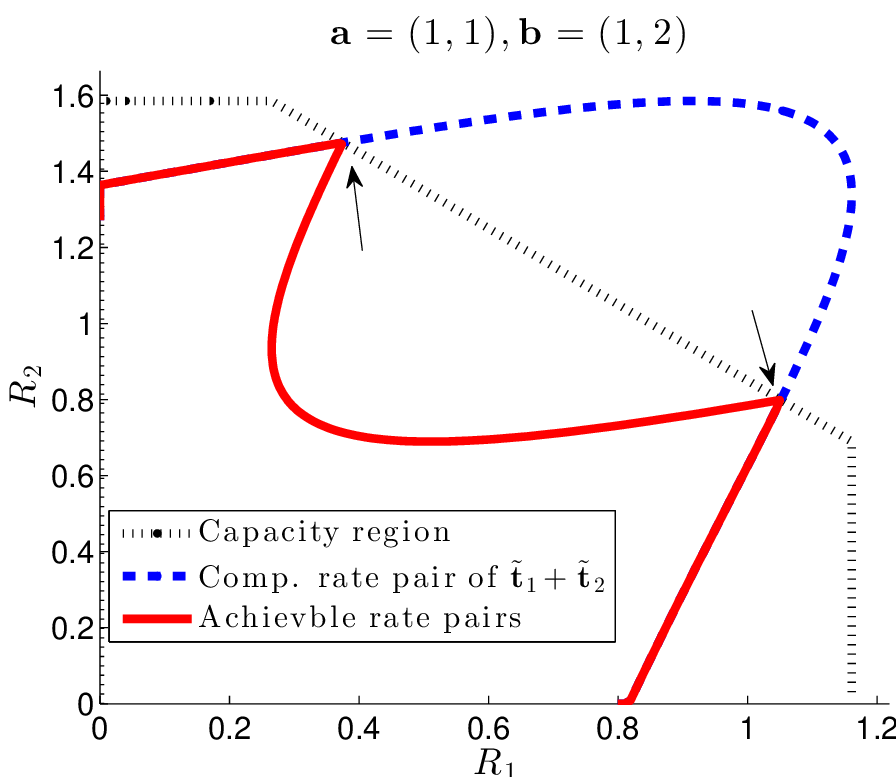}
\caption{The achievable rate pairs with parameters $h_1=1, h_2=\sqrt{2}, P=4$. In this case the condition (\ref{eq:AchieCap_whole}) is satisfied hence the first sum is chosen properly. But as we choose $\ve b=(1,2)$, only two isolated points (indicated by arrows) on the dominant face can be achieved. This is due to the fact non-zero entries in $\ve b$ will give an extra  constraint on the rate, cf. Theorem \ref{thm:MAC_general}. Compare it with the example in Figure \ref{fig:clean_caseIII10}.} 
\label{fig:b21}
\end{figure}

\subsection{A comparison with other multiple access techniques}
The CFMA strategy provides an alternative to  existing multiple-access techniques. In this section we  lay out the  limitations and possible advantages of the  CFMA scheme, and  compare it with other existing multiple access techniques. 
\begin{itemize}
\item We have mentioned that one advantage of  CFMA scheme is that the decoder used for  lattice decoding is a single-user decoder,  combined with the successive cancellation. Compared to a MAC decoder with joint-decoding, it permits a simpler receiver architecture. In other words, a lattice codes decoder for a point-to-point Gaussian channel can be directly used for a Gaussian MAC with a simple modification. In contrast a joint-decoder needs to perform estimations simultaneously on both messages hence generally has higher complexity. 
\item Compared to the successive cancellation decoding scheme with time sharing, CFMA also performs successive cancellation decoding but does not require time-sharing for achieving the desired rate pairs in the capacity region (provided that the mild condition on SNR is fulfilled).
\item The rate-splitting scheme also permits a single-user decoder at the receiver. As shown in \cite{rimoldi_rate-splitting_1996}, $2K-1$ single-user decoders are enough for the rate-splitting scheme in a $K$-user Gaussian MAC.  On the other hand, CFMA requires a matrix inversion operation to solve individual messages after collecting different sums which could be computationally expensive. However as shown in an example in Section \ref{sec:symmetric_K_MAC}, we can often choose the matrix to have very special structure and make it very easy to solve for individual messages.  Furthermore, CFMA can also be combined with the rate-splitting technique (i.e. decoding integer sums of the split messages), although this is not necessary for the multiple access problem considered in this paper.
\item More importantly,   CFMA is able to achieve the optimal rate pairs in certain communication scenarios while the conventional single-user decoding with time-sharing or the rate splitting technique fails.  An example for such scenario is the Gaussian interference channel with strong interference and detailed discussions are given in \cite{ZhuGastpar_CFMA}.
\end{itemize}

\section{The K-user Gaussian MAC}\label{sec:K_user}
In this section we consider the general $K$-user Gaussian MAC given in (\ref{eq:sys_K_MAC}). Continuing with the coding scheme for the $2$-user Gaussian MAC, in this case the receiver decodes $K$ integer sums with linearly independent coefficients and uses them to solve for the individual messages. The coefficients of the $K$ sums will be denoted by a \textit{coefficient matrix} $\ve A\in\mathbb Z^{K\times K}$
\begin{align}
\ve A:=(\ve a_1^T \ldots \ve a_K^T)^T=
\begin{pmatrix}
a_{11} &a_{12} &\ldots &a_{1K}\\
a_{22} &a_{22} &\ldots &a_{2K}\\
\ldots &\ldots &\ldots &\ldots\\
a_{K1} &a_{K2} &\ldots &a_{KK}\\
\end{pmatrix}
\end{align}
where the row vector $\ve a_{\ell}:=(a_{\ell 1},\ldots,a_{\ell K})\in \mathbb Z^{1\times K}$ denotes the coefficients of the $\ell$-th sum, $\sum_{k=1}^Ka_{\ell k}\tilde{\ve t}_k$.

The  following theorem gives an achievable message rate tuple for the general $K$-user Gaussian MAC. It is an extension of  \cite[Thm. 2]{ordentlich_successive_2013} as the scaling parameters $\beta_k$ in our proposed scheme allow a larger achievable rate region. 

\begin{theorem}[Achievability for the $K$-user Gaussian MAC]
Consider the $K$-user Gaussian MAC in (\ref{eq:sys_K_MAC}). Let $\ve A$ be a full-rank integer matrix and $\beta_1,\ldots,\beta_K$ be $K$ non-zero real numbers. We define $\ve B:=\text{diag}(\beta_1,\ldots,\beta_K)$ and 
\begin{align}
\ve K_{\ve Z'}:=P\ve A\ve B(\ve I+P\ve h\ve h^T)^{-1}\ve B^T\ve A^T
\label{eq:cov_noise}
\end{align}
Let the matrix $\ve L$ be the unique Cholesky factor of the matrix $\ve A\ve B(\ve I+P\ve h\ve h^T)^{-1}\ve B^T\ve A^T$, i.e.
\begin{align}
\ve K_{\ve Z'}=P\ve L\ve L^T
\label{eq:chol}
\end{align}  
The message rate tuple $(R_1,\ldots, R_K)$ is achievable with
\begin{align*}
R_k= \min_{\ell\in[1:K]}\left\{\frac{1}{2}\log^+\left(\frac{\beta_k^2}{L_{\ell\ell}^2}\right)\cdot\chi(a_{\ell k})\right\},  k=1,\ldots, K
\end{align*}
where we define
\begin{align}
\chi(x)=\begin{cases}
+\infty  &\mbox{if } x=0,\\
1	&\mbox{otherwise.}
\end{cases}
\end{align}
Furthermore if $\ve A$ is a unimodular ($|\ve A|=1$) and  $R_k$ is of the form
\begin{align}
R_k=\frac{1}{2}\log\left(\frac{\beta_k^2}{L_{\Pi(k)\Pi(k)}^2}\right), k=1,\ldots, K
\end{align}
for some permutation $\Pi$ of the set $\{1,\ldots, K\}$, then the sum rate satisfies
\begin{align}
\sum_{k=1}^K R_k=C_{sum}:=\frac{1}{2}\log\left(1+\sum_{k=1}^K h_k^2P\right)
\end{align}
\label{thm:K_MAC}
\end{theorem}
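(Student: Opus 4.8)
The plan is to follow the same two-step structure as in the proof of Theorem \ref{thm:MAC_general}, but generalized to $K$ sums via a sequential (successive-cancellation) decoding of the rows of $\ve A\ve B$. First I would reduce the problem to the computation-rate analysis. Writing $\ve x_k = [\ve t_k/\beta_k + \ve d_k] \mode \Lambda_k^s/\beta_k$ exactly as before, the key object is the equivalent effective noise when the receiver forms a scaled combination $\alpha_\ell \ve y$ (minus the appropriate dithers) targeting the $\ell$-th integer sum $\sum_k a_{\ell k}\tilde{\ve t}_k$. After the first $\ell-1$ sums have been decoded, they can be subtracted off, so decoding the $\ell$-th sum amounts to a single-user lattice-decoding problem whose effective noise variance is governed by the conditional covariance of the residual. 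This is precisely where the Cholesky factorization enters: the matrix $\ve A\ve B(\ve I+P\ve h\ve h^T)^{-1}\ve B^T\ve A^T$ is (up to the factor $P$) the covariance of the vector of effective noises $\ve Z'$ across all $K$ decoding stages, and its Cholesky factor $\ve L$ captures exactly how successive cancellation whitens this covariance. The diagonal entry $L_{\ell\ell}^2$ is the conditional variance of the $\ell$-th effective noise given the previous ones, so the $\ell$-th sum is decodable (by the argument of \cite[Th. 5]{NazerGastpar_2011}, \cite[Th. 1]{ZhuGastpar_2014}) whenever every participating user $k$ with $a_{\ell k}\neq 0$ satisfies $r_k < \frac{1}{2}\log^+(\beta_k^2/L_{\ell\ell}^2)$.

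Second, I would assemble the message rates. Since $\ve A$ is full rank, the $K$ decoded sums determine all codewords $\ve t_1,\ldots,\ve t_K$, so by Lemma \ref{lemma:comp_message} the rate of user $k$ is the minimum over all stages $\ell$ of the per-stage constraint. The factor $\chi(a_{\ell k})$ encodes the observation already made in the proof of Theorem \ref{thm:MAC_general}: if $a_{\ell k}=0$, user $k$ does not participate in the $\ell$-th sum and therefore incurs no constraint from that stage (effectively an infinite allowed rate), which yields the claimed $R_k = \min_{\ell} \frac{1}{2}\log^+(\beta_k^2/L_{\ell\ell}^2)\cdot\chi(a_{\ell k})$.

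For the sum-rate statement, the plan is a determinant computation. Taking determinants in $\ve K_{\ve Z'} = P\ve L\ve L^T$ gives $\prod_\ell L_{\ell\ell}^2 = |\ve A\ve B(\ve I+P\ve h\ve h^T)^{-1}\ve B^T\ve A^T|$, and by the multiplicativity of the determinant together with $|\ve A|=1$ this equals $|\ve B|^2 \cdot |\ve I+P\ve h\ve h^T|^{-1} = (\prod_k \beta_k^2)/(1+P\norm{\ve h}^2)$, using the matrix-determinant lemma $|\ve I+P\ve h\ve h^T| = 1+P\norm{\ve h}^2 = 1+\sum_k h_k^2 P$. When each $R_k$ attains the form $\frac{1}{2}\log(\beta_k^2/L_{\Pi(k)\Pi(k)}^2)$ for a permutation $\Pi$ (i.e.\ each user's binding constraint comes from a distinct stage and the positivity makes the $\log^+$ and $\chi$ inactive), summing gives $\sum_k R_k = \frac{1}{2}\log\big(\prod_k\beta_k^2 / \prod_\ell L_{\ell\ell}^2\big) = \frac{1}{2}\log(1+\sum_k h_k^2 P) = C_{sum}$, as claimed.

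The main obstacle I anticipate is making the successive-cancellation structure precise: namely, verifying that the conditional variance of the $\ell$-th effective noise, after optimal scaling by $\alpha_\ell$ and optimal cancellation coefficients for the already-decoded sums, is indeed the Cholesky diagonal $PL_{\ell\ell}^2$. This requires identifying $\ve K_{\ve Z'}$ as the covariance of the joint effective-noise vector and invoking the standard fact that Cholesky factorization realizes the sequential MMSE/whitening decomposition; the optimization over the scaling and cancellation parameters (generalizing the optimization over $\alpha_2,\lambda$ in Theorem \ref{thm:MAC_general}) is exactly what produces the conditional variances along the diagonal of $\ve L$. The remaining steps are routine determinant and bookkeeping arguments.
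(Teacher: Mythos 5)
Your proposal is correct and follows essentially the same route as the paper: reduce to per-stage computation-rate constraints, identify $P^{-1}\ve K_{\ve Z'}$'s Cholesky factor with the successive-cancellation structure (the paper phrases this as ``noise prediction,'' writing the equivalent noise as $\sqrt{P}\ve L\ve W$ and cancelling each $\ve w_\ell$ after decoding the $\ell$-th sum, which is exactly your conditional-variance/whitening interpretation), handle zero coefficients via $\chi$, combine stages through Lemma \ref{lemma:comp_message}, and finish the sum-rate claim with the identical determinant computation using $|\ve A|=1$. The step you flag as the main obstacle is resolved in the paper precisely by that explicit representation $\tilde{\ve Y}=\ve A\ve T+\sqrt{P}\ve L\ve W$, so your plan closes without any real gap.
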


\begin{proof}
To prove this result, we will adopt a more compact representation and follow the proof technique given in \cite{ordentlich_successive_2013}. We rewrite the system in (\ref{eq:sys_K_MAC}) as
\begin{align}
\ve Y=\ve h\ve X+\ve z
\end{align}
with $\ve h=(h_1,\ldots, h_K)\in\mathbb R^{1\times K}$ and $\ve X=(\ve x_1^T \ldots  \ve x_K^T)^T\in\mathbb R^{K\times n}$ where each $\ve x_k\in R^{1\times n}$ is the transmitted signal sequence of user $k$ given by
\begin{IEEEeqnarray}{rCl}
\ve x_k=[\ve t_k/\beta_k+\ve d_k]\mode\Lambda_k/\beta_k
\end{IEEEeqnarray}
Similar to the derivation for the $2$-user case, we multiply the channel output by a matrix $\ve F\in\mathbb R^{K\times 1}$ and it can be shown that the following equivalent output can be obtained
\begin{IEEEeqnarray}{rCl}
\tilde{\ve Y}=\ve A\ve T+\tilde{\ve Z}
\label{eq:equivlent_vec}
\end{IEEEeqnarray}
where $\ve T:=(\tilde{\ve t}_1^T \ldots \tilde{\ve t}_K^T)^T\in\mathbb R^{K\times n}$ and the lattice codeword $\tilde{\ve t}_k\in R^{n\times 1}$ of user $k$ is the same as defined in (\ref{eq:t_tilde}). Furthermore the noise $\tilde{\ve Z}\in\mathbb R^{K\times n}$ is given by
\begin{align}
\tilde{\ve Z}=(\ve F\ve h-\ve A\ve B)\ve X+\ve F\ve z
\end{align}
where  $\ve B:=\text{diag}(\beta_1,\ldots,\beta_K)$. The matrix $\ve F$ is chosen to minimize the variance of the  noise:
\begin{align}
\ve F:=P\ve A\ve B\ve h^T\left(\frac{1}{P}\ve I+\ve h\ve h^T\right)^{-1}
\end{align}

As shown in the proof of \cite[Thm. 5]{NazerGastpar_2011}, when analyzing the lattice decoding for the system given in (\ref{eq:equivlent_vec}), we can  consider the system
\begin{align}
\tilde{\ve Y}=\ve A\ve T+\ve Z'
\label{eq:equivalent_vec_Gauss}
\end{align}
where $\ve Z'\in\mathbb R^{K\times n}$ is the equivalent noise and each row $\ve z_k$ is a $n$-sequence of i.i.d Gaussian random variables $z_k$ for $k=1,\ldots,K$. The covariance matrix of the Gaussians $z_1,\ldots,z_K$ is the same as that of the original noise $\tilde{\ve Z}$ in (\ref{eq:equivlent_vec}). It is easy to show that  the covariance matrix of the equivalent noise $z_1,\ldots,z_K$ is given in Eq. (\ref{eq:cov_noise}).

Now instead of doing the successive interference cancellation as in the $2$-user case, we use an equivalent formulation which is called ``noise prediction" in \cite{ordentlich_successive_2013}.  Because the matrix $\ve A\ve B(\ve I+P\ve h\ve h^T)^{-1}\ve B^T\ve A^T$ is positive  definite, it admits the Cholesky factorization hence the covariance matrix $\ve K_{\ve Z'}$ can be rewritten as 
\begin{align}
\ve K_{\ve Z'}=P\ve L\ve L^T
\end{align}
where $\ve L$ is a lower triangular matrix. 

Using the Cholesky decomposition of $\ve K_{\tilde{\ve Z}}$, the system (\ref{eq:equivalent_vec_Gauss}) can be represented as
\begin{align}
\tilde{\ve Y}&=\ve A\ve T+\sqrt{P}\ve L\ve W \nonumber\\
&=\begin{pmatrix}
a_{11} &a_{12} &\ldots &a_{1K}\\
a_{21} &a_{22} &\ldots &a_{2K}\\
\vdots &\vdots &\vdots &\vdots\\
a_{K1} &a_{K2} &\ldots &a_{KK}\\
\end{pmatrix}
\begin{pmatrix}
\tilde{\ve t}_1\\
\tilde{\ve t}_2\\
\vdots\\
\tilde{\ve t}_K
\end{pmatrix}
\nonumber \\
&+\sqrt{P}
\begin{pmatrix}
L_{11} &0  &0 &\ldots &0\\
L_{21} &L_{22} &0 &\ldots &0\\
\vdots &\vdots &\vdots &\vdots&\vdots\\
L_{K1} &L_{K2} &L_{K3} &\ldots &L_{KK}
\end{pmatrix}
\begin{pmatrix}
\ve w_1\\
\ve w_2\\
\vdots\\
\ve w_K\\
\end{pmatrix}
\label{eq:system_lower}
\end{align}
with $\ve W=[\ve w_1^T,\ldots, \ve w_K^T]\in \mathbb R^{K\times n}$ where $\ve w_i\in\mathbb R^{n\times 1}$ is an $n$-length sequence whose  components are i.i.d. zero-mean white Gaussian random variables  with unit variance. This is possible by noticing that $\sqrt{P}\ve L\ve W$ and $\ve Z'$ have the same covariance matrix.  Now we apply lattice decoding to each row of the above linear system. The first row of the equivalent system in (\ref{eq:system_lower}) is given by
\begin{align*}
\tilde{\ve y}_1:=\ve a_1\ve T+\sqrt{P}L_{11}\ve w_1
\end{align*}
Using lattice decoding, the first integer sum $\ve a_1\ve T=\sum_k a_{1k}\tilde{\ve t}_k$ can be decoded reliably if
\begin{align*}
r_k< \frac{1}{2}\log^+\frac{\beta_k^2P}{PL_{11}^2}=\frac{1}{2}\log^+\frac{\beta_k^2}{L_{11}^2}, k=1,\ldots, K
\end{align*}
Notice that if $a_{1k}$ equals zero, the lattice point $\tilde{\ve t}_k$ does not participate in the sum $\ve a_1\ve T$ hence $r_k$ is not constrained as above.

The important observation is that knowing $\ve a_1\ve T$ allows us to recover the noise term $\ve w_1$ from $\tilde{\ve y}_1$. This ``noise prediction" is equivalent to  the successive interference cancellation, see also \cite{ordentlich_successive_2013}.  Hence we could eliminate the term $\ve w_1$ in the second row of the system (\ref{eq:system_lower}) to obtain
\begin{align*}
\tilde{\ve y}_2:=\ve a_2\ve T+\sqrt{P}L_{22}\ve w_2
\end{align*}
The lattice decoding of $\ve a_2\ve T$ is successful if
\begin{align*}
r_k< \frac{1}{2}\log^+\frac{\beta_k^2P}{PL_{22}^2}=\frac{1}{2}\log^+\frac{\beta_k^2}{L_{22}^2}, k=1,\ldots, K
\end{align*}
Using the same idea we can eliminate all  noise terms $\ve w_1,\ldots,\ve w_{\ell-1}$ when  decode the $\ell$-th sum. Hence the rate constraints on $k$-th user when decoding the sum $\ve a_\ell \ve T$ is given by
\begin{align*}
r_k< \frac{1}{2}\log^+\frac{\beta_k^2P}{PL_{\ell\ell}^2}=\frac{1}{2}\log^+\frac{\beta_k^2}{L_{\ell\ell}^2}, k=1,\ldots, K
\end{align*}
When decoding the $\ell$-th sum,  the constraint on $r_k$ will be active only if the coefficient of $\tilde{\ve t}_k$ is not zero. Otherwise this decoding will not constraint $r_k$. This fact is captured by introducing the $\chi$ function in the statement of the Theorem. This gives the claimed expression.

In the case when the achievable message rate $R_k$ is of the form
\begin{align*}
R_k=\frac{1}{2}\log\left(\frac{\beta_k^2}{L_{\Pi(k)\Pi(k)}^2}\right),
\end{align*}
the sum rate is
\begin{align*}
\sum_{k}R_k&=\sum_k \frac{1}{2}\log\frac{\beta_k^2}{L_{\Pi(k)\Pi(k)}^2}\\
&=\frac{1}{2}\log\prod_k\frac{\beta_k^2}{L_{kk}^2}\\
&=\frac{1}{2}\log\frac{\prod_k\beta_k^2}{|\ve L\ve L^T|}\\
&=\frac{1}{2}\log\frac{\prod_k\beta_k^2}{|\ve A\ve B(\ve I+P\ve h\ve h^T)^{-1}\ve B^T\ve A^T|}\\
&=\frac{1}{2}\log |\ve I+P\ve h\ve h^T|+\frac{1}{2}\log\prod_k \beta_k^2-\log|\ve A|-\frac{1}{2}\log|\ve B^T\ve B|\\
&=\frac{1}{2}\log|\ve I+P\ve h\ve h^T|-\log|\ve A|\\
&=C_{sum}-\log|\ve A|.
\end{align*}
If $\ve A$ is unimodular, i.e., $|\ve A|=1$, the achievable sum rate is equal to the sum capacity.
\end{proof}
\begin{remark}
The theorem says that to achieve the sum capacity, we need $\ve A$ to be unimodular and $R_k$ should have the form $R_k=\frac{1}{2}\log\frac{\beta_k^2}{L_{\Pi(k)\Pi(k)}^2}$, whose validity of course depends  on the choice of $\ve A$. It is difficult to characterize the class of $\ve A$ for which this holds.  In the case when $\ve A$ is upper triangular with non-zero diagonal entries  and $L_{11}^2\leq\ldots\leq L_{KK}^2$, this condition holds and in fact in this case we have $R_k=\frac{1}{2}\log\frac{\beta_k^2}{L_{kk}^2}$. It can be seen that we are exactly in this situation when we study the $2$-user MAC in Theorem \ref{thm:clean_MAC}.
\end{remark}

\subsection{An example of a $3$-user MAC}

It is in general difficult to analytically characterize the achievable rate using our scheme of the $K$-user MAC.  We give an example of a $3$-user MAC in Figure \ref{fig:3user} to help visualize the achievable region. The channel has the form
$\ve y=\sum_{k=1}^3\ve x_k+\ve z$ and the receiver decodes three sums with coefficients of the form
\begin{align}
\ve A=\begin{pmatrix}
1 &1 &1\\
 &\ve e_i\\
 &\ve e_j
\end{pmatrix}
\end{align}
for $i,j=1,2,3$ and $i\neq j$ where $\ve e_i$ is a row vector with $1$ in its $i$-th  and zero otherwise. It is easy to see that there are in total $6$ matrices of this form and they all satisfy $|\ve A|=1$, hence it is possible to achieve the capacity of this MAC according to Theorem \ref{thm:K_MAC}. For power $P=8$, most parts of the dominant face are achievable except for three triangular regions. For smaller power $P=2$, the achievable part of the dominant face shrinks and particularly the symmetric capacity point is not achievable. It can be checked that in this example, no other coefficients will give a larger achievable region. 

Unlike the $2$-user case, even with a large power,  not the whole dominant face can be obtained in this symmetric $3$-user MAC under the proposed scheme. To obtain some intuition why it is the case, we consider one edge of the dominant face indicated by the arrow in Figure \ref{fig:3user_p8}.  If we want to achieve the rate tuple on this edge, we need to decode user $1$  last because $R_1$ attains its maximum. Hence a reasonable choice of the coefficients matrix would be
\begin{align}
\ve A'=\begin{pmatrix}
0 &1 &1\\
0 &1 &0\\
1 &0 &0
\end{pmatrix}
\mbox{ or }
\ve A'=\begin{pmatrix}
0 &1 &1\\
0 &0 &1\\
1 &0 &0
\end{pmatrix}
\end{align}
Namely we first decode two sums to solve  both $\ve t_2$ and $\ve t_3$, and then decode $\ve t_1$ without any interference. When decoding the first two sums, we are effectively dealing with a $2$-user MAC while  treating $\ve t_1$ as noise. The crux  is that with $\ve t_1$ as noise, the signal-to-noise ratio of user $2$ and $3$ are too low, such that computation rate pair cannot reach the dominant face of the effective $2$-user MAC with $\ve t_1$ being treated as noise. This is the same situation as the Case I considered in Theorem \ref{thm:clean_MAC}. In Figure \ref{fig:3user_p8} we also plot the achievable rates with the coefficients $\ve A'$ above on the side face. We see when $R_1$ attains its maximal value,  the achievable rates cannot reach the dominant face, as a reminiscence of the $2$-user example in Figure \ref{fig:clean_caseI}.

\begin{figure*}[!thb] \centerline{\subfloat[$h_k=1, P=8$]{\includegraphics[width
=3in]{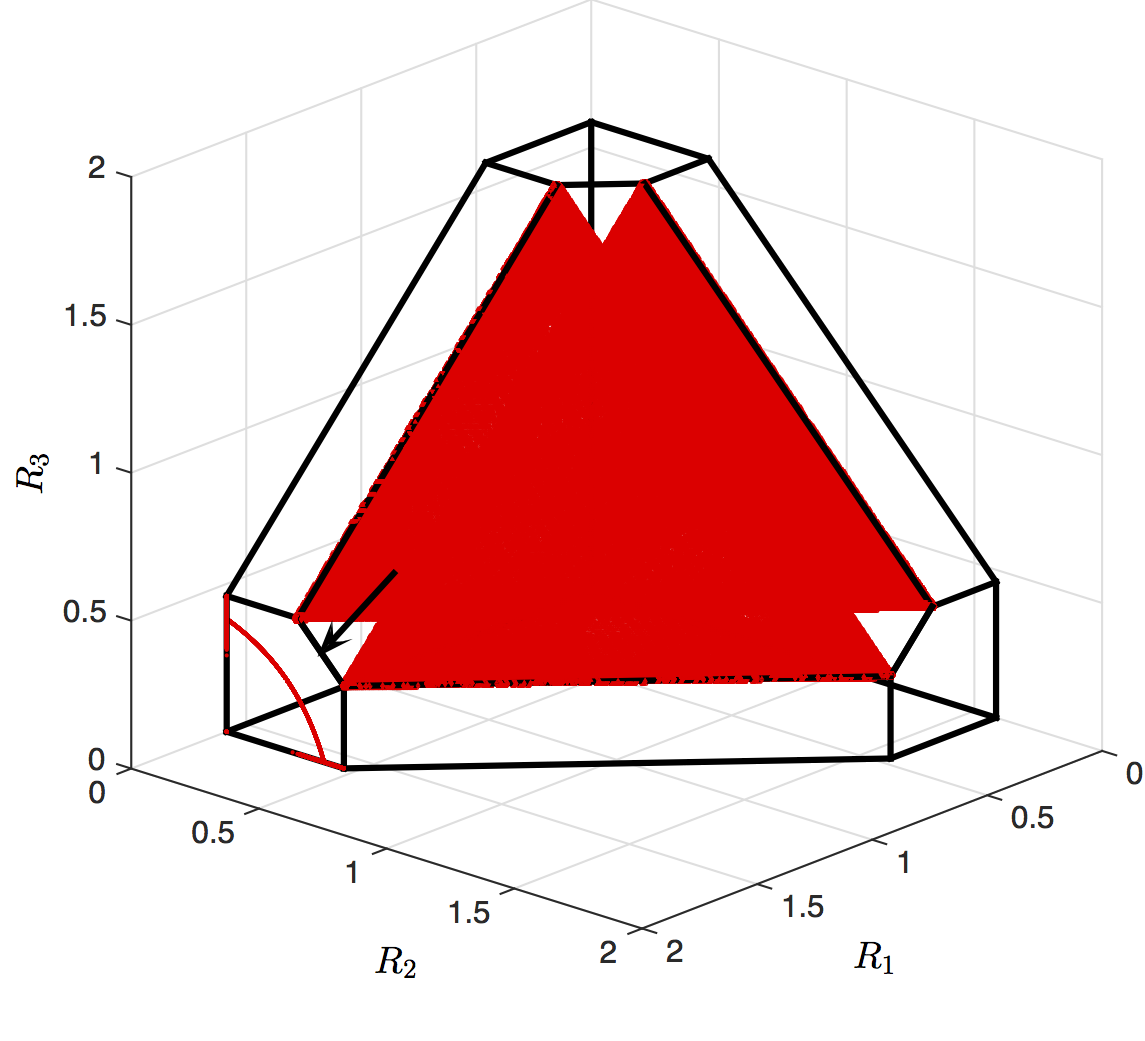} \label{fig:3user_p8}} \hfil \subfloat[$h_k=1, P=2$] {\includegraphics[width=3in]{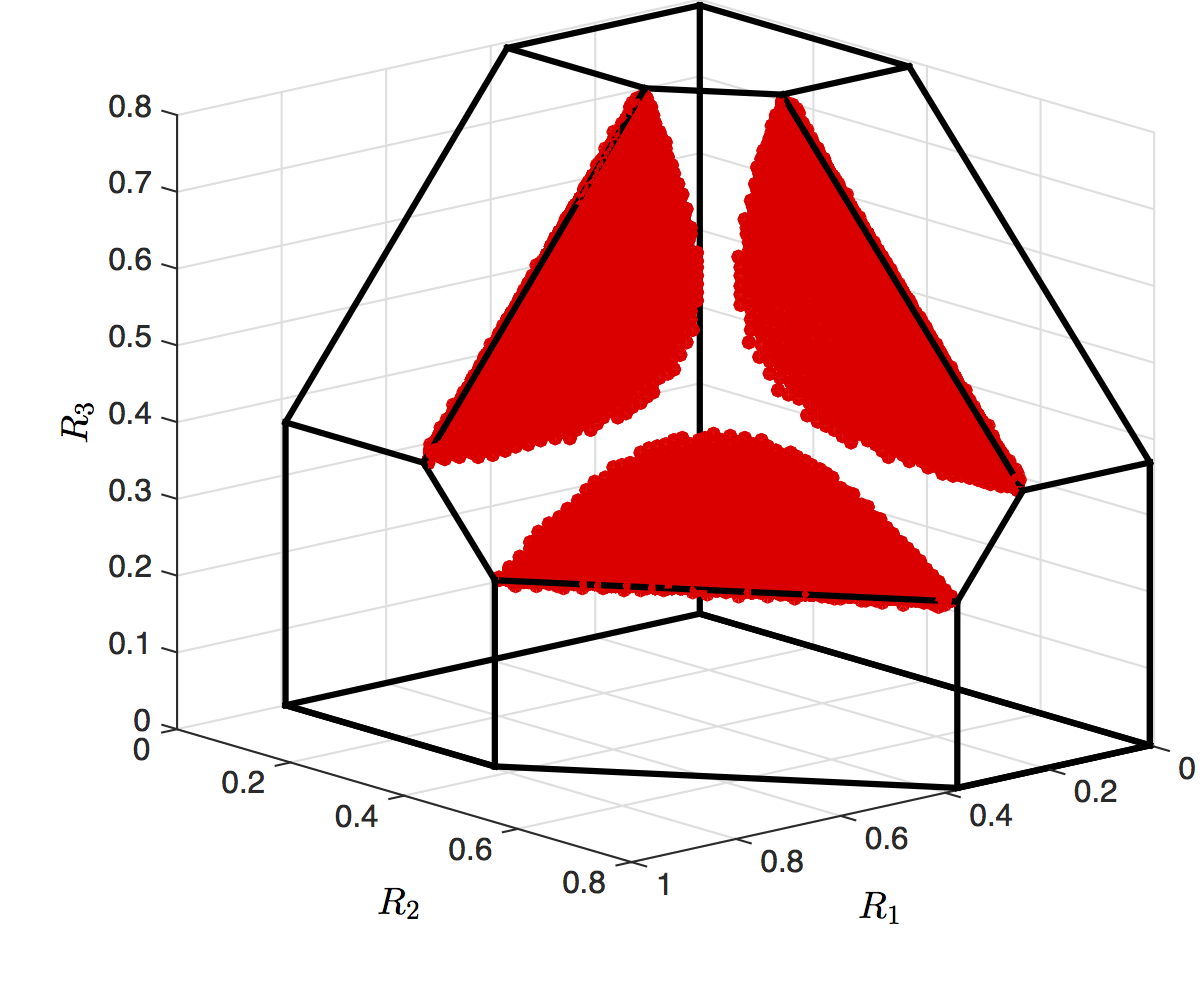}
\label{fig:3user_p2}}} \caption{The achievable rate region (red part) in Theorem \ref{thm:K_MAC} for a symmetric $3$-user Gaussian MAC with $h_k=1$ for $k=1,2,3$ and different powers $P$.} \label{fig:3user} 
\end{figure*}

\subsection{The symmetric capacity for the symmetric Gaussian MAC}\label{sec:symmetric_K_MAC}
As it is  difficult to obtain a complete description of the achievable rate region for a $K$-user MAC, in this section we investigate the simple symmetric channel where all the channel gains are the same.  In this case we can absorb the channel gain into the power constraint and assume without loss of generality the channel model to be
\begin{align*}
\ve y=\sum_{k=1}^K\ve x_k+\ve z
\end{align*}
where the transmitted signal $\ve x_k$ has an average power constraint $P$. We want to see if CFMA  can achieve the symmetric capacity
\begin{align*}
C_{sym}=\frac{1}{2K}\log(1+KP).
\end{align*}
For this specific goal, we will fix our coefficient matrix to be
\begin{align}
\ve A:=\begin{pmatrix}
1 &1 &1 &\ldots &1\\
0 &1 &0 &\ldots &0\\
0 &0 &1 &\ldots &0\\
\vdots &\vdots &\vdots &\ddots &\vdots\\
0  &0 &0 &0 &1
\end{pmatrix}
\label{eq:coef_sym}
\end{align}
Namely we first decode a sum involving all codewords $\sum_{k}\ve t_k$, then decode the individual codewords one by one. Due to symmetry the order of the decoding procedure is irrelevant and  we fix it to be $\ve t_2,\ldots,\ve t_K$. As shown in Theorem \ref{thm:K_MAC}, the analysis of this problem is closely connected to the Cholesky factor  $\ve L$ defined in (\ref{eq:chol}). This connection can be made more explicit if we are interested in the symmetric capacity for the symmetric channel. 

We define 
\begin{align}
\ve C:=\begin{pmatrix}
1 &\beta_2 &\beta_3 &\ldots &\beta_{K}\\
0 &1 &0 &\ldots &0\\
0 &0 &1 &\ldots &0\\
\vdots &\vdots &\vdots &\dots &\vdots\\
0  &0 &0 &0 &1
\end{pmatrix}
\end{align}
and $\ve E$ to be the all-one matrix. Let the lower triangular matrix $\tilde{\ve L}$ denote the unique Cholesky factorization of the matrix $\ve C(\ve I-\frac{P}{1+KP}\ve E)\ve C^T$, i.e., 
\begin{align}
\ve C\left(\ve I-\frac{P}{1+KP}\ve E\right)\ve C^T=\tilde{\ve L}\tilde{\ve L}^T.
\label{eq:L_tilde}
\end{align}

\begin{proposition}[Symmetric capacity]
If there exist real numbers $\beta_2,\ldots,\beta_K\geq 1$ with $|\beta_k|\geq 1$ such that the diagonal entries of  $\tilde{\ve L}$ given in (\ref{eq:L_tilde}) are equal in amplitude i.e., $|\tilde{L}_{kk}|=|\tilde{L}_{jj}|$ for all $k,j$, then the symmetric capacity, i.e., $R_k=C_{sym}$ for all $k$, is achievable for the symmetric $K$-user Gaussian MAC. 
\label{prop:C_sym}
\end{proposition}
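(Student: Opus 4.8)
The plan is to specialize Theorem~\ref{thm:K_MAC} to the symmetric channel and show that the stated hypothesis forces every per-user rate to coincide with $C_{sym}$. First I would use that for the symmetric MAC $\ve h=(1,\ldots,1)$, so $\ve h\ve h^T=\ve E$, and by the Sherman--Morrison identity $(\ve I+P\ve h\ve h^T)^{-1}=\ve I-\tfrac{P}{1+KP}\ve E$. This is exactly the matrix sitting inside the definition of $\tilde{\ve L}$ in (\ref{eq:L_tilde}), which is what will let me relate the Cholesky factor $\ve L$ of Theorem~\ref{thm:K_MAC} (built from $\ve K_{\ve Z'}$ in (\ref{eq:cov_noise})) to the given $\tilde{\ve L}$. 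Since the rate expressions are invariant under a common rescaling of all $\beta_k$, I may take $\beta_1=1$ exactly as in the proof of Theorem~\ref{thm:clean_MAC}, so that $\ve B=\text{diag}(1,\beta_2,\ldots,\beta_K)$.

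The key algebraic observation is the factorization $\ve A\ve B=\ve B\ve C$, which I would verify row by row: the all-ones first row of $\ve A$ yields $(1,\beta_2,\ldots,\beta_K)$ in both products, while each row $k\ge 2$ equals $\beta_k\ve e_k$ in both. Substituting into (\ref{eq:cov_noise}) gives
\begin{align*}
\ve K_{\ve Z'}=P\,\ve B\,\ve C\Bigl(\ve I-\tfrac{P}{1+KP}\ve E\Bigr)\ve C^T\ve B=P\,\ve B\tilde{\ve L}\tilde{\ve L}^T\ve B=P(\ve B\tilde{\ve L})(\ve B\tilde{\ve L})^T .
\end{align*}
Because $\ve B$ is diagonal and $\tilde{\ve L}$ is lower triangular, $\ve B\tilde{\ve L}$ is lower triangular, so by uniqueness of the Cholesky factorization $\ve L=\ve B\tilde{\ve L}$ up to the signs of the diagonal entries, whence $|L_{kk}|=|\beta_k|\,|\tilde{L}_{kk}|$ for every $k$ (with $\beta_1=1$).

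I would then feed this into the rate formula of Theorem~\ref{thm:K_MAC}. For the chosen $\ve A$, user $1$ appears only in the first (all-ones) sum and each user $k\ge2$ appears only in the first sum and in the $k$-th sum, so
\begin{align*}
R_1=\tfrac12\log^+\tfrac{1}{L_{11}^2},\qquad R_k=\min\Bigl\{\tfrac12\log^+\tfrac{\beta_k^2}{L_{11}^2},\ \tfrac12\log^+\tfrac{\beta_k^2}{L_{kk}^2}\Bigr\},\quad k\ge2 .
\end{align*}
Using $|L_{kk}|=|\beta_k|\,|\tilde{L}_{kk}|$ and the equal-amplitude hypothesis $|\tilde{L}_{kk}|=|\tilde{L}_{11}|$, the two candidate constraints for user $k$ reduce to $\tfrac12\log\tfrac{\beta_k^2}{\tilde{L}_{11}^2}$ and $\tfrac12\log\tfrac{1}{\tilde{L}_{11}^2}$, and since $\beta_k^2\ge1$ the latter is the smaller; hence the minimum is the $k$-th-sum constraint and $R_k=\tfrac12\log\tfrac{1}{\tilde{L}_{kk}^2}$ for all $k$ (including $k=1$). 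The equal-amplitude hypothesis makes all these rates equal, and their sum is $\tfrac12\log\tfrac{1}{\prod_k\tilde{L}_{kk}^2}=\tfrac12\log\tfrac{1}{|\tilde{\ve L}\tilde{\ve L}^T|}$; since $|\ve C|=1$ and $\ve I-\tfrac{P}{1+KP}\ve E$ has determinant $\tfrac{1}{1+KP}$, this sum equals $C_{sum}=\tfrac12\log(1+KP)$ (equivalently, invoke the unimodular part of Theorem~\ref{thm:K_MAC}, since $|\ve A|=1$). Dividing by $K$ gives $R_k=C_{sym}$ for every $k$.

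The step I expect to be the crux is resolving the minimum in the per-user rate, i.e.\ showing that the constraint coming from the first, all-ones sum is never the binding one. This is precisely where the assumption $|\beta_k|\ge1$ enters: it guarantees $L_{kk}^2=\beta_k^2\tilde{L}_{kk}^2\ge\tilde{L}_{11}^2=L_{11}^2$, so decoding user $k$ last (after the noise prediction that strips off the earlier sums) is no more restrictive than the joint first-sum constraint, and the clean product-of-diagonals form of the sum rate survives. Everything else, namely the Sherman--Morrison inversion, the factorization $\ve A\ve B=\ve B\ve C$, and the determinant evaluation, is routine linear algebra.
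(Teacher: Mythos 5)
Your proof is correct and follows essentially the same route as the paper: it specializes Theorem~\ref{thm:K_MAC} to the symmetric channel with the coefficient matrix in (\ref{eq:coef_sym}), uses the factorization $\ve A\ve B=\ve B\ve C$ (the paper writes it as $\ve B^{-1}\ve A\ve B=\ve C$) to identify $\ve L=\ve B\tilde{\ve L}$ up to signs, and resolves the per-user minimum via $\beta_k^2\geq 1$ before invoking unimodularity of $\ve A$ for the sum rate. The only difference is presentational: you argue forward from the hypothesis, while the paper derives the conditions for symmetric capacity and shows they are equivalent to the hypothesis.
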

\begin{proof}
Recall we have $\ve B=\text{diag}(\beta_1,\beta_2,\ldots,\beta_K)$. Let $\ve A$ be as given in (\ref{eq:coef_sym}) and the channel coefficients $\ve h$  be the all-one vector. Substituting them into (\ref{eq:cov_noise}), (\ref{eq:chol}) gives
\begin{align}
P\tilde{\ve C}\left(\ve I-\frac{P}{1+KP}\ve E\right)\tilde{\ve C}^T=P\ve L\ve L^T
\label{eq:chol_sym}
\end{align}
where 
\begin{align}
\tilde{\ve C}=\begin{pmatrix}
\beta_1 &\beta_2 &\beta_3 &\ldots &\beta_{K}\\
0 &\beta_2 &0 &\ldots &0\\
0 &0 &\beta_3 &\ldots &0\\
\vdots &\vdots &\vdots &\ldots &\vdots\\
0  &0 &0 &0 &\beta_K
\end{pmatrix}
\end{align}
In this case the we are interested in the Cholesky factorization $\ve L$ above. Due to the special structure of $\ve A$ chosen in (\ref{eq:coef_sym}),   Theorem \ref{thm:K_MAC} implies that the following rates are achievable 
\begin{align}
R_1&= \frac{1}{2}\log\frac{\beta_1^2}{L_{11}^2}\\
R_k&= \min\left\{\frac{1}{2}\log\frac{\beta_k^2}{L_{11}^2},\frac{1}{2}\log\frac{\beta_k^2}{L_{kk}^2}\right\}, k\geq 2
\end{align}
Using the same argument in the proof of Theorem \ref{thm:K_MAC}, it is easy to show that the sum capacity is achievable if $L_{kk}^2\geq L_{11}^2$ for all $k\geq 2$. To achieve the symmetric capacity we further require that
\begin{align}
\frac{\beta_k^2}{L_{kk}^2}=\frac{\beta_j^2}{L_{jj}^2}
\label{eq:sym_eq_L}
\end{align}
for all $k,j$. This is the same as requiring $\ve B^{-1}\ve L$ to have  diagonals equal in amplitude with $\ve L$ given in (\ref{eq:chol_sym}), or equivalently requiring the matrix $\ve B^{-1}\ve A\ve B(\ve I+P\ve h\ve h^T)^{-1}\ve B^T\ve A^T\ve B^{-T}$ having Cholesky factorization whose diagonals are equal in amplitude. We can let $\beta_1=1$ without loss of generality and it is straightforward to check that in this case $\ve B^{-1}\ve A\ve B=\ve C$.  Now the condition in (\ref{eq:sym_eq_L}) is equivalently represented as 
\begin{align}
\tilde L_{kk}^2=\tilde L_{jj}^2
\end{align}
and the requirement $L_{kk}^2\geq L_{11}^2$ for $k\geq 2$ can be equivalently written as $\beta_k^2\geq \beta_1^2=1$.
\end{proof}

We point out that the value of power $P$ plays a key role in Proposition \ref{prop:C_sym}.  It is not true that for any power constraint $P$, there exists $\beta_2,\ldots,\beta_K$  such that the equality condition in Proposition \ref{prop:C_sym} can be fulfilled. For the two user case analyzed in Section \ref{sec:MAC_2usr}, we can show that for the symmetric channel, the equality condition in Proposition \ref{prop:C_sym} can be fulfilled  if the condition (\ref{eq:AchieCap_part}) holds, which in turn requires $P\geq 1.5$ for the symmetric channel. In general  for a given $K$, we expect that there exists a threshold $P^*(K)$ such that for $P\geq P^*(K)$, we can always find $\beta_2,\ldots,\beta_K$ which satisfy the equality condition in Proposition \ref{prop:C_sym} hence achieve the symmetric capacity. This conjecture is formulated as follows.

\begin{conjec}[Achievablity of the symmetric capacity]
For any $K\geq 2$, there exists a positive number $P^*(K)$, such that for all $P\geq P^*(K)$, we can find real numbers $\beta_2,\ldots,\beta_K$, where $|\beta_k|\geq 1$ with which the diagonal entries of  $\tilde{\ve L}$ given in (\ref{eq:L_tilde}) are equal in amplitude i.e., $|\tilde{L}_{kk}|=|\tilde{L}_{jj}|$ for all $k,j$.
\label{conj}
\end{conjec}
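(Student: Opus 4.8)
The plan is to turn the equal-amplitude condition into a statement about the leading principal minors of $M := \ve C(\ve I - \gamma\ve E)\ve C^T$, where $\gamma := P/(1+KP)$. Writing $d_k$ for the determinant of the top-left $k\times k$ block of $M$ (and $d_0 := 1$), the Cholesky relation $\tilde L_{kk}^2 = d_k/d_{k-1}$ shows that $|\tilde L_{kk}| = |\tilde L_{jj}|$ for all $k,j$ is equivalent to the $d_k$ forming a geometric progression. Since $\det M = \det(\ve I - \gamma\ve E) = (1+KP)^{-1}$ does not depend on $\beta_2,\ldots,\beta_K$, the common ratio is pinned to $(1+KP)^{-1/K}$, and the goal becomes solving
\[ d_k(\beta_2,\ldots,\beta_K) = (1+KP)^{-k/K}, \qquad k = 1,\ldots,K-1, \]
a system of $K-1$ equations in the $K-1$ unknowns $\beta_2,\ldots,\beta_K$ subject to $|\beta_k|\geq 1$ (the $k=K$ equation being automatic).

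First I would study the regime $P\to\infty$. Put $\gamma = \tfrac1K - \epsilon$ with $\epsilon = [K(1+KP)]^{-1}\to 0^+$, write $\ve 1$ for the all-ones vector so that $\ve E = \ve 1\ve 1^T$, and set $\beta_j = 1 + \delta_j$. From $M = \ve C\ve C^T - \gamma(\ve C\ve 1)(\ve C\ve 1)^T$ one gets exactly $d_1 = M_{11} = Q(\delta) + \epsilon s^2$ with $s := K + \sum_j\delta_j$ and $Q(\delta) := \sum_j\delta_j^2 - \tfrac1K(\sum_j\delta_j)^2 \geq 0$, a quantity vanishing only at $\delta = 0$. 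The $k=1$ equation forces $M_{11} = (K\epsilon)^{1/K}$; since $(K\epsilon)^{1/K}\gg\epsilon$, any solution must have $Q(\delta)\to 0$, hence $\delta\to 0$. Thus the solution necessarily approaches $\beta = \ve 1$, i.e. the boundary $|\beta_k| = 1$ of the feasible region, precisely as in the proven $K=2$ case, where $\beta_2 \to 1^+$.

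The scaling $M_{11}\asymp\epsilon^{1/K}$ suggests $\delta_j = \Theta(\epsilon^{1/(2K)})$, so I would rescale and expand the minors. Taking Schur complements against the bottom block $\ve I_{k-1} - \gamma\ve E_{k-1}$ and passing to the centered variables $u_j := \delta_j - \tfrac1K\sum_l\delta_l$, one finds that for $k\geq 2$ the $\epsilon^{1/K}$-order part of $d_k$ must cancel, producing the homogeneous quadratic system
\[ \sum_{j=k+1}^K u_j^2 + \Big(\sum_{l} u_l\Big)^2 = \frac{1}{K-k+1}\Big(\sum_{j=2}^k u_j\Big)^2, \qquad k = 2,\ldots,K-1, \]
with the $k=1$ relation $\sum_j u_j^2 + (\sum_l u_l)^2 = (K\epsilon)^{1/K}$ fixing the overall scale. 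I would solve this limiting system for the direction of $u$ and then choose its sign so that all $\delta_j\geq 0$ (equivalently $\beta_j\geq 1$). Note that the solution is genuinely asymmetric in $\beta_2,\ldots,\beta_K$ despite the channel's symmetry, a consequence of the ordered Cholesky peeling.

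Finally I would promote a leading-order solution to an exact solution for all small $\epsilon$ and verify $|\beta_k|\geq 1$ there, which would identify $P^*(K)$ with the threshold at which a component first leaves the feasible set. The hard part is exactly this last step. The leading-order system is tangential rather than transverse: already for $K=3$ it collapses to the perfect square $(u_2 + 2u_3)^2 = 0$, giving $\delta_2 > 0$ but $\delta_3 = 0$ to leading order, so its Jacobian is singular and a direct implicit-function argument fails. Deciding the sign of those components that vanish at leading order (these are exactly the ones on which $|\beta_k|\geq 1$ is marginal) requires a genuine multiple-scale expansion carried to the order that both pins $d_k$ to $(K\epsilon)^{k/K}$ and resolves the signs; controlling this uniformly in $K$ is the crux, and the reason the statement is only conjectured. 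As a fallback I would attempt a topological argument (Poincar\'e--Miranda or degree theory) on a box $\{1\leq\beta_k\leq B\}$, which would avoid the explicit expansion but instead demands monotonicity and face-sign estimates for the pivot ratios that are themselves not obvious uniformly in $K$ and $P$.
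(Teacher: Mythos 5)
You should note first that the statement in question is a \emph{conjecture}: the paper does not prove it. The authors write ``We have not been able to prove this claim'' and support it only numerically (Tables \ref{tab:beta_P15} and \ref{tab:Ps}), together with the observation that the $K=2$ case follows from the analysis of Theorem \ref{thm:clean_MAC} with threshold $P^*(2)=1.5$. So there is no paper proof to measure your argument against; the only question is whether your argument settles the conjecture, and, as you say yourself, it does not. The gap is exactly where you place it: the leading-order system is degenerate precisely in those variables $\delta_k=\beta_k-1$ that vanish at leading order, and these are exactly the variables for which the constraint $|\beta_k|\ge 1$ is marginal; hence no implicit-function or transversality argument applies, the signs of those components are unresolved, and neither the higher-order multiple-scale expansion nor the Poincar\'e--Miranda fallback is actually carried out. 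A proof attempt that ends by explaining why the statement ``is only conjectured'' is a research plan, not a proof.

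That said, much of your setup is correct and goes analytically beyond what the paper records, so it is worth saying what checks out. The reduction of the equal-amplitude condition to $d_k=(1+KP)^{-k/K}$ for $k=1,\dots,K-1$, via $\tilde L_{kk}^2=d_k/d_{k-1}$ and $\det M=\det(\ve I-\gamma\ve E)=(1+KP)^{-1}$, is valid ($\ve C$ is unit upper triangular and $\ve I-\gamma\ve E\succ 0$), and for $K=2$ it reproduces the paper's analysis exactly: the $k=1$ equation becomes $K(\ve a,\beta_2)=\sqrt{1+2P}$, whose larger root is $\beta_2=\bigl(P+\sqrt{S(1+P-S)}\bigr)/(1+P)$ with $S:=\sqrt{1+2P}$, and $\beta_2\ge 1$ holds iff $S\ge 2$, i.e., iff $P\ge 1.5=P^*(2)$ --- confirming both the paper's threshold and your diagnosis that the binding issue is feasibility ($|\beta_k|\geq 1$), not solvability of the equations. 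Your identity $M_{11}=Q(\delta)+\epsilon s^2$ and the conclusion $\delta\to 0$ as $P\to\infty$ are also correct, as is the degeneracy claim for $K=3$: a direct expansion with $\delta_j=T^{1/2}x_j$, $T:=(1+3P)^{-1/3}$, shows that the second-minor equation reduces at leading order to $x_3^2=0$, which is your perfect square $(u_2+2u_3)^2=0$ since $u_2+2u_3=x_3$. However, this contradicts the general homogeneous system you display: for $K=3$ that system reads $u_3^2+(\sum_l u_l)^2=\tfrac12 u_2^2$, and since your centered variables satisfy $\sum_l u_l=0$ identically, it gives $u_3^2=\tfrac12 u_2^2$, which is not a perfect square; so the displayed general formula is wrong (or its notation inconsistent), even though the structural conclusion you draw from it happens to be correct. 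In short: a sound reformulation and a correct identification of the obstruction, but the statement remains after your argument exactly what it is in the paper --- an open conjecture, verified numerically up to $K=6$.
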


We have not been able to prove this claim. Table \ref{tab:beta_P15} gives some numerical results for the choices of $\underline{\beta}$ which achieve the symmetric capacity in a $K$-user Gaussian MAC with power constraint $P=15$ and different values of $K$. With this power constraint the claim in Conjecture \ref{conj} is numerically verified with $K$ up to $6$ in Table \ref{tab:beta_P15}. Notice that the value $\beta_k$ decreases with the index $k$ for $k\geq 2$. This is because with the  coefficient matrix $\ve A$ in (\ref{eq:coef_sym}), the decoding order of the individual users is from $2$ to $K$ (and user $1$ is decoded last). The earlier the message is decoded, the larger the corresponding $\beta$ will be.

\begin{table}[!hbt]
\renewcommand{\arraystretch}{1.3} \caption{The choice of $\underline{\beta}$ for a $K$-user Gaussian MAC with power $P=15$.}  \centering
\begin{tabular}{c||c|c|c|c|c|c} \hline
 $K$ &$\beta_1$ &  $\beta_2$ & $\beta_3$ & $\beta_4$ & $\beta_5$ &$\beta_6$\\ 
\hline\hline 2 &1 & 1.1438\\ 
\hline
3 &1	 &1.5853 &1.2582\\
\hline
4 &1 &1.6609 &1.3933 &1.1690\\
\hline
5	&1 &1.6909    &1.4626    &1.2796    &1.1034\\
\hline
6 &1 &1.6947 &1.4958 &1.3361 &1.1980 &1.0445
\end{tabular}
\label{tab:beta_P15}
 \end{table}
 
Some numerical results for $P^*(K)$ for $K$ up to $5$ is given in Table \ref{tab:Ps}. As we have seen $P^*(2)=1.5$. For other $K$ we give the interval which contains $P^*(K)$ by numerical evaluations. The interval containing  $P^*(K)$ for larger $K$ can be identified straightforwardly but the computation is time-consuming.
 \begin{table}[!hbt]
\renewcommand{\arraystretch}{1.3} \caption{The intervals containing $P^*(K)$}  \centering
\begin{tabular}{c||c} \hline
 $K$ &$P^*(K)$\\ 
\hline\hline 2 &1.5\\ 
\hline
3 & [2.23, 2.24]\\
\hline
4 &[3.74, 3.75]\\
\hline
5& [7.07, 7.08]
\end{tabular}
\label{tab:Ps}
 \end{table}

\section{The $2$-user Gaussian dirty MAC}\label{sec:dirty}
In the previous sections we  focused on the standard Gaussian multiple access channels. In this section we will consider the Gaussian MAC with interfering signals which are non-causally known at the transmitters. This channel model is called Gaussian ``dirty MAC" and is studied in \cite{philosof_lattice_2011}. Some related results are given in \cite{somekh-baruch_cooperative_2008}, \cite{kotagiri_multiaccess_2008}, 
 \cite{wang_approximate_2012}. A two-user Gaussian dirty MAC is given by
\begin{align}
\ve y=\ve x_1+\ve x_2+\ve s_1+\ve s_2+\ve z
\label{eq:doubly_dirty_mac}
\end{align}
where the channel input $\ve x_1, \ve x_2$ are required to satisfy the power constraints $\mathbb E\{\norm{\ve x_k}^2\}\leq P_k, k=1,2$ and $\ve z$ is the white Gaussian noise with unit variance per entry. The interference $\ve s_k$ is a zero-mean i.i.d. Gaussian random sequence with variance $Q_k$ for each entry, $k=1, 2$.  An important assumption is that the interference signal $\ve s_k$ is only  non-causally known to transmitter $k$. Two users need to mitigate two interference signals in a distributed manner, which makes this problem challenging. By letting $Q_1=Q_2=0$ we recover the standard Gaussian MAC. 

This problem can be seen as an extension of the well-known  dirty-paper coding problem \cite{costa_DPC_1983} to the multiple-access channels. However as shown in \cite{philosof_lattice_2011}, a straightforward extension of the usual Gelfand-Pinsker scheme \cite{Gelfand_Pinsker} is not optimal and in the limiting case when interference is very strong, the achievable rates are zero. Although the capacity region of this channel is unknown in general, it is shown in \cite{philosof_lattice_2011} that lattice codes are well-suited for this problem and give better performance than the usual ``random coding" scheme.

Now we will extend our coding scheme in previous sections to the dirty MAC. The basic idea is still to decode two linearly independent sums of the codewords. The new ingredient is to mitigate the interference $\ve s_1, \ve s_2$ in the context of lattice codes.  For a point-to-point AWGN channel with interference known non-causally at the transmitter, it has been shown that capacity can be attained with lattice codes \cite{zamir_nested_2002}.  Our coding scheme is an extension of the schemes in \cite{zamir_nested_2002} and \cite{philosof_lattice_2011}.

\begin{theorem}[Achievability for the Gaussian dirty MAC]
For the  dirty multiple access channel given in (\ref{eq:doubly_dirty_mac}), the following message rate pair is achievable
\begin{align*}
R_k=
\begin{cases}
r_k(\ve a,\underline{\gamma}, \underline{\beta}) &\quad\mbox{if }b_k=0\\
r_k(\ve b|\ve a,\underline{\gamma}, \underline{\beta}) &\quad\mbox{if }a_k=0\\
\min\{r_k(\ve a,\underline{\gamma}, \underline{\beta}), r_k(\ve b|\ve a,\underline{\gamma}, \underline{\beta})\}&\quad\mbox{otherwise}
\end{cases}
\end{align*}
for any linearly independent integer vectors $\ve a, \ve b\in\mathbb Z^{2}$ and $\underline{\gamma},\underline{\beta}\in\mathbb R^{2}$ if $r_k(\ve a,\underline{\gamma}, \underline{\beta})\geq 0$ and $r_k(\ve b|\ve a,\underline{\gamma}, \underline{\beta})>0$ for $k=1,2$, whose expressions are  given  as
\begin{align*}
r_k(\ve a,\underline{\gamma},\underline{\beta})&:=\max_{\alpha_1}\frac{1}{2}\log^+\frac{\beta_k^2 P_k}{N_1(\alpha_1,\underline{\gamma},\underline{\beta})}\\
r_k(\ve b|\ve a, \underline{\gamma},\underline{\beta})&:=\max_{\alpha_2,\lambda}\frac{1}{2}\log^+\frac{\beta_k^2 P_k}{N_2(\alpha_2,\underline{\gamma},\underline{\beta},\lambda)}
\end{align*}
with
\begin{align}
N_1(\alpha_1,\underline{\gamma},\underline{\beta})=&\alpha_1^2+\sum_{k=1}^2\Big((\alpha_1-a_k\beta_k)^2P_k\nonumber\\
&+(\alpha_1-a_k\gamma_k)^2Q_k\Big)
\label{eq:N_1}\\
N_2(\alpha_2,\underline{\gamma},\underline{\beta},\lambda)=&\alpha_2^2+\sum_{k=1}^2\bigg((\alpha_2-\lambda a_k\gamma_k-b_k\gamma_k)^2Q_k\nonumber\\
&+(\alpha_2-\lambda a_k\beta_k-b_k\beta_k)^2P_k\bigg)
\label{eq:N_2}
\end{align}

\label{thm:two_sums}
\end{theorem}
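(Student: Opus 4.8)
The plan is to mimic the proof of Theorem~\ref{thm:MAC_general} for the clean MAC, but to equip each encoder with an interference-precoding step in the spirit of lattice dirty-paper coding \cite{zamir_nested_2002} and its distributed version \cite{philosof_lattice_2011}. The parameter $\gamma_k$ will play the role of the precoding (inflation) scale for the known interference $\ve s_k$, just as $\beta_k$ scales the signal lattice. Concretely, I would transmit
\begin{IEEEeqnarray}{rCl}
\ve x_k=\left[\ve t_k/\beta_k+\ve d_k-(\gamma_k/\beta_k)\ve s_k\right]\mode\Lambda_k^s/\beta_k,
\end{IEEEeqnarray}
where $\ve d_k$ is a dither uniform over $\mathcal V_k^s/\beta_k$. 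By the Crypto Lemma this makes $\ve x_k$ uniform over the scaled Voronoi region and independent of $(\ve t_k,\ve s_k)$, hence of power $P_k$, exactly as in \cite{zamir_nested_2002}. Applying the scaling identity $Q_\Lambda(\beta\ve x)=\beta Q_{\Lambda/\beta}(\ve x)$ as in steps (a),(b) of the clean proof gives $\beta_k\ve x_k=\tilde{\ve t}_k+\beta_k\ve d_k-\gamma_k\ve s_k$, where $\tilde{\ve t}_k:=\ve t_k-Q_{\Lambda_k^s}(\ve t_k+\beta_k\ve d_k-\gamma_k\ve s_k)\in\Lambda$.

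For the first sum I would form $\tilde{\ve y}_1:=\alpha_1\ve y-\sum_k a_k\beta_k\ve d_k$. Substituting $\ve y$ from (\ref{eq:doubly_dirty_mac}) and the identity above, and separating out $\sum_k a_k\tilde{\ve t}_k$, yields $\tilde{\ve y}_1=\sum_k a_k\tilde{\ve t}_k+\tilde{\ve z}_1$ with equivalent noise
\begin{IEEEeqnarray}{rCl}
\tilde{\ve z}_1=\sum_k(\alpha_1 h_k-a_k\beta_k)\ve x_k+\sum_k(\alpha_1-a_k\gamma_k)\ve s_k+\alpha_1\ve z.
\end{IEEEeqnarray}
The crucial feature is that the known interference does not cancel outright: it leaves a residual $\sum_k(\alpha_1-a_k\gamma_k)\ve s_k$, contributing $(\alpha_1-a_k\gamma_k)^2Q_k$ per dimension, so that collecting the three independent contributions gives precisely the per-dimension variance $N_1$ in (\ref{eq:N_1}). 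Because the dithers render $\tilde{\ve z}_1$ independent of $\sum_k a_k\tilde{\ve t}_k$, the lattice-decoding argument of \cite[Thm.~5]{NazerGastpar_2011} applies verbatim and the sum $\sum_k a_k\tilde{\ve t}_k$ is recoverable whenever $r_k<\frac12\log^+(\beta_k^2P_k/N_1)$; maximizing over $\alpha_1$ gives $r_k(\ve a,\underline{\gamma},\underline{\beta})$.

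For the second sum I would use successive cancellation as in the clean proof, but correcting for the fact that the receiver does \emph{not} know $\ve s_k$ and therefore cannot reconstruct $\sum_k a_k\beta_k\ve x_k$ alone. Instead, from the decoded $\ve u_1=\sum_k a_k\tilde{\ve t}_k$ and the known dithers it can form the signal-plus-interference combination $\ve v_1:=\ve u_1+\sum_k a_k\beta_k\ve d_k=\sum_k a_k\beta_k\ve x_k+\sum_k a_k\gamma_k\ve s_k$. Forming $\tilde{\ve y}_2:=\alpha_2\ve y-\lambda\ve v_1-\sum_k b_k\beta_k\ve d_k$ and extracting $\sum_k b_k\tilde{\ve t}_k$ leaves $\tilde{\ve y}_2=\sum_k b_k\tilde{\ve t}_k+\tilde{\ve z}_2$ with $\ve x_k$-coefficient $(\alpha_2 h_k-\lambda a_k\beta_k-b_k\beta_k)$ and $\ve s_k$-coefficient $(\alpha_2-\lambda a_k\gamma_k-b_k\gamma_k)$, whose variance is $N_2$ in (\ref{eq:N_2}). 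Optimizing over $\alpha_2$ and $\lambda$ gives $r_k(\ve b|\ve a,\underline{\gamma},\underline{\beta})$. Invoking Lemma~\ref{lemma:comp_message} on the two linearly independent sums then produces the claimed rate pair, and the $a_k=0$ and $b_k=0$ cases are handled exactly as in Theorem~\ref{thm:MAC_general} (a vanishing coefficient removes the corresponding rate constraint).

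I expect the main obstacle to be the distributed interference term rather than any new lattice machinery. Since $\ve s_1,\ve s_2$ are precoded independently at the two encoders and appear in the sums only through the integer coefficients $a_k,b_k$, the residuals $\sum_k(\alpha_1-a_k\gamma_k)\ve s_k$ and $\sum_k(\alpha_2-\lambda a_k\gamma_k-b_k\gamma_k)\ve s_k$ cannot in general be driven to zero simultaneously; the art is co-optimizing $\gamma_k$ with the receiver parameters $\alpha_1,\alpha_2,\lambda$ against both the interference residual and the channel-mismatch residual $(\alpha_1 h_k-a_k\beta_k)$. The one genuinely delicate point to verify is that the precoding keeps $\tilde{\ve t}_k$ independent of the effective noise even though the noise now contains the non-causally known $\ve s_k$; this is secured by the Crypto Lemma as in \cite{zamir_nested_2002}, after which the Gaussian lattice-decoding bound of \cite{NazerGastpar_2011} can be applied directly.
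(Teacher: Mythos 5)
Your proposal is correct and takes essentially the same route as the paper's proof: the same $\gamma_k$-precoded dithered encoding, the same derivation of $\tilde{\ve y}_1$ with residual interference $\sum_k(\alpha_1-a_k\gamma_k)\ve s_k$, and a successive-cancellation step that is identical to the paper's up to reparametrization (the paper forms $\alpha_2'\ve y+\lambda\bigl(\tilde{\ve y}_1-\sum_k a_k\tilde{\ve t}_k\bigr)-\sum_k b_k\beta_k\ve d_k$, which coincides with your $\alpha_2\ve y-\lambda\ve v_1-\sum_k b_k\beta_k\ve d_k$ after setting $\alpha_2=\alpha_2'+\lambda\alpha_1$). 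The one detail worth adding is that, since the quantization offset in $\tilde{\ve t}_k$ now involves the unknown $\ve s_k$, the receiver recovers the codeword via $\ve t_k=[\tilde{\ve t}_k]\mode\Lambda_k^s$ rather than by reconstructing the offset from the dithers, as the paper notes at the end of its proof.
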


\begin{proof}
Let $\ve t_k$ be the lattice codeword of user $k$ and $\ve d_k$ the dither uniformly distributed in $\mathcal V_k^s/\beta_k$. The channel input is given as
\begin{align*}
\ve x_k=[\ve t_k/\beta_k+\ve d_k-\gamma_k \ve s_k/\beta_k]\mod\Lambda_k^s/\beta_k 
\end{align*}
for some $\gamma_k$ to be determined later. In Appendix \ref{sec:appen_dirty} we show that  with the channel output $\ve y$ we can form
\begin{IEEEeqnarray}{rCl}
\tilde {\ve y}_1&:=&\tilde {\ve z}_1+\sum_k a_{k} \tilde{\ve t}_k +\sum_k(\alpha_1-a_k\gamma_k)\ve s_k
\label{eq:y1_dirty}
\end{IEEEeqnarray}
where $\alpha_1$ is some real numbers to be optimized later and we define $\tilde{\ve t}_k :=\ve t_k-Q_{\Lambda_k^s}(\ve t_k+\beta_k\ve d_k-\gamma_k \ve s_k)$ and  $\tilde{\ve z}_1:=\sum_k(\alpha_1 -a_{k}\beta_k)\ve x_k+\alpha_1\ve z$. 
Due to the nested lattice construction we have $\tilde{\ve t}_k\in\Lambda$. Furthermore the term $\tilde{\ve z}_1 +\sum_k(\alpha_1-a_k\gamma_k)\ve s_k$ is independent of the sum $\sum_k a_k\tilde{\ve t}_k$ thanks to the dither and can be seen as the equivalent noise having  average power per dimension  $N_1(\alpha,\underline{\gamma},\underline{\beta})$ in (\ref{eq:N_1}) for $k=1, 2$.
In order to decode the integer sum $\sum_k a_k\tilde{\ve t}_k$ we require
\begin{align}
r_k<r_k(\ve a,\underline{\gamma},\underline{\beta}):=\max_{\alpha_1}\frac{1}{2}\log^+\frac{\beta_k^2 P_k}{N_1(\alpha_1,\underline{\gamma},\underline{\beta})}
\label{eq:rate_a}
\end{align}
Notice this constraint on $R_k$ is applicable only if $a_k\neq 0$.

If we can decode $\sum_ka_k\tilde{\ve t}_k$ with positive rate, the idea of successive interference cancellation can be applied.  We show in Appendix \ref{sec:appen_dirty} that for decoding the second sum we can form
\begin{IEEEeqnarray}{rCl}
\tilde{\ve y}_2&:=&\tilde{\ve z}_2+\sum_k(\alpha_2-\lambda a_k\gamma_k-b_k\gamma_k)\ve s_k+\sum_kb_k\tilde{\ve t}_k
\label{eq:y2_dirty}
\end{IEEEeqnarray}
where  $\alpha_2$ and $\lambda$ are two real numbers to be optimized later and we define $\tilde{\ve z}_2:=\sum_k(\alpha_2 -\lambda a_k\beta_k-b_k\beta_k)\ve x_k+\alpha_2\ve z$.  Now the equivalent noise $\tilde{\ve z}_2+\sum_k(\alpha_2-\lambda a_k\gamma_k-b_k\gamma_k)\ve s_k$  has average power per dimension $N_2(\alpha_2,\underline{\gamma},\underline{\beta},\lambda)$ given in (\ref{eq:N_2}). Using lattice decoding we can show the following rate pair for decoding $\sum_kb_k\tilde{\ve t}_k$ is achievable
\begin{IEEEeqnarray}{rCl}
r_k<r_k(\ve b|\ve a, \underline{\gamma},\underline{\beta}):=\max_{\alpha_2,\lambda}\frac{1}{2}\log^+\frac{\beta_k^2 P_k}{N_2(\alpha_2,\underline{\gamma},\underline{\beta},\lambda)}
\label{eq:rate_b}
\end{IEEEeqnarray}
Again the lattice points $\tilde{\ve t}_k$ can be solved from the two sums if $\ve a$ and $\ve b$ are linearly independent, and $\ve t_k$ is recovered by the modulo operation $\ve t_k=[\tilde{\ve t}_k]\mode\Lambda_k^s$ even if $\ve s_k$ is not known at the receiver.  If we have $b_k=0$, the above constraint does not apply to $R_k$.
\end{proof}

\subsection{Decoding one integer sum}\label{subsec:one_sum}
We revisit the results obtained in \cite{philosof_lattice_2011} and show they can be obtained in our framework in a unified way. 
\begin{theorem}[\cite{philosof_lattice_2011} Theorem 2, 3]
For the  dirty multiple access channel given in (\ref{eq:doubly_dirty_mac}), we have the following achievable rate region:
\begin{align*}
&R_1+R_2\\
&=\begin{cases}
\frac{1}{2}\log(1+\min\{P_1,P_2\})\: &\text{if }\sqrt{P_1P_2}-\min\{P_1,P_2\}\geq 1\\
\frac{1}{2}\log^+\left(\frac{P_1+P_2+1}{2+(\sqrt{P_1}-\sqrt{P_2})^2}\right)\:&\text{otherwise}
\end{cases}
\end{align*}
\label{thm:single_sum}
\end{theorem}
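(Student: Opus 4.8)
The plan is to specialize the single-sum achievability of Theorem~\ref{thm:two_sums} to the coefficient vector $\ve a=(1,1)$ and then to carry out the optimization over the free design parameters $\alpha_1,\gamma_1,\gamma_2,\beta_1,\beta_2$. With a single decoded sum the receiver obtains $\ve t_1+\ve t_2$, and under the nested-lattice shaping the total rate carried by this sum, $R_1+R_2$, equals the computation rate at which $\ve t_1+\ve t_2$ is reliably decodable. I would therefore take as the achievable sum rate
\begin{equation*}
R_1+R_2=\frac{1}{2}\log^+\frac{\min_k \beta_k^2 P_k}{N_1(\alpha_1,\underline{\gamma},\underline{\beta})},
\end{equation*}
with $N_1$ as in (\ref{eq:N_1}); the $\min_k$ reflects that both participating codewords must be decodable, so the weaker scaled power $\beta_k^2P_k$ (equivalently, the common coarse shaping lattice) is the binding constraint. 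The problem then reduces to maximizing this ratio.

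The first key step is interference cancellation. With $\ve a=(1,1)$ the interference contributes $(\alpha_1-\gamma_1)^2Q_1+(\alpha_1-\gamma_2)^2Q_2$ to $N_1$. Since the statement must hold for \emph{arbitrary} $Q_1,Q_2$ and the target is independent of them, I would annihilate these terms by setting $\gamma_1=\gamma_2=\alpha_1$. The point specific to the doubly-dirty MAC is that a \emph{single} scaling $\alpha_1$ neutralizes \emph{both} interferences at once, because each transmitter precodes its own $\ve s_k$ with the matched $\gamma_k$. After this reduction,
\begin{equation*}
N_1=\alpha_1^2+(\alpha_1-\beta_1)^2P_1+(\alpha_1-\beta_2)^2P_2,
\end{equation*}
which no longer depends on $Q_1,Q_2$.

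The second step is a scalar optimization. The ratio is invariant under a common scaling of $(\alpha_1,\beta_1,\beta_2)$, so I would fix $\alpha_1=1$ and maximize $\rho=\min_k(\beta_k^2P_k)/N_1$ over $\beta_1,\beta_2$. Taking $P_1\le P_2$ without loss of generality, the minimum is realized by user $1$, and for fixed $\beta_1$ the stronger user's noise term is smallest at $\beta_2=1$, subject to the feasibility constraint $\beta_2^2P_2\ge\beta_1^2P_1$ that keeps user $1$ as the $\min$. Optimizing $\beta_1$ under the unconstrained choice $\beta_2=1$ gives $\beta_1=1+1/P_1$; substituting shows this is feasible exactly when $(1+1/P_1)^2\le P_2/P_1$, i.e. $\sqrt{P_1P_2}-P_1\ge 1$, which is precisely the threshold $\sqrt{P_1P_2}-\min\{P_1,P_2\}\ge1$ of the first branch. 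In that regime one computes $\rho=1+P_1$, giving $R_1+R_2=\frac{1}{2}\log(1+\min\{P_1,P_2\})$. Otherwise the constraint binds, $\beta_1\sqrt{P_1}=\beta_2\sqrt{P_2}$, and the remaining single-variable maximization (a ratio of quadratics in $1/s$ with $s=\beta_1\sqrt{P_1}$) yields $\rho=(1+P_1+P_2)/(2+(\sqrt{P_1}-\sqrt{P_2})^2)$, the second branch, with $\log^+$ absorbing the degenerate regime $\sqrt{P_1P_2}\le1$.

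I expect the main obstacle to be not the calculus but the modeling step that legitimizes the single expression $\frac{1}{2}\log^+(\min_k\beta_k^2P_k/N_1)$ as the achievable \emph{sum} rate: one must argue that both individual messages are recoverable from the single decoded sum at a total rate equal to this computation rate, rather than the naive $r_1+r_2$, which would overcount by a factor of two in the symmetric case. This requires specifying the nested shaping lattices so that the weaker scaled power sets the common decodable rate while the individual rates are split along the dominant face. Once that reduction is in place, the two regimes and the exact boundary $\sqrt{P_1P_2}-\min\{P_1,P_2\}=1$ follow from the routine optimization above, and matching them to the stated closed forms is a direct substitution.
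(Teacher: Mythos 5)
Your calculus is sound and in fact mirrors the paper's own optimization almost exactly: the interference-cancelling choice $\gamma_1=\gamma_2=\alpha_1$, the critical point $\beta_1=1+1/P_1$, the feasibility threshold $\sqrt{P_1P_2}-\min\{P_1,P_2\}\geq 1$, and the constrained boundary case yielding $(1+P_1+P_2)/(2+(\sqrt{P_1}-\sqrt{P_2})^2)$ all reappear in the paper's proof (there phrased as an optimization over $r=\beta_1/\beta_2$ subject to $r\geq\sqrt{P_2/P_1}$). The problem is the foundation you build on, which you yourself flag as ``the main obstacle'' and then never supply: the claim that both messages are recoverable from a \emph{single} decoded sum at total rate $R_1+R_2=\tfrac{1}{2}\log^+\bigl(\min_k\beta_k^2P_k/N_1\bigr)$. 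This is not a routine modeling step that can be deferred; within this paper's framework it is false in general. The map $(\ve t_1,\ve t_2)\mapsto\ve t_1+\ve t_2$ is not injective on $\mathcal C_1\times\mathcal C_2$ when both codebooks carry positive rate (in the symmetric case $\mathcal C_1=\mathcal C_2$ collisions are ubiquitous: $(\ve t_1+\ve g,\ve t_2-\ve g)$ produces the same sum for any lattice shift $\ve g$ keeping both points in the codebooks), and Lemma \ref{lemma:comp_message} requires two linearly independent sums to invert for two messages. So a single decoded sum can never by itself deliver two positive-rate messages, and Theorem \ref{thm:two_sums} also does not give your starting expression: it imposes $r_k<\tfrac12\log^+(\beta_k^2P_k/N_1)$ \emph{per user}, which without an invertibility argument yields nothing about recovering the pair.

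The paper's proof sidesteps exactly this issue by a different construction: it sets $\ve t_1=\ve 0$ (so $R_1=0$), decodes the single sum
\begin{equation*}
\textstyle\sum_k \tilde{\ve t}_k=\ve t_2-Q_{\Lambda_2^s}(\ve t_2+\beta_2\ve d_2-\gamma_2\ve s_2)-Q_{\Lambda_1^s}(\beta_1\ve d_1-\gamma_1\ve s_1),
\end{equation*}
and extracts $\ve t_2$ by the operation $[\,\cdot\,]\mode\Lambda_2^s$, which kills the user-1 term precisely because of the nesting $\Lambda_1^s\subseteq\Lambda_2^s$, i.e. $\beta_1^2P_1\geq\beta_2^2P_2$. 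Note that this constraint enters for \emph{invertibility} of the sum into the one surviving message, not as the ``which term is the min'' bookkeeping in your proposal (your constraint $\beta_2^2P_2\geq\beta_1^2P_1$ is the mirror image and plays no invertibility role). The full sum-rate line in the theorem statement is then obtained by switching the roles of the two users and time-sharing, as the remark preceding the proof states; one regime additionally needs a power-reduction trick ($P_2'=(P_1+1)^2/P_1<P_2$) that your formulation absorbs automatically. To complete your route you would have to exhibit a codebook structure (e.g.\ a chain of nested lattices placing the two users' codebooks in different quotients) under which the sum map is injective and the rate split along the dominant face is realized; that construction is nontrivial, is not in the paper, and without it your first displayed equation is an unsupported assertion rather than a reduction.
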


\begin{remark}The above rate region was obtained by considering the transmitting scheme where only one user transmits at a time. In our framework, it is the same as assuming one transmitted signal, say $\ve t_1$, is set to be $\ve 0$ and known to the decoder. In this case we need only one integer sum to decode $\ve t_2$. Here we give a proof  to show the achievability for
\begin{align}
R_2=
\begin{cases}
\frac{1}{2}\log(1+P_2)&\quad\text{for } P_1\geq \frac{(P_2+1)^2}{P_2}\\
\frac{1}{2}\log(1+P_1)&\quad\text{for } P_2\geq \frac{(P_1+1)^2}{P_1}\\
\frac{1}{2}\log^+\left(\frac{P_1+P_2+1}{2+(\sqrt{P_1}-\sqrt{P_2})^2}\right)&\quad\text{otherwise}
\end{cases}
\label{eq:single_sum_r2}
\end{align}
while $R_1=0$. Theorem \ref{thm:single_sum} is obtained by showing the same result holds when we switch the two users and a time-sharing argument.
\end{remark}

\begin{proof}
Choosing $\ve a=(1,1)$ and $\gamma_1=\gamma_2=\alpha_1$ in (\ref{eq:rate_a}), we can decode the integer sum $\sum_k \tilde{\ve t}_k$ if  
\begin{IEEEeqnarray}{rCl}
r_2<r_2(\ve a,\underline{\beta})
&=&\frac{1}{2}\log\frac{P_2(1+P_1+P_2)}{r^2P_1+P_2+P_1P_2(r-1)^2}
\label{eq:r2_r}
\end{IEEEeqnarray}
by choosing the optimal $\alpha_1^*=\frac{\beta_1P_1+\beta_2P_2}{P_1+P_2+1}$ and defining $r:=\beta_1/\beta_2$. An important observation is that in order to extract $\ve t_2$ from the integer sum (assuming $\ve t_1=\ve 0$)
\begin{align*}
\sum_k \tilde{\ve t}_k=\ve t_2-Q_{\Lambda_2^s}(\ve t_2+\beta_2\ve d_2-\gamma_2\ve s_2)-Q_{\Lambda_1^s}(\beta_1\ve d_1-\gamma_1\ve s_1), 
\end{align*}
one sufficient condition is $\Lambda_1^s\subseteq\Lambda_2^s$. Indeed, due to the fact that $[\ve x]\mode\Lambda_2^s=0$ for any $\ve x\in\Lambda_1^s\subseteq\Lambda_2^s$, we are  able to recover $\ve t_2$ by performing $[\sum_k\tilde{\ve t}_k]\mode\Lambda_2^s$ if $\Lambda_1^s\subseteq\Lambda_2^s$. This requirement amounts to the condition  $\beta_1^2P_1\geq \beta_2^2P_2$ or equivalently $r\geq \sqrt{P_2/P_1}$. Notice if we can extract $\ve t_2$ from just one sum $\sum_k\tilde{\ve t}_k$ (with $\ve t_1$ known), then the computation rate $R_2^{\ve a}=r_2(\ve a,\underline{\beta})$ will also be the message rate $R_2=r_2(\ve a,\underline{\beta})$.

Taking derivative w. r. t. $r$ in (\ref{eq:r2_r}) gives the critical point
\begin{align}
r^*=\frac{P_2}{P_2+1}
\end{align}
If $r^*\geq \sqrt{P_2/P_1}$ or equivalently $P_1\geq\frac{(P_2+1)^2}{P_2}$, substituting $r^*$ in (\ref{eq:r2_r}) gives
\begin{align*} 
R_2=\frac{1}{2}\log(1+P_2)
\end{align*}
If $r^*\leq  \sqrt{P_2/P_1}$ or equivalently $P_1\leq \frac{(P_2+1)^2}{P_2}$,  $R_2$ is non-increasing in $r$ hence we should choose $r=\sqrt{P_2/P_1}$ to obtain
\begin{align*}
R_2=\frac{1}{2}\log^+\left(\frac{1+P_1+P_2}{2+(\sqrt{P_2}-\sqrt{P_1})^2}\right)
\end{align*}
To show the result for the case $P_2\geq \frac{(P_1+1)^2}{P_1}$, we set the transmitting power of user $2$ to be $P_2'=\frac{(P_1+1)^2}{P_1}$ which is smaller or equal to its full power $P_2$ under this condition. In order to satisfy the nested lattice constraint $\Lambda_1^s\subseteq\Lambda_2^s$ we also need $\beta_1^2P_1\leq\beta_2^2P_2'$ or equivalently $r\geq \sqrt{P_2'/P_1}$. By replacing $P_2$ by the above $P_2'$ and choosing $r=\sqrt{P_2'/P_1}$ in (\ref{eq:r2_r}) we get
\begin{align}
R_2=\frac{1}{2}\log(1+P_1)
\end{align}
Interestingly under this scheme,  letting the transmitting power to be $P_2'$ gives a larger achievable rate than using the full power $P_2$  in this power regime.  
\end{proof}
An outer bound on the capacity region  given in \cite[Corollary 2]{philosof_lattice_2011} states that the sum rate capacity should satisfy 
\begin{align}
R_1+R_2\leq \frac{1}{2}\log(1+\min\{P_1,P_2\})
\end{align}
 for strong interference (both $Q_1, Q_2$ go to infinity). Hence in the strong interference case, the above achievability result is either optimal (when $P_1, P_2$ are not too close) or only a constant away from the capacity region (when $P_1, P_2$ are close, see \cite[Lemma 3]{philosof_lattice_2011}). However the rates in Theorem \ref{thm:single_sum} are strictly suboptimal for  general interference strength as we will show in the sequel.

\subsection{Decoding two integer sums}
Now we consider decoding two sums for the Gaussian dirty MAC by evaluating the achievable rates stated in Theorem \ref{thm:two_sums}.  Unlike  the case of the standard Gaussian MAC studied in Section \ref{sec:MAC_2usr}, here we need to optimize over $\underline{\gamma}$ for given $\ve a, \ve b$ and $\underline{\beta}$, which does not have a closed-form solution due to the $\min\{\cdot\}$ operation. Hence in this section we resort to numerical methods for evaluations. To give an example of the advantage for decoding two sums, we show achievable rate regions in Figure \ref{fig:RateRegion_doublyDirty} for a  dirty MAC where $P_1=Q_1=10$ and $P_2=Q_2=2$. We see in the case when the transmitting power and interference strength are comparable, decoding two sums gives a significantly larger achievable rate region. In this example we choose the coefficients to be $\ve a=(a_1,1),\ve b=(1,0)$  or $\ve a=(1,a_2),\ve b=(1,0)$  for $a_1,a_2=1,\ldots, 5$ and optimize over parameters $\underline{\gamma}$. We also point out that unlike the case of the clean MAC where it is best to choose $a_1, a_2$ to be $1$, here choosing coefficients $a_1, a_2$ other than $1$ gives larger achievable rate regions in general.

\begin{figure}[!h]
\centering
\includegraphics[scale=0.45]{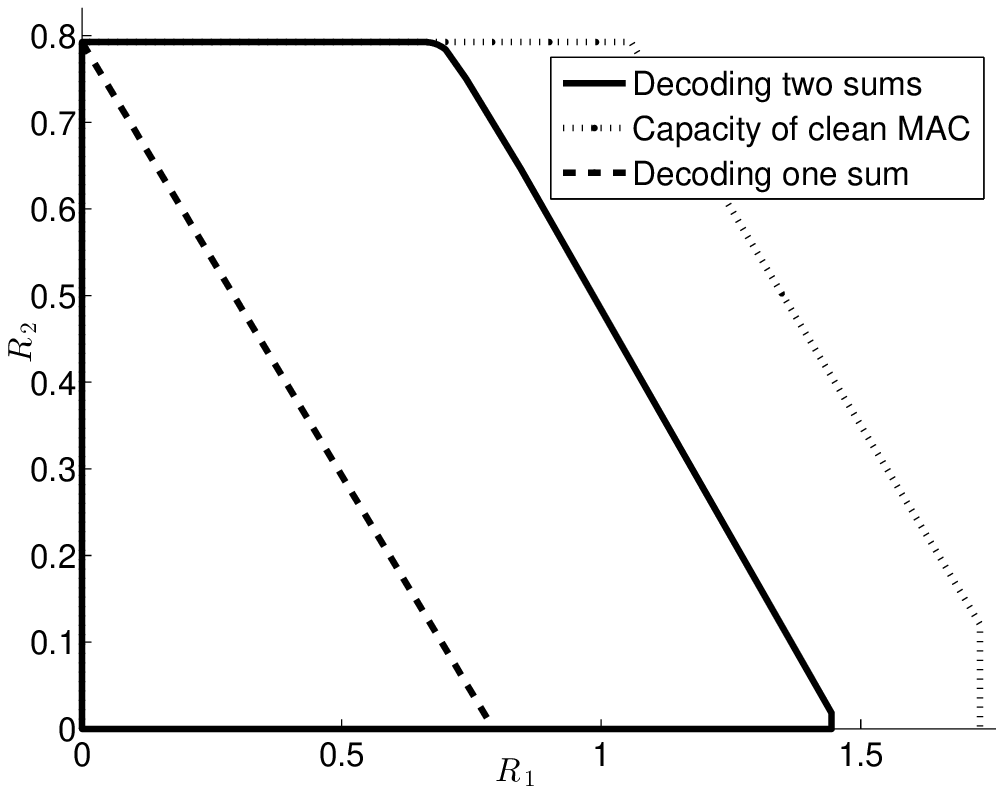}
\caption{We consider a  dirty MAC with $P_1=Q_1=10$ and $P_2=Q_2=2$. The dashed line is the achievable rate region given in Theorem \ref{thm:single_sum} from \cite{philosof_lattice_2011} which corresponds to decoding only one sum. The solid line gives the achievable rate region in Theorem \ref{thm:two_sums} by decoding two sums with the coefficients $\ve a=(a_1,1),\ve b=(1,0)$  or $\ve a=(1,a_2),\ve b=(1,0)$  for $a_1,a_2=1,\ldots, 5$ and optimizing over parameters $\underline{\gamma}$.}
\label{fig:RateRegion_doublyDirty}
\end{figure}

Different from the point-to-point Gaussian channel with interference known at the transmitter,  it is no longer possible to eliminate all interference completely without diminishing the capacity region for the  dirty MAC.  The proposed scheme provides  a way of trading off between eliminating the interference and treating it as noise. Figure \ref{fig:compare} shows the symmetric rate of the  dirty MAC as a function of interference strength. When the interference is weak, the proposed scheme balances the residual interference $\ve s_1, \ve s_2$ in $N_1$ and $N_2$  by optimizing the parameters $\underline{\gamma}$,  see Eqn. (\ref{eq:N_1}) and Eq. (\ref{eq:N_2}).  This is better than only decoding one sum in which we completely cancel out the interference.

\begin{figure}[!h]
\centering
\includegraphics[scale=0.55]{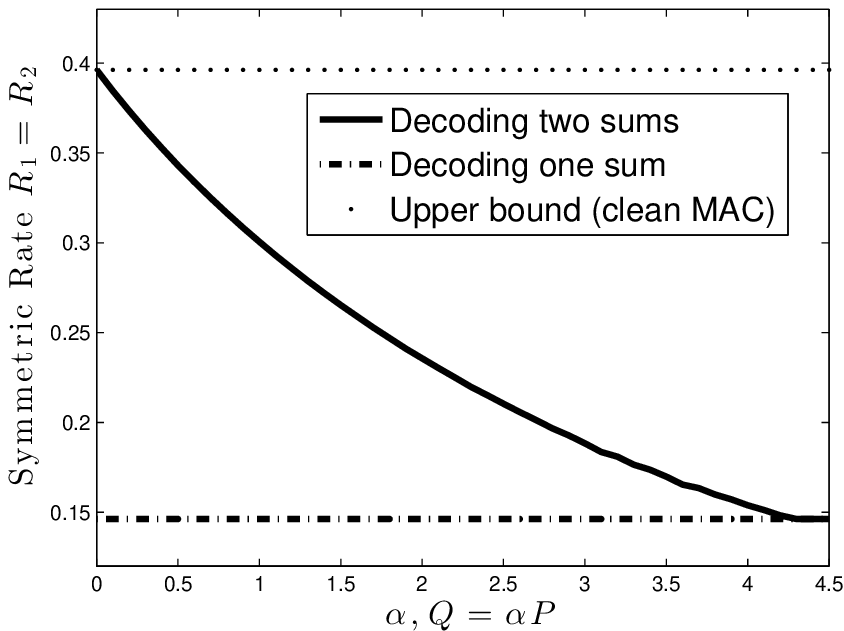}
\caption{We consider a  dirty MAC with $P_1=P_2=1$ and $Q_1=Q_2=\alpha P_1$ with different $\alpha$ varying from $[0,4.5]$. The vertical axis denotes the maximum symmetric rate $R_1=R_2$. The dotted line is the maximum symmetric rate $1/4\log(1+P_1+P_2)$ for a clean MAC as an upper bound. The dashed line gives the achievable symmetric rate in Theorem \ref{thm:single_sum} from \cite{philosof_lattice_2011} and the solid line depicts the symmetric rate in Theorem \ref{thm:two_sums} by decoding two sums. }
\label{fig:compare}
\end{figure}

As mentioned in the previous subsection, decoding one integer sum is near-optimal in the limiting case when both interference signals $\ve s_1, \ve s_2$ are very strong, i.e., $Q_1,Q_2\rightarrow \infty$. It is natural to ask if we can do even better by decoding two sums in this case.  It turns out in the limiting case we are not able to decode two linearly independent sums with this scheme. 
\begin{lemma}[Only one sum for high interference]
For the $2$-user  dirty MAC  in (\ref{eq:doubly_dirty_mac}) with $Q_1, Q_2\rightarrow\infty$, we have $r_k(\ve a,\underline{\gamma},\underline{\beta})=r_k(\ve b|\ve a,\underline{\gamma},\underline{\beta})=0, k=1,2 $ for any linearly independent $\ve a, \ve b$ where $a_k\neq 0$, $k=1,2$.
\end{lemma}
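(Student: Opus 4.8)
The plan is to show that as $Q_1,Q_2\to\infty$ the only way to keep \emph{both} equivalent noise powers $N_1$ (see (\ref{eq:N_1})) and $N_2$ (see (\ref{eq:N_2})) bounded is to annihilate the residual interference in each of the two decoding stages, and that the two resulting linear conditions on $\underline{\gamma}$ are mutually incompatible precisely when $\ve a$ and $\ve b$ are linearly independent. Since a strictly positive limiting rate $r_k$ requires a bounded $N$ (otherwise the ratio inside $\log^+$ tends to $0$), this incompatibility rules out decoding two independent sums with positive rate, so the message rate obtained from two sums vanishes.

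First I would analyze $N_1$. As $Q_1,Q_2\to\infty$ the terms $(\alpha_1-a_k\gamma_k)^2Q_k$ dominate, so $\min_{\alpha_1}N_1\to\infty$ unless a single $\alpha_1$ makes $\alpha_1-a_1\gamma_1=\alpha_1-a_2\gamma_2=0$. Because $a_k\neq 0$ for $k=1,2$, such an $\alpha_1$ exists only if $a_1\gamma_1=a_2\gamma_2$ (and then $\alpha_1$ equals this common value). Hence $r_k(\ve a,\underline{\gamma},\underline{\beta})>0$ for $k=1,2$ can hold in the limit only when $a_1\gamma_1=a_2\gamma_2$.

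Next I would repeat the reasoning for $N_2$. Boundedness of the interference terms $(\alpha_2-\lambda a_k\gamma_k-b_k\gamma_k)^2Q_k$ forces $\alpha_2=(\lambda a_k+b_k)\gamma_k$ for both $k$; subtracting the $k=1$ and $k=2$ equations gives $\lambda(a_1\gamma_1-a_2\gamma_2)+(b_1\gamma_1-b_2\gamma_2)=0$. The key observation is that the coefficient of the free parameter $\lambda$ is exactly $a_1\gamma_1-a_2\gamma_2$, which already vanishes by the stage-one condition, so the extra degree of freedom $\lambda$ cannot rescue the second stage, and we are left with the additional constraint $b_1\gamma_1=b_2\gamma_2$. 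The two constraints $a_1\gamma_1=a_2\gamma_2$ and $b_1\gamma_1=b_2\gamma_2$ form a homogeneous linear system in $(\gamma_1,\gamma_2)$ whose determinant equals $-(a_1b_2-a_2b_1)$, nonzero precisely because $\ve a,\ve b$ are linearly independent; hence the only solution is $\gamma_1=\gamma_2=0$. But with $\gamma_1=\gamma_2=0$ the term $\sum_k(\alpha_1-a_k\gamma_k)^2Q_k=\alpha_1^2(Q_1+Q_2)$ in $N_1$ again forces $\alpha_1=0$ in the limit, leaving $N_1=\sum_k a_k^2\beta_k^2P_k>\beta_k^2P_k$ since the $a_k$ are nonzero integers. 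Thus $r_k(\ve a,\underline{\gamma},\underline{\beta})=0$, contradicting positivity; so no choice of $\underline{\gamma},\underline{\beta}$ makes both sums decodable with positive rate.

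I expect the main obstacle to be the careful bookkeeping of the parameter $\underline{\gamma}$, which is \emph{shared} across the two decoding stages: the crux is arguing cleanly that the first-stage condition $a_1\gamma_1=a_2\gamma_2$ neutralizes the $\lambda$ degree of freedom in the second stage, so that the two stages impose genuinely independent linear constraints on $\underline{\gamma}$. The remaining care is in making the limiting statements precise, namely that a strictly positive asymptotic rate requires exact cancellation of the $Q_k$-coefficients rather than merely small residuals.
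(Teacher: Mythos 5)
Your proof is correct and takes essentially the same route as the paper: positive rates in the limit force exact cancellation of the $Q_k$-coefficients, i.e.\ $\alpha_1=a_k\gamma_k$ and $\alpha_2=(\lambda a_k+b_k)\gamma_k$ for $k=1,2$, and the paper derives the same contradiction by row-reducing this homogeneous linear system in $(\alpha_1,\alpha_2,\gamma_1,\gamma_2)$ and observing that a non-trivial solution requires $a_2b_1=a_1b_2$, contradicting linear independence. Your sequential-substitution version, including the explicit check that the trivial solution $\gamma_1=\gamma_2=0$ forces $\alpha_1=0$ and hence zero rate (a point the paper handles only implicitly by insisting on non-trivial solutions), is a valid and slightly more self-contained rendering of the same argument.
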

\begin{proof}
The  rate expressions in (\ref{eq:rate_a}) and (\ref{eq:rate_b}) show that we need to eliminate all terms involving $Q_k$ in the equivalent noise $N_1$ in (\ref{eq:N_1}) and $N_2$ in (\ref{eq:N_2}),  in order to have a positive rate when $Q_1,Q_2\rightarrow\infty$. Consequently we need $\alpha_1-a_k\gamma_k=0$ and $\alpha_2-\lambda a_k\gamma_k-b_k\gamma_k=0$ for $k=1,2$.  
or equivalently
\begin{align}
\begin{pmatrix}
1 &0 &-a_1 &0\\
0 &1 &-\lambda a_1-b_1 &0\\
1 &0 &0 &-a_2\\
0 &1 &0 &-\lambda a_2-b_2
\end{pmatrix}
\begin{pmatrix}
\alpha_1\\
\alpha_2\\
\gamma_1\\
\gamma_2
\end{pmatrix}
=\ve 0
\end{align}
Performing elementary row operations gives the following equivalent system
\begin{align*}
\begin{pmatrix}
1 &0 &-a_1 &0\\
0 &1 &-\lambda a_1-b_1 &0\\
0 &0 &a_1 &-a_2\\
0 &0 &0 &\frac{a_2(\lambda a_1+b_1)}{a_1}-\lambda a_2-b_2
\end{pmatrix}
\begin{pmatrix}
\alpha_1\\
\alpha_2\\
\gamma_1\\
\gamma_2
\end{pmatrix}
=\ve 0
\end{align*}
To have non-trivial solutions of $\underline{\alpha}$ and $\gamma$ with $a_1\neq 0$, we must have $\frac{a_2(\lambda a_1+b_1)}{a_1}-\lambda a_2-b_2=0$, which simplifies to $a_2b_1=a_1b_2$, meaning $\ve a$ and $\ve b$ are linearly dependent.
\end{proof}

This observation suggests that when both interference signals are very strong,  the strategy in \cite{philosof_lattice_2011} to let only one user transmit at a time (section \ref{subsec:one_sum}) is the best thing to do within this framework.  However we point out that in the case when only one interference is very strong, we can still decode two independent sums with positive rates.  For example consider the system in (\ref{eq:doubly_dirty_mac}) with $\ve s_2$ being identically zero, $\ve s_1$  only known to User 1  and $Q_1\rightarrow\infty$.  In this case we can decode two linearly independent sums with  $\ve a=(1,1), \ve b=(1,0)$ or  $\ve a=(1,0), \ve b=(0,1)$. The resulting achievable rates with Theorem \ref{thm:two_sums}  is the same as that given in \cite[Lemma 9]{philosof_lattice_2011}.  Moreover, the capacity region of the dirty MAC with only one interference signal commonly known to both users \cite[VIII]{philosof_lattice_2011} can also be achieved using Theorem \ref{thm:two_sums}, by choosing $\ve a=(1,0), \ve b=(0,1)$ for example.

\section{Concluding Remarks}\label{sec:conclusion}
We have shown that the CFMA strategy is able to achieve the capacity of Gaussian multiple access channels. This coding scheme is of possible practical interests because it only uses a single-user decoder without time-sharing or rate-splitting techniques. The proposed coding scheme is a generalization of the  compute-and-forward  technique and can be applied to many other Gaussian network scenarios. However as we have seen in the $2$-user and $3$-user examples, this scheme fails to have its capacity-achieving ability if the signal-to-noise ratio is very low. The reason  lies in the fact that in this regime the computation rate pair is not large enough (recall the examples in Theorem \ref{thm:clean_MAC}).  Hence  it is interesting to ask  what is the largest possible computation rate tuple in a Gaussian MAC.  The answer to this question has implications on many other unsolved network communication problems, including the Gaussian interference channel and the Gaussian two-way relay channel.

\appendices

\section{The proof of Theorem \ref{thm:computation_rate}}\label{sec:appen_proof_computation}


A  proof of the $2$-user case of Theorem \ref{thm:computation_rate} is already contained in the proof of Theorem \ref{thm:MAC_general}. We now give a  proof for the $K$-user case.

In this case the receiver only decodes one sum, and we can choose all the fine lattices $\Lambda_k, k=1,\ldots K$ to be the same lattice, denoted as $\Lambda$. When the message (codeword) $\ve t_k$ is given to encoder $k$, it forms its channel input as follows
\begin{align}
\ve x_k=\left[\ve t_k/\beta_k+\ve d_k\right]\mode\Lambda_k^s/\beta_k
\end{align}
where the dither $\ve d_k$  is a random vector uniformly distributed in the scaled Voronoi region $\mathcal V_k^s/\beta_k$.  Notice that $\ve x_k$ is independent from $\ve t_k$ and also uniformly in $\Lambda_k^s/\beta_k$ hence  has average power $P$ for all $k$.

At the decoder we form
\begin{align*}
\tilde {\ve y}:=&\alpha\ve y-\sum_ka_{k}\beta_k\ve d_k\\
=&\sum_k a_{k}\left(\beta_k (\ve t_k/\beta_k+\ve d_k)-\beta_k Q_{\Lambda_k^s/\beta_k}(\ve t_k/\beta_k+\ve d_k)\right)\\
&-\sum_ka_{k}\beta_k\ve d_k+\tilde{\ve  z}\\
\stackrel{(a)}{=}&\tilde {\ve z}+\sum_k a_{k} (\ve t_k-Q_{\Lambda_k^s}(\ve t_k+\beta_k\ve d_k))\\
:=&\tilde {\ve z}+\sum_k a_{k} \tilde{\ve t}_k
\end{align*}
with $\tilde {\ve t}_k:=\ve t_k-Q_{\Lambda_k^s}(\ve t_k+\beta_k\ve d_k)$ and the equivalent noise 
\begin{IEEEeqnarray}{rCl}
\tilde {\ve z}:=\sum_k(\alpha h_{k}-a_{k}\beta_k)\ve x_k+\alpha\ve z
\end{IEEEeqnarray}
which is independent of $\sum_k a_{k}\tilde{\ve t}_k$ since all $\ve x_k$ are independent of $\sum_k a_{k}\tilde {\ve t}_k$ thanks to the dithers $\ve d_k$. The step $(a)$ follows because it holds $Q_{\Lambda}(\beta  X)=\beta Q_{\frac{\Lambda}{\beta}}(X)$ for any $\beta\neq 0$. 
Notice we have $\tilde {\ve t}_k\in\Lambda$ since  $\ve t_k\in\Lambda$ and $\Lambda_k^s\subseteq\Lambda$ due to the code construction.  Hence the linear combination $\sum_k a_{k}\tilde{\ve t}_k$ along belongs to the decoding lattice $\Lambda$.

The decoder uses lattice decoding to obtain $\sum_ka_{k}\tilde{\ve t}_k$ with respect to the decoding lattice $\Lambda$ by quantizing $\tilde {\ve y}$ to its nearest neighbor in  $\Lambda$. The decoding error probability  is equal to the probability that the equivalent noise $\tilde {\ve z}$ leaves the Voronoi region surrounding the lattice point representing $\sum_ka_{k}\tilde{\ve t}_k$. If the fine lattice $\Lambda$ is good for AWGN channel, as it is shown in \cite{ErezZamir_2004}, the probability $\prob{\tilde {\ve z}\notin \mathcal V}$ goes to zero exponentially if 
\begin{IEEEeqnarray}{rCl}
\frac{\vol(\mathcal V)^{2/n}}{N(\alpha)}> 2\pi e
\label{eq:volume_to_noise}
\end{IEEEeqnarray}
where 
\begin{align}
N(\alpha):=\mathbb E\norm{\tilde {\ve z}}^2/n=\norm{\alpha\ve h-\tilde{\ve a}}^2P+\alpha^2
\end{align}
denotes the average power per dimension of the equivalent noise. Recall that the shaping lattice $\Lambda_k^s$ is good for quantization hence we have 
\begin{IEEEeqnarray}{rCl}
\vol(\mathcal V_k^s)=\left(\frac{\beta_k^2P}{G(\Lambda_k^s)}\right)^{n/2}
\end{IEEEeqnarray}
with $G(\Lambda_k^s)2\pi e<(1+\delta)$ for any $\delta>0$ if $n$ is large enough \cite{ErezZamir_2004}. Together with the message rate expression in (\ref{eq:message_rate_lattice}) we can see that lattice decoding is successful if $\beta_k^2P2^{-2R_k}/G(\Lambda_k^s)>2\pi e N$ for every $k$, or equivalently
\begin{IEEEeqnarray*}{rCl}
r_k<\frac{1}{2}\log\left(\frac{P}{N(\alpha)}\right)+\frac{1}{2}\log\beta_k^2-\frac{1}{2}\log(1+\delta)
\end{IEEEeqnarray*}
By choosing $\delta$ arbitrarily small and optimizing over $\alpha$ we conclude that the lattice decoding of $\sum_ka_k\tilde{\ve t}_k$ will be successful if
\begin{align}
r_k<\max_{\alpha}\frac{1}{2}\log\left(\frac{P}{N(\alpha)}\right)+\frac{1}{2}\log\beta_k^2=R_k^{\ve a}
\end{align}
with $R_k^{\ve a}$ given in (\ref{eq:compute_rate}). Lastly the modulo sum is obtained by
\begin{align*}
&\left[\sum_ka_k\tilde t_k\right]\mod\Lambda_f^s\\
&=\left[\sum_ka_k\ve t_k-\sum_ka_kQ_{\Lambda_k^s}(\ve t_k+\beta_k\ve d_k)\right]\mod\Lambda_f^s\\
&=\left[\sum_ka_kt_k\right]\mod\Lambda_f^s
\end{align*}
where the last equality holds because $\Lambda_f^s$ is the finest lattice among $\Lambda_k^s, k=1,\ldots, K$.

\section{Derivations in the proof of Theorem \ref{thm:clean_MAC}}\label{sec:appen_proof_clean}
Here we prove the claim in Theorem \ref{thm:clean_MAC} that $\beta_2^{(1)},\beta_2^{(2)}\in[\beta_2',\beta_2'']$ if and only if the Condition (\ref{eq:AchieCap_whole}) holds. Recall we have defined $\beta_2^{(1)}:=\frac{h_1h_2P}{1+h_1^2P}$, $\beta_2^{(2)}:=\frac{1+h_2^2P}{h_1h_2P}$ and $\beta_2',\beta_2''$ in Eqn. (\ref{eq:roots}).

With the choice $\ve a=(1,1)$ we can rewrite (\ref{eq:roots}) as
\begin{align}
\beta_2':=\frac{2h_1h_2P+S-\sqrt{SD}}{2(1+h_1^2P)}\\
\beta_2'':=\frac{2h_1h_2P+S+\sqrt{SD}}{2(1+h_1^2P)}
\end{align}
with $S:=\sqrt{1+h_1^2P+h_2^2P}$ and $D:=4Ph_1h_2-3S$. Clearly the inequality $\beta_2'\leq \beta_2^{(1)}$ holds if and only if $S-\sqrt{SD}\leq 0$ or equivalently
\begin{align}
\frac{Ph_1h_2}{\sqrt{1+h_1^2P+h_2^2P}}\geq 1
\end{align}
which is just  Condition (\ref{eq:AchieCap_whole}). Furthermore notice that $\beta_2^{(1)}<\frac{h_2}{h_1}P<\beta_2^{(2)}$ hence it remains to prove that $\beta_2^{(2)}\leq \beta_2''$ if and only if  (\ref{eq:AchieCap_whole}) holds. But this follows immediately by noticing that $\beta_2^{(2)}\leq \beta_2''$ can be rewritten as
\begin{align}
2S^2\leq h_1h_2P(S+\sqrt{SD})
\end{align}
which is satisfied if and only if $S\leq D$, or equivalently  Condition (\ref{eq:AchieCap_whole}) holds.

\section{The derivations in the proof of Theorem \ref{thm:two_sums}}\label{sec:appen_dirty}
In this section we give the derivation of the expressions of $\tilde{\ve y}_1$ in (\ref{eq:y1_dirty}) and $\tilde{\ve y}_2$ in  (\ref{eq:y2_dirty}). To obtain $\tilde{\ve y}_1$, we process the channel output $\ve y$ as
\begin{IEEEeqnarray*}{rCl}
\tilde {\ve y}_1&:=&\alpha_1\ve y-\sum_ka_{k}\beta_k\ve d_k\nonumber\\
&=&\underbrace{\sum_k(\alpha_1 -a_{k}\beta_k)\ve x_k+\alpha_1\ve z}_{\tilde{\ve z}_1}+\alpha_1 \sum_k\ve s_k\\
&&+\sum_k a_{k}\beta_k\ve x_k-\sum_ka_{k}\beta_k\ve d_k\\
&=&\tilde{\ve  z}_1+\alpha_1\sum_k\ve s_k+\sum_k a_{k}\beta_k (\ve t_k/\beta_k+\ve d_k-\gamma_k\ve s_k/\beta_k)\\
&&-\sum_k a_k\beta_k Q_{\Lambda_k^s/\beta_k}(\ve t_k/\beta_k+\ve d_k-\gamma_k\ve s_k/\beta_k)-\sum_ka_{k}\beta_k\ve d_k\\
&=&\tilde {\ve z}_1+\sum_k a_{k} (\ve t_k-Q_{\Lambda_k^s}(\ve t_k+\beta_k\ve d_k-\alpha_1 \ve s_k))\\
&&+\sum_k(\alpha_1-a_k\gamma_k)\ve s_k\\
&=&\tilde {\ve z}_1+\sum_k a_{k} \tilde{\ve t}_k +\sum_k(\alpha_1-a_k\gamma_k)\ve s_k
\end{IEEEeqnarray*}

When the sum $\sum_k a_{k} \tilde{\ve t}_k$ is  decoded,  the term $\tilde{\ve z}_1 +\sum_k(\alpha_1-a_k\gamma_k)\ve s_k$ which can be calculated using $\tilde{\ve y}_1$ and $\sum_ka_k\tilde{\ve t}_k$.  For decoding the second sum we form the following with some numbers $\alpha_2'$ and $\lambda$:
\begin{IEEEeqnarray*}{rCl}
\tilde{\ve y}_2&:=&\alpha_2'\ve y+\lambda\left(\tilde{\ve z}_1 +\sum_k(\alpha_1-a_k\gamma_k)\ve s_k\right)-\sum_kb_k\beta_k\ve d_k\\
&=&\alpha_2'(h_1\ve x_1+h_2\ve x_2+\ve s_1+\ve s_2+\ve z)\\
&&+\sum_k(\lambda\alpha_1h_k-\lambda a_k\beta_k)\ve x_k\\
&&+\lambda\alpha_1\ve z+\lambda\sum_k(\alpha_1-a_k\gamma_k)\ve s_k\\
&=&\sum_k(\alpha_2' +\lambda \alpha_1-\lambda a_k\beta_k)\ve x_k+(\alpha_2'+\lambda\alpha_1)\ve z\\
&&+\sum_k(\alpha_2'+\lambda\alpha_1- \lambda a_k\gamma_k)\ve s_k-\sum_kb_k\beta_k\ve d_k\\
&: =&\sum_k(\alpha_2-\lambda a_k\beta_k)\ve x_k+\alpha_2\ve z\\
&&+\sum_k(\alpha_2- \lambda a_k\gamma_k)\ve s_k-b_k\beta_k\ve d_k
\end{IEEEeqnarray*}
by defining $\alpha_2:=\alpha_2'+\lambda\alpha_1$. In the same way as deriving  $\tilde{\ve y}_1$, we can show
\begin{IEEEeqnarray*}{rCl}
\tilde{\ve y}_2&=&\underbrace{\sum_k(\alpha_2 -\lambda a_k\beta_k-b_k\beta_k)\ve x_k+\alpha_2\ve z}_{\tilde{\ve z}_2}\\
&&+\sum_k(\alpha_2-\lambda a_k\gamma_k)\ve s_k+\sum_k b_k\beta_k\ve x_k-\sum_kb_k\beta_k\ve d_k\\
&=&\tilde{\ve z}_2+\sum_k(\alpha_2-a_k\gamma_k)\ve s_k+ \\
&&\sum_k b_{k}\bigg(\beta_k (\ve t_k/\beta_k+\ve d_k-\gamma_k\ve s_k/\beta_k)\\
&&-\beta_k Q_{\Lambda_k^s/\beta_k}(\ve t_k/\beta_k+\ve d_k-\gamma_k\ve s_k/\beta_k)\bigg)-\sum_kb_k\beta_k\ve d_k\\
&=&\tilde{\ve z}_2+\sum_k(\alpha_2-\lambda a_k\gamma_k-b_k\gamma_k)\ve s_k+\sum_kb_k\tilde{\ve t}_k
\end{IEEEeqnarray*}
by defining $\alpha_2:=\alpha_2'+\lambda\alpha_1$ and $\tilde{\ve z}_2:=\sum_k(\alpha_2 -\lambda a_k\beta_k-b_k\beta_k)\ve x_k+\alpha_2\ve z$.

\section*{Acknowledgment}
The authors wish to thank Sung Hoon Lim,  Bobak Nazer,  Chien-Yi Wang and anonymous reviewers for helpful comments.

\bibliographystyle{IEEEtran}
\bibliography{IEEEabrv,MAC_CF_full}

\begin{IEEEbiographynophoto}{Jingge Zhu}
received the B.S. degree and M.S. degree in electrical engineering from Shanghai Jiao Tong University, Shanghai, China, in 2008 and 2011, respectively,  the Dipl.-Ing. degree in technische Informatik from Technische Universit{\"a}t Berlin, Berlin, Germany in 2011 and the
Doctorat \`es Science degree from the Ecole Polytechnique F\'ed\'erale (EPFL), Lausanne, Switzerland, in 2016. His research interests include  information theory with applications in communication systems.

Mr. Zhu is the recipient of the IEEE Heinrich Hertz Award for Best
Communications Letters in 2013. He also received the Early Postdoc.Mobility Fellowship from Swiss National Science Foundation in 2015.
\end{IEEEbiographynophoto}

\begin{IEEEbiographynophoto}{Michael Gastpar}
received the Dipl. El.-Ing. degree from the Eidgen\"ossische Technische Hochschule (ETH), Z\"urich, Switzerland, in 1997, the M.S. degree in electrical engineering from the University of Illinois at Urbana-Champaign, Urbana, IL, USA, in 1999, and the
Doctorat \`es Science degree from the Ecole Polytechnique F\'ed\'erale (EPFL), Lausanne, Switzerland, in 2002. He was also a student in engineering and philosophy at the Universities of Edinburgh and Lausanne.

During the years 2003-2011, he was an Assistant and tenured Associate Professor in the Department of Electrical Engineering and Computer Sciences at the University of California, Berkeley. Since 2011, he has been a Professor in the School of Computer and Communication
Sciences, Ecole Polytechnique F\'ed\'erale (EPFL), Lausanne, Switzerland.
He was also a professor at Delft University of Technology, The Netherlands, and
a researcher with the Mathematics of Communications Department,
Bell Labs, Lucent Technologies, Murray Hill, NJ.
His research interests are
in network information theory and related coding and signal processing techniques,
with applications to sensor networks and neuroscience.

Dr. Gastpar received the IEEE Communications Society and Information Theory Society Joint Paper Award in 2013 and the EPFL Best Thesis Award in 2002. He was an Information Theory Society Distinguished Lecturer (2009-2011), an Associate Editor for Shannon Theory for the IEEE TRANSACTIONS ON INFORMATION THEORY (2008-2011), and he has served as Technical Program Committee Co-Chair for the 2010 International Symposium on Information Theory, Austin, TX.
\end{IEEEbiographynophoto}

\end{document}